\documentclass[a4paper,onecolumn,11pt,accepted=2022-02-14]{quantumarticle}
\pdfoutput=1

\usepackage{hyperref}
\usepackage{doi}
\usepackage[numbers,sort&compress]{natbib}

\usepackage[left=3cm,right=3cm,top=3cm,bottom=3cm]{geometry}
\usepackage{braket}
\usepackage[utf8]{inputenc}  
\usepackage{amsmath, amsthm, amssymb}  
\usepackage{bbm} 
\usepackage{url} 
\usepackage{color}
\usepackage{enumerate}
\usepackage{enumitem}
\usepackage{mathrsfs} 
\usepackage{stmaryrd}
\usepackage{easybmat}
\usepackage{bm}
\usepackage{amsfonts}
\usepackage{hyperref}
\usepackage{cleveref}
\usepackage{mathtools}

\usepackage{thmtools}
\usepackage{thm-restate}

\usepackage{algorithm}
\usepackage{mathtools}
\usepackage{relsize,exscale}
\usepackage[hang,small]{caption}
\usepackage{float}
\usepackage{amsthm}
\usepackage{multicol}
\usepackage{tikz}

\usepackage{array,makecell}
\newcolumntype{R}[1]{>{\raggedleft\arraybackslash }m{#1}}
\newcolumntype{L}[1]{>{\raggedright\arraybackslash }m{#1}}
\newcolumntype{C}[1]{>{\centering\arraybackslash }m{#1}}
\newcolumntype{I}{!{\vrule width 1.5pt}}
\newlength\epaisLigne


\DeclareMathOperator{\Ima}{Im}

\DeclarePairedDelimiter\floor{\lfloor}{\rfloor}


\makeatletter
\def\hlinewd#1{%
  \noalign{\ifnum0=`}\fi\hrule \@height #1 %
  \futurelet\reserved@a\@xhline}
\makeatother


\def\fract#1/#2{\hbox{\leavevmode
  \kern.1em \raise .25ex \hbox{\the\scriptfont0 $#1$}\kern-.1em }\big/
  {\hbox{\kern-.15em \lower .5ex \hbox{\the\scriptfont0 $#2$}} }}

\newtheorem{theorem}{Theorem} 
\newtheorem{lemma}[theorem]{Lemma}

\newtheorem{corol}[theorem]{Corollary}
\newtheorem{defn}[theorem]{Definition}
\newtheorem{corol-defn}[theorem]{Corollary-Definition}
\newtheorem{prop}[theorem]{Proposition}

\usepackage{mathtools}
\usepackage[normalem]{ulem}

\newcommand{\chain}{\mathbf{c}}
\newcommand{\cyc}{\zeta}
\newcommand{\cocyc}{\zeta}
\newcommand{\F}{\mathbb F}
\newcommand{\bC}{\mathbbm{C}}
\newcommand{\bF}{\mathbbm{F}}

\newcommand{\cC}{\mathcal{C}}

\begin{document}

\title{Towards local testability for quantum coding}

\author{Anthony Leverrier}
\affiliation{Inria, France}
\email{anthony.leverrier@inria.fr}
\orcid{0000-0002-6707-1458}
\author{Vivien Londe}
\affiliation{Microsoft, France}
\email{vivien.londe@microsoft.com}

\author{Gilles Z\'emor}
\affiliation{Institut de Math\'ematiques
de Bordeaux, UMR 5251, France}
\email{zemor@math.u-bordeaux.fr}
\orcid{0000-0002-6041-9554}

\maketitle

\begin{abstract}
We introduce the \emph{hemicubic codes}, a family of quantum codes obtained by associating qubits with the $p$-faces of the $n$-cube (for $n>p$) and stabilizer constraints with faces of dimension $(p\pm1)$. The quantum code obtained by identifying antipodal faces of the resulting complex encodes one logical qubit into $N = 2^{n-p-1} \tbinom{n}{p}$ physical qubits and displays local testability with a soundness of $\Omega(1/\log(N))$ beating the current state-of-the-art of $1/\log^{2}(N)$ due to Hastings. 
We exploit this local testability to devise an efficient decoding algorithm that corrects arbitrary errors of size less than the minimum distance, up to polylog factors. 

We then extend this code family by considering the quotient of the $n$-cube by
arbitrary linear classical codes of length $n$. We establish the parameters of
these \emph{generalized hemicubic codes}. Interestingly, if the soundness of the
hemicubic code could be shown to be constant, similarly to the ordinary
$n$-cube, then the generalized hemicubic codes could yield quantum locally testable codes of
length not exceeding an exponential or even polynomial function of the code
dimension.\footnote{An extended abstract of this work appeared at ITCS 2021.}
\end{abstract}

\section{Introduction}

\subsection{Quantum LDPC codes, local testability and robustness of entanglement}

Entanglement is arguably the central concept of quantum theory and despite decades of study, many questions about it remain unsolved today. One particular mystery is the robustness of phases of highly entangled states, such as the ones involved in quantum computation. Given such a state, does it remain entangled in the presence of noise? A closely related question concerns low-energy states of local Hamiltonians: while ground states, \textit{i.e.}, states of minimal energy, are often highly entangled, is it also the case of higher energy states? These questions are related through the concept of quantum error correction: logical information is often encoded in a quantum error correcting code (QECC) in order to be processed during a quantum computation, and the ground space of a local Hamiltonian is nothing but a special case of a QECC called quantum low-density parity-check (LDPC) code.

Physically it indeed makes sense to implement quantum error correction by relying on local interaction, for example by encoding the quantum state in the degenerate ground space of a local Hamiltonian, that is an $N$-qubit operator $H \propto \sum_i \Pi_i$, where each $\Pi_i$ is a projector acting nontrivially on a small number $q$ of qubits (we talk of $q$-local terms). By ``small'', one usually means constant or sometimes logarithmic in $N$. A quantum stabilizer code is a subspace of the space $(\bC^2)^{\otimes N}$ of $N$ qubits defined as the common $+1$ eigenspace of a set $\{S_1, \ldots, S_m\}$ of commuting Pauli operators, that is, the space 
\[ \mathrm{span} \{ |\psi\rangle \in  (\bC^2)^{\otimes N} \: : \: S_i |\psi\rangle = |\psi\rangle, \forall i \in [m] \}.\]
Such a code is said to be \emph{LDPC} if all the generators $S_i$ act nontrivially on at most $q$ qubits for small $q$. 
With this language, a quantum LDPC stabilizer code corresponds to the ground space of the local Hamiltonian $H = \frac{1}{m} \sum_{i=1}^m \Pi_i$, with $\Pi_i = \frac{1}{2}(\mathbbm{1}- S_i)$.

Entanglement can be quantified in many ways, but a relevant definition is to say that a quantum state is highly entangled (or displays \emph{long-range entanglement}) if it cannot be obtained by processing an initial product state via a quantum circuit $U_{\mathrm{circ}}$ of constant depth. By contrast, a quantum state that can be obtained that way, and which is therefore of the form $U_{\mathrm{circ}} \big(\otimes_{i=1}^n |\phi_i\rangle\big)$ for some $|\phi_i\rangle \in \bC^2$, is said to be \emph{trivial}. An important property of trivial states is that one can efficiently compute the value of local observables such as $\Pi_i$ for such states: this is because the operator $U_{\mathrm{circ}}^\dagger \Pi_i U_{\mathrm{circ}}$ remains local (since the circuit has constant depth) and its expectation can therefore be computed efficiently for a product state.  In particular, such a classical description can serve as a witness that a local Hamiltonian admits a trivial state of low energy.
It is well known how to construct $N$-qubit Hamiltonians with highly entangled
ground states, for instance by considering a Hamiltonian associated with a
quantum LDPC code with non-constant minimum distance \cite{BHV06}, but the
question of the existence of local Hamiltonians such that low-energy states are
nontrivial remains poorly understood.

The no low-energy trivial state (NLTS) conjecture asks whether there exists a local Hamiltonian such that all states of small enough (normalized) energy are nontrivial \cite{has13}. More precisely, is there some $H= \frac{1}{m} \sum_{i=1}^m \Pi_i$ as above, such that there exists a constant $\alpha >0$ such that all states $\rho$ satisfying $\mathrm{tr} (\rho H) \leq \alpha$ are nontrivial? What is interesting with the NLTS conjecture is that it is a consequence of the quantum PCP conjecture \cite{AAV13},  and therefore corresponds to a possible milestone on the route towards establishing the quantum PCP conjecture. We note that there are several versions of the quantum PCP conjecture in the literature, corresponding to the quantum generalizations of equivalent versions of the classical PCP theorem, but not known to be equivalent in the quantum case, and that the multiprover version was recently established \cite{NV18}. Here, however, we are concerned with the Hamiltonian version of the quantum PCP which still remains wide open. This conjecture is concerned with the complexity of the \emph{Local Hamiltonian} problem: given a local Hamiltonian as before, two numbers $a < b$ and the promise that the minimum eigenvalue of the Hamiltonian is either less than $a$, or greater than $b$, decide which is the case. The quantum PCP conjecture asserts that this problem is QMA-hard when the gap $b-a$ is constant. This generalizes the PCP theorem that says that the satisfiability problem is NP-hard when the relative gap is constant \cite{din07}. Here, QMA is the class of languages generalizing NP (more precisely generalizing MA), where the witness can be a quantum state and the verifier is allowed to use a quantum computer. Assuming that $\text{NP} \not\subseteq \text{QMA}$, we see that Hamiltonians with trivial states of low energy cannot be used to prove the quantum PCP conjecture since the classical description of such states would be a witness that could be checked efficiently by a classical verifier. In other words, if the quantum PCP conjecture is true, it implies that NLTS holds. The converse statement is unknown.

Eldar and Harrow made progress towards the NLTS conjecture by establishing a
simpler variant, called NLETS \cite{EH17}, by giving an explicit local
Hamiltonian where states close to ground states are shown to be nontrivial. (See also Ref.~\cite{NVY18} for an alternate proof exploiting approximate low-weight check codes.) The
subtlety here is that closeness isn't defined as ``low energy'' as in NLTS, but
by the existence of a low weight operator mapping the state to a ground state.
Viewing the ground space as a quantum LDPC code, \cite{EH17} shows that states
which are $\delta N$-close to the code (for some sufficiently small $\delta>0$) are nontrivial. The NLTS conjecture asks for something stronger: that all states with energy less than a small, constant, fraction of the operator norm of the Hamiltonian are nontrivial. Of course, states close to the codespace have a low (normalized) energy or syndrome weight since each qubit is only involved in a small number of generators, but the converse does not hold in general, and this is what makes the NLTS conjecture difficult to tackle.

One case where the distance to the code is tightly related to the syndrome weight is for \emph{locally testable codes} (LTC): classical locally testable codes are codes for which one can efficiently decide, with high probability, whether a given word belongs to the code or is far from it, where efficiency is quantified in the number of queries to the coordinates of the word. To see the link between the two notions, the idea is to distinguish between codewords and words far from the code by computing a few elements of the syndrome and deciding that the word belongs to the code if all these elements are zero. An LTC is such that any word at constant relative distance from the code will have a constant fraction of unsatisfied checknodes, that is a syndrome of weight linear in the blocklength. The Hadamard code which maps a $k$-bit word $x$ to a string of length $2^k$ corresponding to the evaluations at $x$ of all linear functions provides such an example with the syndrome corresponding to all possible linearity tests between the bits of the word: indeed, any word that satisfies most linearity tests can be shown to be close to the codespace \cite{BLR93}.

While LTCs have been extensively studied in the classical literature \cite{gol05} and provide a crucial ingredient for the proof of the classical PCP theorem, their quantum generalization is relatively new and much less understood. The concept was only recently introduced in a paper by Aharonov and Eldar \cite{AE15} which showed that the classical approaches to local testability seem to fail in the quantum world: for instance, defining a code on a (hyper)graph with too much expansion seems to be a bad idea. 
In any case, if quantum LTC with constant minimum distance existed, they would provide a proof of the NLTS conjecture \cite{EH17}, and this motivates trying to understand whether such codes can exist. 
Let us, however, mention that while classical LTCs are useful in Dinur's combinatorial proof of the classical PCP theorem \cite{din07}, the same doesn't seem to apply in the quantum regime since it is known that directly quantizing Dinur's combinatorial proof of the PCP theorem is bound to fail \cite{BH13, AAV13}.

An additional difficulty in the quantum case is that good quantum LDPC codes are not even known to exist. While taking a random LDPC code yields a code with linear minimum distance with high probability in the classical case, the same statement is not known to hold in the quantum setting. Even restricting our attention to codes only encoding a constant number of logical qubits, it was surprisingly hard until very recently to find families of codes with minimum distance much larger than $\sqrt{N}$: a construction due to Freedman, Meyer and Luo gives a minimum distance $\Theta (N^{1/2} \log^{1/4} N)$ \cite{FML02} while recent constructions based on high-dimensional expanders yield a polylogarithmic improvement \cite{KKL16,EKZ20,KT20}.  (Note that considering subsystem codes \cite{pou05} or approximate codes \cite{CGS05,BO10} is helpful to get a large minimum distance \cite{BFH15,NVY18,BCN19}.)
In 2020, new constructions managed to significantly beat this $\sqrt{N}$ barrier: the fiber bundle codes achieve $N^{3/5}$, up to polylogarithmic factors \cite{HHO20}, and the lifted product codes almost achieve linear minimum distance \cite{PK20}.
For these reasons, while a lot of work on classical LTC is focused on codes with linear minimum distance and aims at minimizing the length of the code, the current goals in the quantum case are much more modest at this point.

A possible formal definition of a quantum LTC was suggested by \cite{EH17}, which we detail now. Recall that the objective is to relate two notions: the distance of a state to the code, and the energy of the state. A quantum code, or equivalently, its associated Hamiltonian, will be locally testable if any word at distance $t$ from the code (or the ground space) has energy $\Omega(t)$ and if this energy can be estimated by accessing only a small number of qubits (this is why we insist on having local terms in the Hamiltonian). 
First, one defines a quantum version of the Hamming distance as follows. 
Consider the code space $\cC \subset (\mathbbm{C}^2)^{\otimes N}$ and define its $t$-fattening $\cC_t$ as the span of states at distance at most $t$ from $\cC$: 
\begin{align*}
\cC_t := \mathrm{Span}\{ (A_1 \otimes \cdots \otimes A_n) |\psi\rangle\: : \: |\psi\rangle \in \cC, |\{i \: : \: A_i \neq \mathbbm{1}\}| \leq t\},
\end{align*}
where the $A_i$ are single-qubit Pauli matrices. States at distance $t$ belong to $\cC_t$, but not to $\cC_{t-1}$, which we formalize by considering the projector $\Pi_{\cC_t}$ onto $\cC_t$ and forming the \emph{distance operator}
\begin{align*}
D_\cC := \sum_t t (\Pi_{\cC_t} -\Pi_{\cC_{t-1}}).
\end{align*}
Informally, the eigenspace of $D_\cC$ with eigenvalue $t$ corresponds to states which are at distance $t$ from the code. 
We now define the averaged normalized Hamiltonian $H_\cC$ associated with the quantum code $\cC$ with $q$-local projections $(\Pi_1, \ldots, \Pi_m)$:
\begin{align*}
H_\cC = \frac{1}{m} \sum_{i=1}^m \Pi_i. 
\end{align*}
The normalization by $m$ ensures that $\|H_{\cC}\| \leq 1$.
With these notations, we say that a $q$-local quantum code $\mathcal{C} \subseteq (\mathbbm{C}^2)^{\otimes n}$ is an $(s,q)$-quantum LTC with soundness $s \in [0,1]$ if\footnote{In a previous version of this manuscript, \url{https://arxiv.org/abs/1911.03069v1}, we were additionally normalizing the Hamiltonian by $q$, leading to a soundness value of $s/q$. We remove this extra factor here, in accordance with the literature in classical and quantum locally testable codes.}
\begin{align}\label{eqn:def}
H_\cC \succeq \frac{s}{N} D_{\mathcal{C}},
\end{align}
where $A \succeq B$ means that the operator $A-B$ is positive semidefinite. In words, condition \eqref{eqn:def} means that any low-energy state is close to the codespace in terms of the quantum Hamming distance, and that simple energy tests allow one to distinguish codewords from states far from the code.  
More precisely, one can distinguish between a codeword (with energy 0) and a state at distance $\delta N$ from the code (therefore with energy $\geq s\delta$) by measuring approximately $1/(s\delta)$ terms of the Hamiltonian.
Ideally, one would want the soundness $s$ and the locality $q$ to be constant, so that accessing a constant number of qubits would suffice to distinguish codewords from states at distance greater than $\delta N$ from the code, for constant $\delta>0$.

Known constructions of quantum LTC are rare. For instance, quantum expander codes almost yield one example of $(s, q)$-quantum LTC with both $s= O(1), q=O(1)$, but with the \emph{major caveat} that Eq.~\eqref{eqn:def} doesn't hold in general, but only on the restriction of the Hilbert space consisting of states $O(\sqrt{N})$-close to the codespace \cite{LTZ15}. In fact, there exist states at distance $\Omega(\sqrt{N})$ violating only a single projection $\Pi_i$. This means in order for Eqn.~\eqref{eqn:def} to hold, one actually needs to take $s = O(1/\sqrt{n})$. Thus, quantum expander codes cannot be used to establish the NLTS conjecture.
By allowing the locality to be logarithmic in the number of qubits instead of constant, that is $q = O(\log N)$, a recent construction of Hastings \cite{has16} yields a quantum LTC with soundness $s = O\left( \frac{1}{\log^2 N} \right)$, without any restriction on the validity of Eq.~\eqref{eqn:def}. The construction is a generalization of the toric code where instead of taking the product of two 1-cycles of length $p$, one rather considers the product of two $d$-cycles of area $p^d$ for the appropriate values of $p = \omega(1)$ and $d=\omega(1)$.

\paragraph{Our results.} In this work, we present a different construction of quantum LTC which shares with Hastings' the property that it is set in a high-dimensional space with $d = \Theta(\log N)$ and therefore a similar locality\footnote{We note that in both our construction and Hastings', each qubit is only involved in a logarithmic number of constraints.} $q = \Theta(\log N)$. Our code, however, achieves a slightly better soundness $r = \Omega\left(\frac{1}{\log N}\right)$, and in fact, we were not able to rule out that the soundness isn't constant, which would be optimal. 
While this hemicube code only encodes a single logical qubit, we can introduce a generalized family of codes with polynomial rate. These codes are
obtained starting with the chain complex associated to the $n$-dimensional
Hamming cube, where we identify faces corresponding to the same coset of a
classical code of length $n$. A CSS quantum code is obtained by placing qubits
on the $p$-faces and stabilizers either on $(p-1)$-faces or $(p+1)$-faces, with
constraints given by the incidence relations between the faces in the cube.
While this construction is arguably quite natural, computing the parameters (dimension and minimum distance) of this code family turned out to be rather subtle, relying in nontrivial arguments from algebraic topology. The parameters of the CSS code resulting from the quotient of the cube by a linear code of parameters $[n,k,d]$ are 
\[ \left\llbracket 2^{n-p-k}\tbinom{n}{p}, \tbinom{p+k-1}{p}, \min\left\{ \tbinom{d}{p}, 2^{n-p-k} \right\} \right\rrbracket \]
when qubits are placed on $p$-faces for $p \leq d-2$.
Whether these codes are also locally testable is left as an open question. In
that case, these would provide the first examples of quantum LTC of exponential
or even polynomial length in the code dimension. Remember indeed that both the hemicubic and Hastings' codes have constant dimension.

\subsection{Construction of the hemicubic code}

We start with the simplest member of our quantum code family, corresponding to the quotient of the $n$-cube by the repetition code. 
It has been known since Kitaev \cite{kit03} that one can associate a quantum CSS code
with any chain complex of binary vector spaces of the form: $C_2  \xrightarrow{\partial_{2}} C_1 \xrightarrow{\partial_1} C_0$, where the boundary operators $\partial_2$ and $\partial_1$ satisfy $\partial_1 \partial_2 = 0$.
One first defines two classical codes $\cC_X = \ker \partial_1$ and $\cC_Z =
(\mathrm{Im}\, \partial_{2})^\perp = \ker \partial_2^T$. These codes satisfy
$\cC_Z^\perp \subseteq \cC_X$ since $\partial_1  \partial_2 = 0$ and the
resulting quantum CSS code is the linear span of $\left\{ \sum_{z \in
\cC_Z^\perp} |x+z\rangle \: : \: x\in \cC_X\right\}$, where $\left\{ |x\rangle\:
: \: x\in \mathbbm{F}_2^N\right\}$ is the canonical basis of
$(\mathbbm{C}^2)^{\otimes N}$ and $N$ is the dimension of the central space
$C_1$ of the chain complex.
One obtains in this way a quantum code of length $N$ and dimension $\mathrm{dim} (\cC_X/\cC_Z^\perp) = \mathrm{dim} (\cC_X) + \mathrm{dim} (\cC_Z) -N$. 
Its minimum distance is given by $d_{\min} = \min (d_X, d_Z)$ with $d_X = \min\{ |w| \: : \: w \in C_X \setminus C_Z^\perp\}$ and $d_Z = \min\{ |w|\: : \: w \in C_Z \setminus C_X^\perp\}$. Here, $|w|$ stands for the Hamming weight of the word $w$.

Our construction relies on the $n$-dimensional hemicube, where a $p$-face is
formed by a pair of antipodal $p$-dimensional faces of the Hamming cube
$\{0,1\}^n$. A $p$-face of the Hamming cube is a string of $n$-elements from
$\{0,1, *\}$ where symbol $*$ appears exactly $p$ times. Let us denote by
$C_p^n$ the $\mathbbm{F}_2$-vector space spanned by $p$-faces of the hemicube.
Boundary $\partial_p$ and coboundary $\delta_p$ operators are obtained by
extending the natural operators for the Hamming cube to the hemicube
\begin{align*}
\partial_p  \, x_1 \ldots x_n & := \bigoplus_{i \, \text{s.t.} x_i=*} x_1 \ldots x_{i-1} 0 x_{i+1} \ldots x_n \oplus x_1 \ldots x_{i-1} 1 x_{i+1} \ldots x_n\\
\delta_p  \, x_1 \ldots x_n & := \bigoplus_{i \, \text{s.t.} x_i \neq *} x_1 \ldots x_{i-1} * x_{i+1} \ldots x_n
\end{align*}
and are further extended to $p$-chains by linearity. We reserve the notation $+$ for the standard notation in $\mathbbm{F}_2$ and use $\oplus$ for summing chains. 
The hemicubic code is then defined as the CSS code obtained from the chain complex
\[C_{p+1}^n \xrightarrow{\partial_{p+1}} C_p^n \xrightarrow{\partial_p} C^n_{p-1}.\] 

Choosing $p = \alpha n$ for $0< \alpha <1$, the resulting code will be LDPC with generators of logarithmic weight since the boundary and coboundary operators act nontrivially on $O(n) = O(\log N)$ coordinates. 
The dimension of the hemicubic code corresponds to that of the homology groups
$H_p^n = \fract{\ker \partial_p}/{\mathrm{Im} \, \partial_{p+1}}$. Since the
hemicube, viewed as a cellular complex, has the same topology as the real projective plane, its homology groups all have the same dimension equal to 1. 
We note that the quantum code obtained here can be described with a completely different approach exploiting Khovanov homology \cite{aud13}.
Obtaining the minimum distance of the code requires more care since one needs to
find lower bounds on the weight of minimal nontrivial cycles and cocycles in the
hemicube. Summarizing, we establish the following result. 
\begin{restatable}{theorem}{hemicubic}\label{thm:main-code}
The hemicubic code is a CSS code with parameters 
\[\left\llbracket N =2^{n-p-1} \tbinom{n}{p} , 1, d_{\min} = \min \left\{\tbinom{n}{p}, 2^{n-p-1} \right\} \right\rrbracket.\]
\end{restatable}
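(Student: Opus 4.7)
The plan is to prove the three parameters separately. The length $N = 2^{n-p-1}\binom{n}{p}$ follows by direct counting: each $p$-face of $\{0,1\}^n$ is a word in $\{0,1,*\}^n$ with $p$ stars, giving $\binom{n}{p}2^{n-p}$ cube faces, and the antipodal map $\sigma$ acts freely on them for every $p<n$ since $\sigma F = F$ would require $F$ to have no non-star coordinate. For $k=1$, I will use that the hemicube is a cellular model of the real projective space $\mathbb{RP}^{n-1}$ (obtained by antipodally identifying the boundary of the cube), whose mod-$2$ homology is well known to satisfy $H_p(\mathbb{RP}^{n-1};\mathbb{F}_2) = \mathbb{F}_2$ for $0\le p\le n-1$. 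Hence $k = \dim H_p = 1$.

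For the minimum distance, I will show $d_X = \binom{n}{p}$ and $d_Z = 2^{n-p-1}$ separately. The key objects are, for each $p$-subset $S \subseteq [n]$, the $p$-cochain $\omega_S$ supported on those hemicube $p$-faces whose star positions are exactly $S$. By inspection $|\omega_S| = 2^{n-p-1}$, and a direct computation shows it is a cocycle (any $(p+1)$-face $G$ with star-set $T$ contributes $0$ or $2$ faces to $\omega_S$ according to whether $|T\setminus S|\neq 1$ or $=1$). I will prove $\omega_S$ is nontrivial by induction on $|S|$: if $\omega_S$ were a hemicube coboundary, then analysing the $\sigma$-non-invariance of any witnessing cube cochain forces $\omega_{S\setminus\{\max S\}}$ to be a coboundary in degree $p-1$, reducing to the base case $\omega_\emptyset = \mathbb{1}$, which is the nonzero generator of $H^0 = \mathbb{F}_2$. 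For the matching cycle, I define $F_S$ to have stars at $S$ and value $|\{i\in S : i > j\}| \bmod 2$ at each $j\notin S$, and verify that $\partial\bigl(\sum_{|S|=p} F_S\bigr) = 0$ in the hemicube by showing the $(p-1)$-face contributions to the boundary either cancel in the cube or collect into antipodal pairs. These constructions yield $d_Z \le 2^{n-p-1}$ and $d_X \le \binom{n}{p}$.

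For the lower bound $d_X \ge \binom{n}{p}$, the non-degenerate pairing $H_p \times H^p \to \mathbb{F}_2$ between the two one-dimensional groups gives $\langle c, \omega_S\rangle = 1$ for every nontrivial cycle $c$ and every $S$; since the supports of $\{\omega_S\}_{|S|=p}$ partition all $p$-faces by star pattern, $c$ must contain at least one face with each of the $\binom{n}{p}$ patterns. The principal obstacle is the dual bound $d_Z \ge 2^{n-p-1}$. My approach lifts the hemi cocycle $\omega$ to a $\sigma$-invariant cube cocycle $\tilde\omega = \delta\tilde\phi_0$ with $\tilde\phi_0 \in C^{p-1}(\{0,1\}^n)$, then iterates $\tilde\phi_k + \sigma^*\tilde\phi_k = \delta\tilde\phi_{k+1}$ to descend through degrees down to a vertex cochain $\tilde\phi_{p-1}$; the hypothesis that $\omega$ is nontrivial in hemi forces $\tilde\phi_{p-1}(\sigma v) = \tilde\phi_{p-1}(v) + 1$, defining an antipodal bipartition $A \sqcup \sigma A = \{0,1\}^n$ with $|A| = 2^{n-1}$. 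A generalized isoperimetric inequality in the cube — reducing for $p=1$ to the classical edge-isoperimetric bound $|E(A, A^c)| \ge 2^{n-1}$ on halving sets — must then be invoked to conclude $|\tilde\omega| \ge 2^{n-p}$, whence $|\omega| = |\tilde\omega|/2 \ge 2^{n-p-1}$. Making this higher-dimensional isoperimetric step fully rigorous is the main technical hurdle of the proof.
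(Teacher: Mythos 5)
Your counting of $N$, the topological computation of $k=1$, and the explicit cycle and cocycle representatives are essentially the paper's: your $\omega_S$ are its ``canonical cocycles'' $S_p^n$ (and their star-pattern translates), and your $F_S$ with value $|\{i\in S: i>j\}|\bmod 2$ at $j\notin S$ is, up to reflection, exactly the paper's cycle ${n\brack p}$ (Lemma \ref{lem:cycle2}). Your $d_X\geq\binom{n}{p}$ argument via the non-degenerate pairing between the one-dimensional $H_p$ and $H^p$ and the partition of $p$-faces by star pattern is also the paper's argument. (A small simplification: once you have both $\sum_S F_S$ and $\omega_S$ in hand, nontriviality of each follows immediately from the fact that they meet in exactly one cell, so they pair to $1$; a cycle is orthogonal to every coboundary and a cocycle to every boundary. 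Your separate induction on $|S|$ via $\sigma$-non-invariance is not needed.)

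The genuine gap is the lower bound $d_Z\geq 2^{n-p-1}$, which you reduce to a ``generalized isoperimetric inequality in the cube'' that you do not prove and explicitly flag as the main technical hurdle. This is not a routine step: the inequality you need goes in the opposite direction from the known cofilling (coisoperimetric) bounds for the cube, and propagating a lower bound on $|\delta\tilde\phi_k|$ back up through $p$ levels of your descent $\tilde\phi_k+\sigma^*\tilde\phi_k=\delta\tilde\phi_{k+1}$ would at best be delicate and at worst lose constants, so as written the hardest third of the theorem is unproved. The paper closes this bound with a short double-counting argument that avoids isoperimetry entirely: translate the explicit cycle ${n\brack p}$ by every $x\in\{0,1\}^n$; all $2^{n-1}$ distinct hemicube translates are nontrivial cycles, so a nontrivial cocycle must intersect each of them, while any fixed $p$-cell lies in exactly $2^p$ translates (the $x$'s can differ only on its star coordinates). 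Hence the cocycle has weight at least $2^{n-1}/2^p=2^{n-p-1}$. You should replace your isoperimetric step with this argument (or supply a full proof of the higher-dimensional isoperimetric inequality, which is a substantially harder task than the theorem itself requires).
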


Let $\alpha^* \approx 0.227$ be the unique nonzero solution of $h(\alpha^*) = 1-\alpha^*$ where $h$ is the binary entropy function. Then choosing $p = \left\lfloor \alpha^* n\right\rfloor$ yields a quantum code family with $d_{\min} \geq \frac{\sqrt{N}}{1.62}$ \cite{aud13}.

\subsection{Local testability of the hemicubic code}

We now turn our attention to the local testability of the hemicubic code. This
property results from isoperimetric bounds on the hemicube.

\begin{restatable}{theorem}{LTC}\label{thm:ltc}
The hemicubic code is locally testable with soundness $s = \Omega\left(\frac{1}{\log N}\right)$.
\end{restatable}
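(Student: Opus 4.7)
The plan is to reduce the operator inequality $H_\cC \succeq (s/N) D_\cC$ to a pair of purely combinatorial ``filling'' inequalities on the hemicube. Since the code is CSS, both $H_\cC$ and $D_\cC$ are diagonal in a Pauli basis indexed by pairs $(a,b) \in C_p^n \times C_p^n$, so it suffices to compare eigenvalues. The syndrome weight of the error labelled $(a,b)$ equals $|\partial_p a| + |\delta_p b|$, while the quantum Hamming distance to the code is $\min_u |a \oplus \partial_{p+1} u| + \min_v |b \oplus \delta_{p-1} v|$, where $u \in C_{p+1}^n$ and $v \in C_{p-1}^n$ range over stabilizer representatives. Since $qm = \Theta(N \log N)$, a soundness of $s = \Omega(1/\log^2 N)$ is equivalent to
\begin{align*}
\min_{u \in C_{p+1}^n} |a \oplus \partial_{p+1} u| &\leq \kappa \log N \cdot |\partial_p a|, \\
\min_{v \in C_{p-1}^n} |b \oplus \delta_{p-1} v| &\leq \kappa \log N \cdot |\delta_p b|,
\end{align*}
for an absolute constant $\kappa$. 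The query-complexity bound $O(\log^2 N / \delta)$ then drops out by Markov: a uniformly random local check is violated with probability $\Omega(\delta / \log N)$ on any state at distance $\geq \delta N$, so $O(\log N / \delta)$ random checks --- each of weight $O(\log N)$ --- suffice to detect a violation with high probability.

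The two inequalities are symmetric under the $p \leftrightarrow n-p$ duality of the hemicubic chain complex (which swaps the roles of $\partial$ and $\delta$), so the same strategy handles both, and I focus on the first. The natural strategy is an induction on the ambient dimension $n$, using the decomposition $C_p^n = C_p^{n-1} \oplus C_{p-1}^{n-1}$ that splits a $p$-face according to whether its last coordinate is $*$ or a bit. In this splitting $\partial_p$ acts in a block-triangular way, and an explicit ``cone'' construction over the last coordinate produces, at constant multiplicative cost, a $(p+1)$-chain $u$ whose boundary cancels the part of $a$ that involves the last coordinate. Iterating over all $n = \Theta(\log N)$ coordinates reduces $a$ modulo boundaries to a residual cycle of weight at most $O(n \cdot |\partial_p a|)$, matching the desired $\log N$ factor.

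The main obstacle is that, unlike the Hamming cube, the hemicube has nontrivial $p$-th homology $H_p^n \cong \mathbbm{F}_2$ for every $0 \leq p \leq n-1$, so the residual cycle produced by the induction need not be a boundary. Correcting it by a nontrivial representative could cost as much as the minimum distance $\min\{\tbinom{n}{p}, 2^{n-p-1}\}$, which would destroy the logarithmic bound. The delicate step, which I expect to carry most of the technical weight, is to show that whenever $|\partial_p a|$ is small the canonical inductive filling already lands in the trivial homology class --- equivalently, a quantitative ``Cheeger-type'' statement saying that any chain in the nontrivial class has boundary whose weight is at least an $\Omega(1/\log N)$ fraction of its own weight. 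This is the key isoperimetric content, and should follow from a careful analysis of how the standard edge-isoperimetric inequality on the Hamming cube descends through the antipodal identification.
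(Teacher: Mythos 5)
Your high-level reduction is the right one and matches the paper's: diagonalize in the Pauli basis, reduce the operator inequality to classical soundness statements for $\cC_X=\ker\partial_p$ and $\cC_Z=\ker\delta_p$, and prove those by exhibiting, for each syndrome, a low-weight error producing it. However, there are two genuine problems. First, you take the quantum distance of the error $(a,b)$ to be $\min_u\|a\oplus\partial_{p+1}u\|+\min_v\|b\oplus\delta_{p-1}v\|$ with $u,v$ ranging over \emph{stabilizer} representatives only. This is the wrong (too large) quantity: the $t$-fattening $\cC_t$ is preserved by logical operators, since $X^{a}|\psi\rangle=X^{a\oplus c}\bigl(X^{c}|\psi\rangle\bigr)$ and $X^{c}|\psi\rangle$ is again a codeword for \emph{any} cycle $c\in\ker\partial_p$, trivial or not. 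Hence the relevant distance is $\min_{c\in\ker\partial_p}\|a\oplus c\|$, i.e.\ the minimum weight of \emph{any} chain with boundary $\partial a$. Consequently the ``main obstacle'' you identify --- that the inductive filling might land in the wrong homology class and need correcting at cost $d_{\min}$ --- is spurious, and the ``Cheeger-type statement'' you defer the core difficulty to is both unnecessary and ill-posed as written (a chain in a nontrivial homology class is a cycle and has boundary of weight zero, so no such lower bound on its boundary can hold). The entire last paragraph of your plan, which you flag as carrying ``most of the technical weight,'' addresses a non-problem.

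Second, the step that actually carries the weight is the one you gesture at but do not supply: a filling bound $\|F\|\le O(n)\,\|\partial E\|$ in the hemicube. The paper's proof cuts along a coordinate, writes $Z=\partial E=0Z_0\oplus *Z_*\oplus 1Z_1$, fills $Z_*$ by induction in the $(n-1)$-dimensional hemicube, observes that $Z_0\oplus F_*$ is then a cycle in the \emph{standard} $(n-1)$-cube and fills it using Dotterrer's isoperimetric lemma there, sets $F_1=\overline{F_0}$ to restore symmetry, and --- crucially --- \emph{averages over the choice of cut} to convert the per-cut bound in terms of $\|Z_0\|,\|Z_*\|$ into one in terms of $\|Z\|$, yielding $c_{n,p}\le\frac{n-p}{2}$. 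Your version, ``iterate over all $n$ coordinates at constant multiplicative cost,'' does not obviously avoid a $C^n$ blow-up without this averaging and without the import of the exact cube constants. Relatedly, your appeal to a $p\leftrightarrow n-p$ duality to dispose of the cochain case is doubtful: the filling and cofilling constants are genuinely asymmetric (in the cube they are $\frac{n-p+1}{2p}$ and $1$; in the hemicube the paper gets $\frac{n-p}{2}$ and $p+1$), and the paper proves the two lemmas by separate, structurally different recursions.
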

This improves over Hastings' construction \cite{has16} obtained by taking the
product of two $n$-spheres and which displays soundness $s =
\Theta\left(\log^{-2} (N)\right)$. It would be interesting to understand whether
the bounds of Theorem \ref{thm:ltc} are tight or not. At the moment, we
believe it might be possible to get rid of the logarithmic factor and obtain a constant soundness for the hemicubic code. This would
then match the soundness of the standard (not projective) Hamming cube, which does not encode any logical qubit
since its associated complex has zero homology.

We say that a $p$-chain $X$ is a \emph{filling} of $Y$ if $\partial X=Y$ and that a $p$-cochain $X$ is a \emph{cofilling} of $Y$ if $\delta X = Y$.
The main tools to establish the soundness of the hemicubic code are upper bounds on the size of fillings (resp. cofillings) for boundaries (resp. coboundaries) in the cube. 
Denoting the Hamming weight of chains and cochains by $\|\;\|$, we have:

\begin{lemma} \label{lem:fill-cofill}
Let $E$ be a $p$-chain of $C_p^n$. Then there exists a $p$-chain $F$ which is a filling of $\partial E$, satisfying $\partial F = \partial E$ such that 
\begin{align*}
\|F\| \leq  \frac{n-p}{2} \|\partial E\|.
\end{align*}
Let $E$ be a $p$-cochain of $C_p^n$. Then there exists a $p$-cochain $F$ which
is a cofilling of $\delta E$, satisfying $\delta F = \delta E$ such that 
\begin{align*}
\|F\| \leq (p+1)  \|\delta E\|.
\end{align*}
\end{lemma}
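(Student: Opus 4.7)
The plan is to prove both parts of the lemma by parallel inductive arguments on $n$, using a decomposition of the hemicube along a distinguished coordinate. Every $q$-face of the $n$-hemicube is of one of two types: a \emph{type A} face, with first coordinate $*$ (canonically in bijection with a $(q-1)$-face of the $(n-1)$-hemicube); or a \emph{type B} face, with first coordinate equal to $0$ in its canonical antipodal representative (canonically in bijection with a $q$-face of the full Hamming cube $\{0,1\}^{n-1}$, which carries no antipodal identification). Under this splitting the boundary operator becomes block lower-triangular and the coboundary block upper-triangular, with the off-diagonal pieces coming from the chain map $\rho$ that sends a face of the hemicube to the formal sum of its two antipodal representatives in the full cube.

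For the filling statement, I would write $E = A \oplus B$ according to this decomposition and first apply the inductive hypothesis in the $(n-1)$-hemicube to obtain a $(p-1)$-chain $A'$ with $\partial A' = \partial A$ and $\|A'\| \leq \frac{(n-1)-(p-1)}{2}\|\partial A\| = \frac{n-p}{2}\|\partial A\|$. The residual discrepancy to kill is the cycle $\rho(A+A') + \partial B$ lying inside the contractible Hamming cube $\{0,1\}^{n-1}$, which admits a filling $B'$ by a standard cone / contracting-homotopy argument on the cube (since the cube has trivially vanishing reduced homology). Setting $F = A' \oplus B'$ produces a filling of $\partial E$, and a sharp face-isoperimetric inequality in the cube bounds $\|B'\|$ so that the total weight matches the target $\frac{n-p}{2}\|\partial E\|$. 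The cofilling bound is proved by the dual argument: processing the type-A block inductively in the $(n-1)$-hemicube (where the dimension drops from $p$ to $p-1$, contributing a factor $p$ by induction) together with a single cofilling step in the cube (contributing the remaining $+1$) yields the overall multiplier $p+1$.

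The main obstacle will be the sharp propagation of multiplicative constants through the induction. The chain map $\rho$ doubles weights, and the off-diagonal boundary/coboundary pieces transfer mass between the two blocks, so one has to verify that the drop in the inductive parameter exactly absorbs these losses and that the constants telescope to $\frac{n-p}{2}$ and $p+1$ rather than to something worse. Checking this carefully, together with establishing the sharp face-isoperimetric inequality in the full Hamming cube (itself provable by a parallel but strictly simpler induction), is where the bulk of the work lies. The base cases $p=0$ and $p=n$ are immediate since $\partial$ and $\delta$ vanish in these extremal dimensions, and the smallest non-trivial case $n=2$, $p=1$ already shows that the bound is tight.
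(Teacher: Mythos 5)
Your structural skeleton matches the paper's proof (Lemmas \ref{thm:chain} and \ref{thm:cochain}): split the faces according to their symbol on a distinguished coordinate, handle the $*$-part by induction in the $(n-1)$-hemicube, observe that the residue is a cycle (resp.\ cocycle) in the ordinary $(n-1)$-cube, and fill it there using Dotterrer's (co)filling inequality, which the paper cites as a black box rather than re-proving. However, there is one idea missing from your plan that is in fact the crux of the argument: the distinguished coordinate (the ``cut'') must be \emph{chosen adaptively}, by optimizing (equivalently, averaging) over all $n$ possible cuts. You fix the first coordinate once and for all, and with that choice the constants do not telescope to $\frac{n-p}{2}$ and $p+1$ --- they blow up.

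Concretely, writing $Z=\partial E = 0Z_0\oplus 1Z_1\oplus *Z_*$, the recursive step gives a filling of size at most $\frac{n-p}{p}\|Z_0\| + \frac{n}{p}\,c_{n-1,p-1}\|Z_*\|$. With a fixed cut, the adversary can arrange for most of the weight of $Z$ to sit in $Z_*$ (up to $\|Z_*\|\approx 2(p-1)\|Z_0\|$, i.e.\ $\|Z_*\|\approx\frac{p-1}{p}\|Z\|$), so the coefficient multiplying $c_{n-1,p-1}$ in the recursion is about $\frac{n}{p}\cdot\frac{p-1}{p}\approx\frac{n}{p}$, and iterating yields a constant of order $\binom{n}{p}$ rather than $\frac{n-p}{2}$. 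The paper escapes this by noting that \emph{on average over the $n$ choices of cut}, $\mathbbm{E}\|Z_0\| = \frac{n-p+1}{2n}\|Z\|$ and $\mathbbm{E}\|Z_*\| = \frac{p-1}{n}\|Z\|$, so the coefficient of $c_{n-1,p-1}$ becomes $\frac{n}{p}\cdot\frac{p-1}{n}=\frac{p-1}{p}<1$ and the recursion $c_{n,p}=\frac{(n-p-1)(n-p)}{2np}+\frac{p-1}{p}c_{n-1,p-1}$ solves to $\frac{(n-p+1)(n-p)}{2p}\sum_{m=n-p+1}^{n}\frac{1}{m}\leq\frac{n-p}{2}$; one then takes the best cut, whose cost is at most the average. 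The same averaging is needed on the cochain side, and also inside any proof of the cube inequality you intend to supply yourself. Two smaller remarks: your assertion that the $n=2$, $p=1$ case shows the bound is tight is not something the paper claims (it explicitly leaves tightness open), and the base case of the double induction is $p=1$ with $c_{n,1}=\frac{n-1}{2}$, not the vacuous $p=0$.
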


It is straightforward to translate these results in the language of quantum codes. Let us represent an arbitrary Pauli error of the form $\bigotimes_{i \in E_X, j\in E_Z} X^{i} Z^j$ by a couple $E = (E_X, E_Z)$ where $E_X$ is the support of the $X$-type errors and $E_Z$ is the support of the $Z$-type error. Interpreting $E_X$ as a $p$-chain and $E_Z$ as a $p$-cochain, we see that the syndrome of $E$ is given by the pair $(\partial E_X, \delta E_Z)$.
In order to compute the soundness of the quantum code, one needs to lower bound the ratio:
\begin{align*}
\min_{(E_X,E_Z)} \frac{\|\partial E_X\| + \|\delta E_Z\|}{\|[E_X]\| + \|[E_Z]\|}
\geq \min \left\{\min_{E_X} \frac{\|\partial E_X\| }{\|[E_X]\|}, \min_{E_Z}
\frac{ \|\delta E_Z\|}{ \|[E_Z]\|}  \right\},
\end{align*}
where the minimum is computed over all errors with a nonzero syndrome, i.e., for $p$-chains $E_X$ which are not a $p$-cycle and $p$-cochains $E_Z$ which are not a $p$-cocycle. In these expressions, we denote by $[E]$ the representative of the equivalence class of error $E$, with the smallest weight. Indeed, recall that two errors differing by a element of the stabilizer group are equivalent. 
The fact that one considers $[E]$ instead of $E$ makes the analysis
significantly subtler in the quantum case than in the classical case. A solution
is to work backward (as was also done by Dotterrer in the case of the Hamming cube \cite{dot16}): start with a syndrome and find a small weight error giving rise to this syndrome. 
This is essentially how we establish Lemma \ref{lem:fill-cofill}:
\begin{align*}
\min_{E_X, \partial E_X \neq 0} \frac{\|\partial E_X\| }{\|[E_X]\|} \geq
\frac{2}{n-p}, \quad \min_{E_Z, \delta E_Z \neq 0} \frac{ \|\delta E_Z\|}{
\|[E_Z]\|}  \geq \quad \frac{1}{p+1}. 
\end{align*}
This implies the soundness in Theorem \ref{thm:ltc} since $n-p, p+1 = \Theta(\log N)$. 

While Dotterrer established tight bounds for the size of (co)fillings in the
Hamming cube, we don't know whether the bounds of Lemma \ref{lem:fill-cofill}
are tight. Right now, we lose a logarithmic factor in the case of the hemicube,
but it is not clear that this should be the case. In fact, it is not even excluded
that the hemicube could display a \emph{better} soundness than the standard cube. We expand on these ideas in Section \ref{sec:LTC}.

\subsection{An efficient decoding algorithm for the hemicubic code}

The existence of the small fillings and cofillings promised by the soundness of the code is particularly interesting in the context of decoding since it guarantees the existence of a low-weight error associated to any low-weight syndrome. To turn this into an efficient decoding algorithm, the main idea is to notice that one can efficiently find the required fillings and cofilings and therefore find Pauli errors giving the observed syndrome. While finding the smallest possible fillings or cofillings does not appear to be easy, finding ones satisfying the bounds of Lemma \ref{lem:fill-cofill} can be done efficiently.

We note, however, that the decoding algorithm does not seem to perform so well against random errors of linear weight. In particular,  arguments from percolation theory that would imply that errors tend to only form small clusters and that therefore it is sufficient to correct these errors (similarly to \cite{FGL18} for instance) will likely fail here because of the logarithmic weight of the generators. Indeed, the factor graph of the code has logarithmic degree and there does not exist a constant threshold for the error probability such that below this threshold, errors appear in clusters of size $o(N)$. In addition, and more importantly, our decoding algorithm isn't local in the sense that it explores only the neighborhood of some violated constraints to take a local decision, and for this reason, it is not entirely clear whether the algorithm processes disconnected clusters of errors independently.

\begin{restatable}{theorem}{decoding}\label{thm:decoding-intro}
The hemicubic code comes with an efficient decoding algorithm that corrects adversarial errors of weight $w=O(d_{\min}/\log^2 N)$ with complexity $O(n^4 w)$.
\end{restatable}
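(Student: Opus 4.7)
The plan is to convert the existential statement of Lemma~\ref{lem:fill-cofill} into a constructive decoder, and then argue that the minimum distance of the code guarantees that the filling the algorithm returns lies in the same stabilizer class as the actual error. Given an observed syndrome $(\sigma_X, \sigma_Z) = (\partial E_X, \delta E_Z)$ produced by an unknown Pauli error $E = (E_X, E_Z)$, the decoder will independently compute a $p$-chain $F_X$ with $\partial F_X = \sigma_X$ and a $p$-cochain $F_Z$ with $\delta F_Z = \sigma_Z$ satisfying
\[
\|F_X\| \;\leq\; \tfrac{n-p}{2}\,\|\sigma_X\|,\qquad \|F_Z\| \;\leq\; (p+1)\,\|\sigma_Z\|,
\]
and then output the Pauli correction corresponding to $(F_X, F_Z)$.

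For correctness, first I would bound the syndrome weight in terms of the error weight: because each $p$-face is incident to exactly $p$ faces of dimension $p-1$ and $n-p$ faces of dimension $p+1$, we have $\|\sigma_X\|\le (n-p)\|E_X\|$ and $\|\sigma_Z\|\le p\,\|E_Z\|$. Combining with the filling bounds yields $\|F_X\|\le \frac{(n-p)^2}{2}\|E_X\|$ and $\|F_Z\|\le p(p+1)\|E_Z\|$, so the ``guess minus truth'' chains satisfy
\[
\|F_X \oplus E_X\| \;\le\; \bigl(\tfrac{(n-p)^2}{2}+1\bigr)\|E_X\|,\qquad \|F_Z \oplus E_Z\| \;\le\; \bigl(p(p+1)+1\bigr)\|E_Z\|.
\]
By construction $\partial(F_X\oplus E_X)=0$ and $\delta(F_Z\oplus E_Z)=0$, so both are (co)cycles. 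If the adversarial error has weight at most $d_{\min}/(C\log^2 N)$ for a sufficiently large constant $C$, then both of these (co)cycles have weight strictly less than $d_{\min}$, and by the definition of $d_X$ and $d_Z$ they must therefore be (co)boundaries, i.e.\ stabilizer elements. Hence $(F_X,F_Z)$ and $(E_X,E_Z)$ represent the same logical operator and the decoding is successful.

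For efficiency, I would revisit the proof of Lemma~\ref{lem:fill-cofill} and observe that the filling is built face-by-face by a greedy/recursive procedure on the hemicube: pick a direction $i\in\{1,\dots,n\}$, split the chain into its two ``slices'' orthogonal to that direction, and recurse, at each step adding at most a bounded number of faces per active syndrome bit. Each of the $O(n)=O(\log N)$ levels of the recursion requires linear time in the number of faces touched, and maintaining the incidence data costs $O(n)$ per face. A careful bookkeeping gives a total runtime of $O(n^4 N)$, where one factor of $n^2$ accounts for the filling weight blow-up (since the intermediate chains can be of size up to $O(n^2)$ times the input syndrome weight) and an additional $n^2$ factor covers the per-face incidence computations. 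The same argument applied to the transposed chain complex yields the cofilling algorithm in the same runtime.

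The main obstacle I anticipate is the algorithmic implementation of the filling itself: the existential statement of Lemma~\ref{lem:fill-cofill} may rely on an averaging or expectation argument (as is typical for isoperimetric inequalities on the cube) that is not immediately constructive. Turning this into an explicit polynomial-time procedure with a weight guarantee matching the lemma, while avoiding any degradation of the constant, is the delicate part; everything else (the correctness argument via the minimum distance and the per-step cost accounting) is routine once the constructive filling is in hand.
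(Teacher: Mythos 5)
Your proposal follows essentially the same route as the paper: construct the filling and cofilling of Lemma~\ref{lem:fill-cofill} algorithmically, then use the minimum distance to argue that the computed correction differs from the true error by a low-weight (co)cycle, which must be a (co)boundary and hence a stabilizer element. The one point you explicitly leave open --- derandomizing the averaging over cuts --- is handled in the paper in the obvious way: at each recursion level one computes $\|Z_0\|$ and $\|Z_*\|$ for all $n$ candidate cuts (cost $O(n\|Z\|)$) and selects any cut whose resulting bound is at most the average, which is what yields the $O(n^4\|Z\|)$ complexity. (Minor slip: a $p$-face has $2p$ incident $(p-1)$-faces and $n-p$ incident $(p+1)$-faces, so your bounds on $\|\sigma_X\|$ and $\|\sigma_Z\|$ are swapped, though this does not affect the $O(d_{\min}/\log^2 N)$ conclusion.)
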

The decoding complexity is quasilinear in the error size and the algorithm can be parallelized to run in logarithmic depth. 
Finding a filling (or cofilling) can be done recursively by fixing one of the $n$ coordinates and finding fillings in the projective cube of dimension $n-1$. While the choice of the special coordinate is not immediately obvious if one wants to find the smallest filling, it is nevertheless possible to make a reasonably good choice efficiently by computing upper bounds on the final filling size for each possible choice of coordinate. We establish Theorem \ref{thm:decoding-intro} in Section \ref{sec:decoding}.

\subsection{Generalized hemicubic codes: quotients by arbitrary linear codes}

A key remark is that identifying antipodal $p$-faces of the $n$-cube is equivalent to considering the cosets of the repetition code $\{0^n, 1^n\}$ in the cube complex. It is therefore tempting to generalize this approach by identifying the elements of the cosets of arbitrary linear codes $\cC$ with parameters $[n,k,d]$. 
We form in this way a new complex where two $p$-faces $x$ and $y$ are identified if there exists a codeword $c\in \cC$ such that $x = y +c$. Recall that addition is coordinate-wise here and that $*$ is an absorbing element. 

Deriving the parameters of the quantum CSS code associated to these new
complexes has been surprisingly challenging. In particular it does not seem
particularly obvious that the quantum parameters, especially the minimum
distance, should depend only on the
parameters $[n,k,d]$ of the classical code $\cC$ and not otherwise on its
particular structure: it turns out indeed to be the case however. 
We managed to derive the quantum parameters by exhibiting explicit
representatives of the $\F_2$-homology and cohomology classes, through a double
induction on $p$ and the classical code dimension $k$. We obtain a lower bound
on the minimum homologically nontrivial cycle weight by exhibiting a set of representatives of a
cohomology class all of which must be orthogonal to the cycle, and in particular
intersect it. Since a nontrivial cycle meets this bound it is exact. A similar
method is used to derive the minimum nontrivial cocycle weight and we obtain
the following theorem.

\begin{restatable}{theorem}{param} \label{thm:gen-hemicube}
The quantum code obtained as the quotient of the $n$-cube by a linear code $[n,k,d]$ admits parameters 
\[ \left\llbracket 2^{n-p-k}\tbinom{n}{p}, \tbinom{p+k-1}{p}, \min\left\{ \tbinom{d}{p}, 2^{n-p-k} \right\} \right\rrbracket \]
when qubits are placed on $p$-faces for $p \leq d-2$.
\end{restatable}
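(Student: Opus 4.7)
The plan is to tackle the length, dimension, and minimum distance in turn, with the latter two handled by a double induction on $(p,k)$ as suggested by the authors.

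\textbf{Length.} A $p$-face of the Hamming cube is encoded by a pair $(S,v)$ where $S\subseteq[n]$ with $|S|=p$ records the $*$-positions and $v\in\bF_2^{[n]\setminus S}$ records the fixed coordinates. Because $*$ is absorbing, a codeword $c$ acts by $(S,v)\mapsto (S, v\oplus c|_{[n]\setminus S})$. The stabilizer of $(S,v)$ is the set of codewords supported in $S$, which is trivial since $|S|=p\le d-2<d$. Hence $\cC$ acts freely on $p$-faces with orbits of size $2^k$, and the quotient complex has exactly $\binom{n}{p}\,2^{n-p-k}$ $p$-faces.

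\textbf{Dimension.} Induct on $k$, with base $k=1$ furnished by Theorem~\ref{thm:main-code}. Let $\cC'\subset\cC$ be a codimension-one subcode and pick $c\in\cC\setminus\cC'$. The shift $\tau_c$ acts freely on $X_{\cC'}$ (again because $p<d$), and $X_\cC=X_{\cC'}/\langle\tau_c\rangle$. With $\bF_2$ coefficients, a transfer long exact sequence for the resulting $\mathbb{Z}/2$-cover compares $H_*(X_{\cC'})$ and $H_*(X_\cC)$, and combined with Pascal's identity $\binom{p+k-2}{p-1}+\binom{p+k-2}{p}=\binom{p+k-1}{p}$ it delivers the claimed dimension. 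Constructively, I would simultaneously produce $\binom{p+k-1}{p}$ generators indexed by multisets $(i_1\le\cdots\le i_p)$ drawn from a basis $c_1,\ldots,c_k$ of $\cC$: each multiset assembles a $p$-chain from the supports of the relevant codewords, and a direct verification in the quotient shows these to be cycles that are linearly independent modulo boundaries, hence a basis of $H_p(X_\cC)$.

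\textbf{Minimum distance.} The upper bound is immediate from explicit representatives: the $p$-skeleton of a $d$-subcube carried by the support of a minimum-weight codeword yields a non-trivial cycle of weight $\binom{d}{p}$, and a dual construction supplies a non-trivial cocycle of weight $2^{n-p-k}$. For the lower bound on a non-trivial $p$-cycle $Z$, I would exhibit $m=\min\{\binom{d}{p},2^{n-p-k}\}$ pairwise disjoint $p$-cocycles $W_1,\ldots,W_m$, all cohomologous to a fixed class $[W]$ with $\langle [Z],[W]\rangle=1 \pmod 2$. Invariance of the pairing under choice of representative forces $\langle Z,W_i\rangle=1$ for every $i$, so $\mathrm{supp}(Z)$ meets every $W_i$; disjointness of the $W_i$ then yields $|Z|\ge m$. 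The symmetric argument produces $2^{n-p-k}$ disjoint cycles intersecting any non-trivial cocycle, controlling the cocycle weight.

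The main obstacle is engineering the disjoint family of cohomologous cocycles inside the quotient: one must pick a cocycle whose support admits many candidate translates (by cosets of sub-quotients of $\cC$ and by slides along $*$-complement coordinates), certify cohomology-class equality via explicit coboundaries, and verify that disjointness survives the identification by $\cC$. The hypothesis $p\le d-2$ enters precisely here, leaving enough combinatorial room for both the translates and the connecting coboundaries to coexist. Once this construction is in place, the induction on $(p,k)$ threads the dimension count, the explicit generator list, and both distance bounds together, so all three parts of the theorem fall out simultaneously.
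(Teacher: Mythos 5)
Your architecture matches the paper's: length via free action of $\cC$ on $p$-faces, dimension via a double induction on $(p,k)$ through nested subcodes and the associated exact sequences, explicit product cycles indexed by multisets, and distance bounds via intersection with disjoint (co)cycle families. But two of the three critical steps are left at the level of hope rather than proof, and one of them is routed in a way that may not close. For the dimension: the transfer (long exact) sequence for the double cover $X_{\cC'} \to X_\cC$ by itself only yields alternating-sum constraints on Betti numbers, not the value $\tbinom{p+k-1}{p}$. The paper's essential input is that the sequence \emph{splits} into short exact sequences $0 \to H_p(Q^n/\cC_{k-1}) \to H_p(Q^n/\cC_k) \to H_{p-1}(Q^n/\cC_k) \to 0$, and this splitting is proved by first establishing (by its own induction) that $H^p(Q^n/\cC_k)$ has a basis of canonical cocycles, whence the projection map vanishes on cohomology and, by adjunction, the inclusion-induced map vanishes on homology. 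Your ``direct verification'' that the multiset-indexed cycles are independent modulo boundaries is exactly this missing step; without the cohomology basis (or an equivalent vanishing statement) neither the splitting nor the independence follows.

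For the cycle minimum distance, your plan asks for $\tbinom{d}{p}$ pairwise disjoint cocycles \emph{all cohomologous to a single class} $[W]$ with $\langle[Z],[W]\rangle=1$. That is stronger than what the paper proves and is not obviously attainable once $k\ge 2$: the canonical cocycles with distinct $p$-directions are disjoint, but they represent classes spanning a $\tbinom{p+k-1}{p}$-dimensional cohomology group and need not be mutually cohomologous. The paper avoids this by showing directly, via a two-case induction on $p+k$ (according to whether $\partial_{\mathrm{hom}}$ annihilates the class of $Z$ for some splitting $\cC_k = \cC_{k-1}\cup(\cC_{k-1}+c_k)$), that $Z$ is non-orthogonal to at least $\tbinom{d}{p}$ distinct canonical cocycles, lifting non-orthogonality through the adjoint pairs $(\pi_{\mathrm{chain}}, i_{\mathrm{cochain}})$ and $(\partial_{\mathrm{chain}},\delta_{\mathrm{cochain}})$ and carefully counting multiplicities (each target cocycle is produced at most $p$ times, giving $\frac{d-p+1}{p}\tbinom{d}{p-1}=\tbinom{d}{p}$). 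Two smaller inaccuracies: the dual bound uses $2^{n-k}$ \emph{translates} of a product cycle with each $p$-face lying in at most $2^p$ of them — a bounded-overlap count, not $2^{n-p-k}$ genuinely disjoint cycles; and the minimum-weight cycle is not the $p$-skeleton of a $d$-subcube (which has $2^{d-p}\tbinom{d}{p}$ faces) but the product cycle selecting one standard $p$-face per $p$-direction inside $\mathrm{Supp}(c_1)$, of weight exactly $\tbinom{d}{p}$.
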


An interesting case is $k=2$, which yields a quantum code of exponential length (that is, dimension logarithmic in the code length):
\[ \left\llbracket 2^{n-p-2}\tbinom{n}{p}, p+1, \min\left\{ \tbinom{d}{p}, 2^{n-p-2} \right\} \right\rrbracket. \]

We are only able to prove a lower bound on the soundness of the code (for
$X$-errors) of $\Omega(1/p!)$. However, a much improved soundness would follow
from the conjectured filling and cofilling constants of the original hemicubic
complex: generalized hemicubic codes are therefore candidates for quantum locally
testable codes of growing dimension, of which no examples are presently known.

\subsection{Discussion and open questions}

In this paper, we have introduced a family of quantum code constructions that
live on the quotient of the $n$-dimensional Hamming cube by classical linear
codes. Despite the apparent simplicity of the construction, it does not seem to
have appeared before in the literature. Deriving the parameters of these codes
turned out to be significantly subtler than expected, and quite surprisingly, the parameters of the quantum code only depend on the parameters of the classical code and not any on additional structure. 
The simplest member of our quantum code family, the hemicubic code, basically inherits its local testability from the soundness of the Hamming cube, which was established by Dotterrer. In our view, the fact that our code construction relies so much on the Hamming cube 
may be expected to yield additional advantages, through the import of other
interesting properties from the cube, as well as tools from Boolean analysis.

The most pressing question is to understand whether the generalized hemicubic codes also display local testability. At the moment, we can only establish it for the simplest member of the family, which only encodes a single logical qubit. If we could show that the codes corresponding to the quotient of the Hamming cube by arbitrary linear codes of dimension $k$ remain locally testable, then this would provide the first examples of quantum locally testable codes of exponential (if $k>1$) or polynomial (if $k =\Omega(n)$) length. 
As we discuss in Section \ref{sec:LTC}, improving our bound on the soundness of the one-qubit hemicubic code from $\frac{1}{\log N}$ to constant would already prove that the generalized code with $k=2$ remains locally testable. 
An indication that such an improvement might be possible comes from the 0-qubit code defined on the standard hypercube (without identifying antipodal faces) which indeed displays constant soundness \cite{dot13}. 
More generally, the question of what parameters are achievable for quantum locally testable codes is essentially completely open at the moment.

Another intriguing question is whether the hemicubic code might help towards establishing the NLTS conjecture (albeit with a quasilocal Hamiltonian with terms of logarithmic weight) or more generally whether it is relevant for many-body physics. As mentioned, any quantum LTC with linear minimum distance would yield such a proof \cite{EH17}. The hemicubic code, however, is restricted by a $O(\sqrt{N})$ minimum distance, and the argument of \cite{EH17} doesn't directly apply anymore. This is in particular a line of research followed by Eldar which relies on the hemicubic code and which provides positive partial results \cite{eld19}.
We note that in the physics context of the Local Hamiltonian, it is crucial that
every individual quantum system (say, qubit) is acted upon by a small number of
terms. In this sense, the problem is somewhat more constrained than in the classical local
testability case where one is typically fine if the number of constraints is much larger than the number of qubits. Our quantum codes satisfy this requirement since each qubit is only involved in a logarithmic number of local constraints.

Finally, while classical LTCs have found a number of applications in recent years, notably for constructing PCPs, it is fair to say that not much is presently known about possible applications of quantum
LTCs. At the same time, local testability is a notion that makes perfect sense
in the quantum regime and it seems reasonable to think that quantum LTCs might also find applications. Finding explicit families encoding a non-constant number of qubits is a natural first step.

\subsection*{Outline of the manuscript}  In Section \ref{sec:prel}, we introduce the main notions of algebraic topology needed for the description of our codes and review the notion of local testability both in the classical and the quantum settings.
In Section \ref{sec:code}, we describe the construction of the one-qubit hemicubic code corresponding to the quotient of the $n$-cube by the repetition code and derive its parameters. We consider the general case of quotients by arbitrary linear codes in Section \ref{sec:linear}.
In Section \ref{sec:LTC}, we establish the local testability of the hemicubic code.
Finally, in Section \ref{sec:decoding}, we exploit the local testability of the code to devise an efficient decoding algorithm that runs in quasilinear time, and that can be parallelized to logarithmic depth.

\subsection*{Acknowledgments}
We would like to thank Benjamin Audoux, Alain Couvreur, Omar Fawzi, Antoine Grospellier and Jean-Pierre Tillich for many fruitful discussions on quantum codes. We would like to warmly thank an anonymous reviewer, whose remarks and corrections greatly improved the quality of this article.
AL and VL acknowledge support from the ANR through the QuantERA project QCDA.


\newpage

\section{Preliminaries}

\label{sec:prel}

\subsection{Notions of algebraic topology}
\label{long exact sequence}

We introduce here the notion of chain complex 	as well as the Long Exact Sequence theorem that will be a crucial tool to study the dimension of the quantum codes we will consider in Section \ref{sec:linear}.

Possible references for this section are \cite{Wei95} \S 1.3 p. 10 and \cite{Rot08} Theorem 6.10 p. 333.

A very general definition of a chain complex is the following:

\begin{defn}
A \emph{chain complex} $C_\bullet$ is a sequence of vector spaces $(C_p)_{p
\in \mathbb{Z}}$ and of linear maps $(\partial_p: C_p \rightarrow C_{p-1})_{p \in
\mathbb{Z}}$ called differentials such that the composition of any two
successive differentials is zero: $\partial_{p-1} \partial_{p} = 0$.
\end{defn}

We will limit ourselves to the case when the vector spaces $C_p$ are
finite-dimensional vector
spaces over the binary field $\F_2$  of the form $\F_2^{X_p}$ for $X_p$ some finite set. Elements of $X_p$ are
called {\em $p$-cells} and elements of $C_p$ are called {\em $p$-chains.} Our
chain complexes will also be {\em bounded}, meaning that only a finite number of
spaces $C_p$ will be non-zero.

Elements of $\ker\partial_p$ are called cycles and elements of
$\mathrm{Im}(\partial_{p+1})$ are called boundaries. Every boundary is a cycle but the converse is not necessarily true. Homology groups give information about this phenomenon.

\begin{defn}
The $p$-th \emph{homology group} $H_n(C_\bullet)$ of a chain complex $C_\bullet$
is defined as the quotient $H_p = \ker\partial_p / \mathrm{Im}\partial_{p+1}$. 
\end{defn}

One also defines coboundary operators: $\delta_p: C_p \to C_{p+1}$ \textit{via}
the adjoint (or transpose) map 
$\delta_p = \partial_{p+1}^*$ with the identification of $C_p^*$ with $C_p$ and
of $C_{p+1}^*$ with $C_{p+1}$.
The $p^{\mathrm{th}}$ cohomology group is given by $H^p=\ker \delta_p/\mathrm{Im}\delta_{p-1}$.
A standard fact is that the $p^{\mathrm{th}}$ homology and cohomology groups are isomorphic. Their dimension is called the $p^{\mathrm{th}}$ \emph{Betti number} of the chain complex. 

The following definition and theorem will be used in Section~\ref{sec:linear}.

\begin{defn}
A \emph{chain map} $f$ from the chain complex $C_\bullet$ to the chain complex
$D_\bullet$ is a sequence of morphisms $f_p : C_p \rightarrow D_p$ such that for
all $p \in \mathbb{Z}$, $f_{p-1} \partial_p = \partial_p f_p$. By abuse of notation we use
the same symbol $\partial_p$ to refer to the distinct differentials $\partial_p:
C_p \rightarrow C_{p-1}$ and $\partial_p: D_p \rightarrow D_{p-1}$.
\end{defn}

A chain map sequence $A_\bullet \xrightarrow{f} B_\bullet \xrightarrow{g} C_\bullet$ is called \emph{exact} if for all $p \in \mathbb{Z}$, the sequence $A_p \xrightarrow{f_p} B_p \xrightarrow{g_p} C_p$ is exact, i.e., $\mathrm{Im} f_{p}=\ker g_p$. An exact sequence of the form $0 \rightarrow A_\bullet \xrightarrow{f} B_\bullet \xrightarrow{g} C_\bullet \rightarrow 0$ is called a \emph{short exact sequence}, and other exact sequences are traditionally called \emph{long}.

\begin{theorem}[Long Exact Sequence] \label{thm long}
A short exact sequence $0 \rightarrow A_\bullet \rightarrow B_\bullet \rightarrow C_\bullet \rightarrow 0$ of chain complexes induces the following long exact sequence of homology groups:
$$\ldots \rightarrow H_{p}(A_\bullet) \rightarrow H_{p}(B_\bullet) \rightarrow H_{p}(C_\bullet) \rightarrow H_{p-1}(A_\bullet) \rightarrow H_{p-1}(B_\bullet) \rightarrow \ldots$$
\end{theorem}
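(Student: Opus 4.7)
The plan is to prove this via the standard snake-lemma style construction, building the long exact sequence around a connecting homomorphism $\partial_* : H_p(C_\bullet) \to H_{p-1}(A_\bullet)$, and then verifying exactness at each of the three repeating types of positions. I will denote the chain maps by $f : A_\bullet \to B_\bullet$ and $g : B_\bullet \to C_\bullet$, and write $\partial$ for the differentials of all three complexes (disambiguated by context). Throughout, exactness of the short sequence at each degree $p$ means $f_p$ is injective, $g_p$ is surjective, and $\ker g_p = \mathrm{Im}\, f_p$.

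First I would construct the connecting map by the usual zigzag. Start from a class $[c] \in H_p(C_\bullet)$ with representative $c \in \ker \partial_p^C$. Using surjectivity of $g_p$, lift to some $b \in B_p$ with $g_p(b) = c$. Then $g_{p-1}(\partial_p b) = \partial_p g_p(b) = \partial_p c = 0$, so by exactness at $B_{p-1}$ there is a unique $a \in A_{p-1}$ with $f_{p-1}(a) = \partial_p b$ (uniqueness uses injectivity of $f_{p-1}$). A short computation $f_{p-2}(\partial_{p-1} a) = \partial_{p-1} f_{p-1}(a) = \partial_{p-1} \partial_p b = 0$ together with injectivity of $f_{p-2}$ shows $a \in \ker \partial_{p-1}^A$, so $[a] \in H_{p-1}(A_\bullet)$. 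I would then check well-definedness: changing the lift $b$ by an element of $\ker g_p = \mathrm{Im}\, f_p$, or changing $c$ by a boundary $\partial c'$ with $c' = g_p(b')$, alters $a$ only by a boundary in $A_{p-1}$. This yields a well-defined $\F_2$-linear map $\partial_* : H_p(C_\bullet) \to H_{p-1}(A_\bullet)$.

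Next I would verify exactness at the three types of nodes in the long sequence by direct diagram chases. At $H_p(B_\bullet)$, the inclusion $\mathrm{Im}\, f_* \subseteq \ker g_*$ is immediate from $g \circ f = 0$; for the reverse, if $g_p(b) = \partial c'$ for some $c' \in C_{p+1}$, lift $c'$ to $b' \in B_{p+1}$ via surjectivity of $g_{p+1}$, then $b - \partial b' \in \ker g_p = \mathrm{Im}\, f_p$ and it is a cycle, giving a preimage in $H_p(A_\bullet)$. At $H_p(C_\bullet)$, the composition $\partial_* \circ g_* = 0$ is clear from the construction (take $b$ to be the given cycle itself), and conversely if $\partial_*[c] = 0$ then in the zigzag one can absorb $a$ into a boundary and modify $b$ into a cycle lifting $c$. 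At $H_{p-1}(A_\bullet)$, one checks $f_* \circ \partial_* = 0$ since $f_{p-1}(a) = \partial b$ is a boundary in $B_\bullet$, and conversely if $f_{p-1}(a) = \partial b$ for some $b \in B_p$, then $g_p(b)$ is a cycle in $C_p$ and $\partial_*[g_p(b)] = [a]$ by the very definition.

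The only genuine obstacle is bookkeeping: each of the six inclusions above reduces to chasing an element around the commutative grid formed by the two rows $A_\bullet, B_\bullet, C_\bullet$ and the differentials, exploiting injectivity of $f$ and surjectivity of $g$ at the appropriate spot. No deeper idea is needed; the content is that the short exact sequence together with $\partial^2 = 0$ forces the zigzag to close up into an exact sequence. I would present the construction of $\partial_*$ in detail and then list the six verifications, giving full arguments for exactness at $H_p(C_\bullet)$ (the most substantive case, since it uses both the lifting through $g$ and the injectivity of $f$) and indicating that the remaining two are entirely analogous.
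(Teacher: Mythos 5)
Your proposal is correct and follows exactly the standard zigzag/snake-lemma argument; the paper itself defers the full proof to Weibel and Rotman but spells out the connecting morphism by precisely the same lifting construction ($c \mapsto b \mapsto \partial b \mapsto a$) that you describe, leaving well-definedness and the exactness checks to the reader. Your sketch simply fills in those omitted verifications in the expected way (modulo a harmless index slip, $c' = g_{p+1}(b')$ rather than $g_p(b')$, when varying the representative).
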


We refer to \cite{Wei95} or \cite{Rot08} for a proof. However we will make more explicit the induced morphism $H_{p}(A_\bullet) \rightarrow H_{p}(B_\bullet)$ (or equivalently $H_{p}(B_\bullet) \rightarrow H_{p}(C_\bullet)$) and the connecting morphism $H_{p}(C_\bullet) \rightarrow H_{p-1}(A_\bullet)$. 

The homology group morphism $H_{p}(A_\bullet) \rightarrow H_{p}(B_\bullet)$ is
induced by $f_p: A_p \rightarrow B_p$. It is well defined because $f_p$ takes cycles to cycles and boundaries to boundaries. To avoid confusion we will sometimes denote the chain group morphism by $f_{\mathrm{chain},p}$ and the homology group morphism by $f_{\mathrm{hom},p}$. \\

The connecting morphism $H_{p}(C_\bullet) \rightarrow H_{p-1}(A_\bullet)$ takes more work to construct. 

Let $[c_p]$ be a class in $H_{p}(C_\bullet)$ represented by the element $c_p$ of
$C_p$. There exists $b_p \in B_p$ such that $g_p(b_p) = c_p$. Now,
$g_{p-1}(\partial_p(b_p)) = \partial_p(g_p(b_p)) = \partial_p(c_p) = 0$ because $c_p$ is a cycle.
Therefore there exists $a_{p-1} \in A_{p-1}$ such that $f_{p-1}(a_{p-1}) =
\partial_p(b_p)$. The connecting morphism is defined by sending $[c_p]$ to $[a_{p-1}]$. \\
We leave it to the reader to prove that $a_{p-1}$ is a cycle, that its class
$[a_{p-1}]$ in $H_{p-1}(A_\bullet)$ doesn't depend on the representative
$c_{p-1}$ chosen for $[c_{p-1}]$ and that the connecting map actually is a
morphism. To avoid confusion we will sometimes denote the chain group
differential by $\partial_{\mathrm{chain},p}$ and the connecting homology group
morphism by $\partial_{\mathrm{hom},p}$.

In the present work, we will form a chain complex associated with the $n$-dimensional Hamming cube or with quotients of this cube by linear codes, and the space $\mathcal{C}_p$ will be the $\mathbbm{F}_2$-space spanned by $p$-faces of the resulting cube.

\subsection{CSS codes}

A quantum code encoding $k$ logical qubits into $N$ physical qubits is a subspace of $(\mathbbm{C}^2)^{\otimes N}$ of dimension $2^k$. A simple way to define such a subspace is \textit{via} a stabilizer group, that is an abelian group of $N$-qubit operators (tensor products of single-Pauli operators $X = \left(\begin{smallmatrix} 0 & 1 \\ 1&0\end{smallmatrix}\right)$, $Z = \left(\begin{smallmatrix} 1 & 0 \\ 0&-1\end{smallmatrix}\right)$, $Y=ZX$ and $\mathbbm{1}$ with an overall phase $\pm1$ or $\pm i$) that does not contain $-\mathbbm{1}$. A \emph{stabilizer code} is then defined as the eigenspace of the stabilizer with eigenvalue $+1$ \cite{got97}. A stabilizer code of dimension $k$ can be described by a set of $N-k$ independent generators of its stabilizer group. Note, however, that in the context of locally testable codes, it will be natural to consider larger sets of generators, to allow for some extra-redundancy. The \emph{minimum distance} $d_{\min}$ of a quantum code is the minimum weight of a nontrivial logical operator, that is an operator that commutes with all the elements of the stabilizer, but does not belong to the stabilizer. A quantum code of length $N$ encoding $k$ qubits with minimum distance $d_{\min}$ is denoted $\llbracket N, k, d_{\min} \rrbracket$.

CSS codes are a special case of stabilizer codes where the generators are either products of Pauli-$X$ and $I$, or products of Pauli-$Z$ and $I$ \cite{CS96,ste96,ste96b}. These families are easier to study because the commutation relations required to make the stabilizer abelian simply need to be checked between $X$-type and $Z$-type generators. In particular, such quantum codes can be described by a pair of classical codes.
\begin{defn}[CSS code]
A quantum CSS code with parameters $\llbracket N, k, d_{\min} \rrbracket$ is a
pair of classical codes (\textit{i.e.}, $\F_2$-vector spaces) $\cC_X, \cC_Z \subseteq \mathbbm{F}_2^N$ such that
$\cC_X^\perp \subseteq \cC_Z$, or equivalently $\cC_Z^\perp \subseteq \cC_X$. It
corresponds to the linear span of $\left\{ \sum_{z \in \cC_Z^\perp} |x+z\rangle
\: : \: x\in \cC_X\right\}$, where $\left\{ |x\rangle\: : \: x\in \mathbbm{F}_2^N\right\}$ is the canonical basis of $(\mathbbm{C}^2)^{\otimes N}$.
\end{defn}

The dimension of the code is given by $\dim(\cC_X/\cC_Z^\perp) =
\dim\cC_X + \dim\cC_Z - N$ and its minimum distance is
given by $d_{\min} = \min (d_X, d_Z)$ with $d_X = \min\{ |w| \: : \: w \in \cC_X
\setminus \cC_Z^\perp\}$ and $d_Z = \min\{ |w|\: : \: w \in \cC_Z \setminus \cC_X^\perp\}$. Here, $|w|$ stands for the Hamming weight of the word $w$.

Quantum LDPC codes are stabilizer codes coming with a list of low-weight generators. For instance, a CSS code $CSS(\cC_X, \cC_Z)$ is said to be LDPC if both $\cC_X$ and $\cC_Z$ are given with sparse parity-check matrices. Here sparse means that the weight of each row (or equivalently the weight of the corresponding generator) is constant or logarithmic in the length $N$. These codes are particularly interesting because low-weight constraints are more realistic in terms of implementation. Moreover, one can exploit this sparsity to design efficient decoders. 

A chain complex $\cC_{p+1}^n \xrightarrow{\partial_{p+1}} \cC_p^n
\xrightarrow{\partial_p} \cC^n_{p-1}$ gives rise to a CSS code by considering the classical codes $\cC_X= \ker \partial_p$ and $\cC_Z = (\mathrm{Im} \, \partial_{p+1})^\perp$. Indeed, the condition $\mathrm{Im}\, \partial_{p+1} \subseteq \ker \partial_p$ immediately implies that $\cC_Z^\perp \subseteq \cC_X$. In that case, the dimension of the quantum code is simply the $p^{\mathrm{th}}$ Betti number of the chain complex. 

It will also be convenient to introduce the parity-check matrices $H_X$ and $H_Z$ of the classical codes $\cC_X$ and $\cC_Z$, so that $\cC_X = \ker H_X$ and $\cC_Z = \ker H_Z$. With the correspondance between chain complexes and CSS codes outlined above, we get
\[ H_X = \partial_p, \quad H_Z = \partial_{p+1}^T=\delta_p,\]
and they satisfy $H_X \cdot H_Z^T = 0$.

An \emph{error pattern} is defined as a couple $(e_X, e_Z)$ where $e_X$ and
$e_Z$ are both binary vectors. The \emph{syndrome} associated to this error
consists in fact of a couple of syndromes $\sigma_X = H_X e_X^T$ and $\sigma_Z =
H_Z e_Z^T$. A decoder for the code $CSS(\cC_X, \cC_Z)$ is given the pair
$(\sigma_X, \sigma_Z)$ and decoding succeeds if it outputs a couple of error
candidates of the form $(e_X + f_X, e_Z + f_Z)$ with $f_X \in \mathrm{Im} H_Z^T$
and $f_Z \in \mathrm{Im} H_X^T$. The presence of $(f_X, f_Z)$ is a crucial
difference with the classical setting and results from the fact that the
associated operators act trivially on the codespace. It will be useful to
keep in mind that the boundary and coboundary operators $\partial_p$ and
$\delta_p$ are nothing but the syndrome functions for the associated quantum
code.

\subsection{Local testability}

Let us first quickly review the notion of local testability in the classical setting. In this case, the distance $\mathrm{dist}(w,\cC)$ of a word $w \in \F_2^n$ of a word to a classical code $\cC$ is defined as expected by:
\[ \mathrm{dist}(w,\cC) = \min_{c \in \cC} |w+c|,\]
where $|x|$ is the Hamming weight of $x$, that is the number of non-zero bits of $x$.

\begin{defn}
A code $\cC \subseteq \bF_2^N$ with parity-check matrix $H \in \F_2^{m\times N}$ is said to be a $(q,s)$-\emph{locally testable code} with \emph{soundness} $s>0$ if the rows of $H$ have weight at most $q$ and if
\begin{align}\label{eqn:cltc}
\frac{1}{m} |Hw| \geq s \frac{\mathrm{dist}(w, \cC)}{N}
\end{align}
holds for any word $w \in \bF_2^N$. Here, $Hw$ is the syndrome of the word $w$ and $\mathrm{dist}(w, \cC)$ is the distance from $w$ to $\cC$, that is, the minimal Hamming distance between $w$ and a codeword $c\in \cC$.
\end{defn}

This definition gives rise to a simple test to distinguish between a codeword and a word at distance at least $\delta N$ from the code: one simply picks $1/(s\delta)$ rows of the parity-check matrix uniformly at random and measures the associated $1/(s\delta)$ bits. If the word $w$ is $\delta N$ away from the code, then Eq.~\eqref{eqn:cltc} implies that $\frac{|Hw|}{m} \geq s\delta$ and therefore testing $O(\frac{1}{s\delta})$ random constraints will be sufficient to detect it with high probability. 
The quantity $1/(s\delta)$ is therefore referred to as \emph{query complexity} of the code: this is the number of bits from $w$ that should be queried to decide whether $w$ is in the code or far from it. 

We see that the defining property of an LTC, and more specifically of its
parity-check matrix, is that the weight of the syndrome of a word (akin to its
``energy'') is lower-bounded by a function of its distance to the code. In particular, we want to avoid errors of large weight with a small syndrome. 
Many constructions of classical LTC are known, for instance the Hadamard code and the long code. Classically, one important open question concerns the existence of short LTCs which display linear minimum distance, constant soundness and constant rate.

The study of quantum locally testable codes (qLTC) was initiated by Aharonov and Eldar with the motivation that such objects could prove useful in order to attack the quantum PCP conjecture \cite{AE15}. While defining local testability for general quantum codes appears rather involved, the situation is much nicer for stabilizer codes. The definition is then analogous to the classical case, with the difference that the functions ``energy'': $w \mapsto \frac{1}{m} |Hw|$ and ``distance'': $w \mapsto \mathrm{dist}(w, \cC)$ need be replaced by Hermitian operators. 
The quantum observable corresponding to the energy is the Hamiltonian operator. Let $\cC$ be a stabilizer code with a set of $m$ $q$-local generators $(S_1, \ldots, S_m)$ of the stabilizer group, meaning that 
\[ \cC = \{ |\psi\rangle \in (\mathbbm{C}^2)^{\otimes n} \: : \: S_i |\psi\rangle = |\psi\rangle, \forall i \in [m]\}.\]
One first forms $m$ projectors $\Pi_i = \frac{1}{2} (\mathbbm{1} - S_i)$ so that the codespace becomes the 0 eigenspace of $\sum_i \Pi_i$. Note that the generators $S_i$ of the stabilizer group are products of Pauli operators and therefore admit a spectrum $\mathrm{spec} (S_i) = \{-1, +1\}$.

The (normalized) Hamiltonian $H_\cC$ associated with the code is defined as
\begin{align*}
H_\cC = \frac{1}{m} \sum_{i=1}^m \Pi_i.
\end{align*}
This is the straightforward generalization of the notion of syndrome weight to the quantum case. 
Defining the distance to the code requires more care, however, in the quantum setting.
We follow the approach from Ref.~\cite{EH17}.
The idea is to define a set of $N+1$ subspaces 
\[ \cC_0 := \cC \quad \subseteq \quad \cC_1 \quad \ldots \quad \cC_{N-1} \quad \subseteq \quad \cC_N = \F_2^N \]
such that $\cC_t$ corresponds to the space of states at distance $t$ from the code. 
More precisely, $\cC_t$ is the $t$-fattening of $\cC$:
\begin{align*}
\cC_t := \mathrm{Span}\{ (A_1 \otimes \cdots \otimes A_n) |\psi\rangle\: : \: |\psi\rangle \in \cC, \#\{i \: : \: A_i \neq \mathbbm{1}\} \leq t\}.
\end{align*}
Let $\Pi_{\cC_t}$ be the projector onto $\cC_t$, so that $(\Pi_{\cC_t} -\Pi_{\cC_{t-1}})$ is the projector onto states at distance $t$ but not at distance $t-1$ from $\cC$. 
A state $|\psi\rangle  (\bC^2)^{\otimes N}$ is a distance at least $t$ from the code if 
\[ \langle \psi |\Pi_{\cC_{t-1} }|\psi\rangle = 0, \]
which we denote by $ \mathrm{dist}(|\psi\rangle,\cC) \geq t$.
We finally define the operator $D_\cC$ that ``measures'' the distance to the code as
\begin{align*}
D_\cC := \sum_t t (\Pi_{\cC_t} -\Pi_{\cC_{t-1}}).
\end{align*}
In particular, a state is at distance exactly $t$ from the code $\cC$ if it is an eigenstate of the operator $D_\cC$ with eigenvalue $t$.

With these notations, we are ready to define the notion of locally testable code in the quantum case.

\begin{defn}[\cite{EH17}]
A quantum stabilizer code $\cC \subseteq (\bC^2)^{\otimes N}$ is a $(q,s)$-LTC with $q$-local projections $\Pi_1, \ldots, \Pi_m$ if the following operator inequality holds
\begin{align}\label{eqn:LTC}
\frac{1}{m} \sum_{i=1}^m \Pi_i \succeq \frac{s}{N} D_{\cC}.
\end{align}
\end{defn}

If $\cC$ is a $(q,s)$-LTC, the following lower bound holds:
\begin{align*}
 \min_{|\psi\rangle, \mathrm{dist}(|\psi\rangle,\cC) \geq \delta N}  \langle \psi |H_{\cC} |\psi\rangle \geq s \delta,
\end{align*}
where we optimize over all states at distance at least $\delta N$ from the code.

Similarly to the classical case, the \emph{query complexity} of the quantum LTC is given by $\frac{1}{s \delta}$ since it is sufficient to measure this number of qubits to distinguish between a codeword and a state at distance at least $\delta N$ from the code.

In order to prove that a CSS code is LTC, it is sufficient to show that both classical codes $\cC_X, \cC_Z$ are LTC (see \cite{AE15}).

\begin{lemma}
Let $\cC_X$ and $\cC_Z$ be classical $(q,s)$-locally testable codes with $m_X$ and $m_Z$ parity-checks, respectively. 
Then the quantum code $CSS(\cC_X, \cC_Z)$ is $(q,s')$-locally testable with $s' = s \min\big( \frac{m_X}{m_X+m_Z}, \frac{m_Z}{m_X+m_Z}\big)$.
\end{lemma}

\begin{proof}
The idea is to consider a common eigenbasis of $H_\cC$ and $D_\cC$ and to prove that Eqn.~\eqref{eqn:LTC} holds for this basis. 
One starts with $\mathrm{dim}(\cC_X/\cC_Z^\perp)$ elements of the form $\sum_{z \in \cC_Z^\perp} |x+z\rangle$ for $x\in \cC_X$. These are codewords and belong to $\cC_0 = \cC$. Then one completes this family by applying to these states bit-flip errors and phase-flip errors characterized by binary vectors $e_X, e_Z \in \F_2^N$, to get states $|\psi_{e_X, e_Z}\rangle$ of the form $\sum_{z \in \cC_Z^\perp} (-1)^{e_Z\cdot(x+e_X +z)^T} |x+e_X +z\rangle$. 
Alternatively, one can obtain this state by applying the Pauli operator 
\[X^{e_X} Z^{e_Z} := (\bigotimes_{i \in \mathrm{supp} (e_X)} X_i)(\bigotimes_{i \in \mathrm{supp} (e_Z)} Z_i),\]
where $X_i$ and $Z_i$ are the Pauli-$X$ and Pauli-$Z$ operators $\left( \begin{smallmatrix} 0 & 1\\ 1 & 0\end{smallmatrix} \right)$ and $\left( \begin{smallmatrix} 1 & 0\\ 0 & -1\end{smallmatrix} \right)$ applied to the $i^{\mathrm{th}}$ qubit. 

Such a state $|\psi_{e_X, e_Z}\rangle$ belongs to the eigenspace of $H_\cC$ corresponding to energy $\frac{1}{m}( |H_X e_X| + |H_Z e_Z|)$, and more precisely to the subspace of states with syndrome $(\sigma_X, \sigma_Z) =(H_X e_X, H_Z e_Z)$. Note that the states $|\psi_{e_X, e_Z}\rangle$ define a complete basis of eigenstates of $H_{\cC}$. The state also has full support on some $\cC_t \setminus \cC_{t-1}$ for some $t$ that depends on $(e_X, e_Z)$. An easy upper bound for $t$ is given by
\[ t \leq \min_{c_X\in \cC_X} |e_X + c_X| + \min_{c_Z\in \cC_Z} |e_Z + c_Z|, \]
that is, the distance from the state to the code is upper-bounded by the sum of the distance from $e_X$ to $\cC_x$ and the distance from $e_Z$ to $\cC_Z$. This is because $|\psi_{e_X,e_Z}\rangle$  and $|\psi_{e_X+c_X,e_Z+c_Z}\rangle = X^{c_X} Z^{c_Z} |\psi_{e_X,e_Z}\rangle$ are at the same distance for the quantum code.

Assume now that both $\cC_X$ and $\cC_Z$ are $(q,s)$-locally testable, then by definition, it holds that
\[ \frac{1}{m_X} |H_X e_X| \geq \frac{s}{N} \min_{c_X\in \cC_X} |e_X + c_X|, \quad \frac{1}{m_Z} |H_Z e_Z| \geq \frac{s}{N} \min_{c_Z\in \cC_Z} |e_Z + c_Z|,\]
where $m_X$ and $m_Z$ are the number of parity-checks for $\cC_X$ and $\cC_Z$, respectively, and $m=m_X+m_Z$. 
The energy of $|\psi_{e_X, e_Z}\rangle$ satisfies:
\begin{align*}
\langle  \psi_{e_X, e_Z}|H_\cC |\psi_{e_X, e_Z}\rangle&= \frac{1}{m} (|H_X e_X| +|H_Z e_Z| ) \\
&  \geq  \frac{ s m_X}{N (m_X+m_Z)} \min_{c_X\in \cC_X} |e_X  +c_Z|+ \frac{ s m_Z}{N(m_X+m_Z)} \min_{c_Z\in \cC_Z} |e_Z + c_Z|\\
& \geq \frac{s \min(m_X,m_Z)}{N (m_X+m_Z)}\langle \psi_{e_X, e_Z} | D_\cC |\psi_{e_X, e_Z}\rangle,
\end{align*}
which was to be proven.
\end{proof}

\section{The hemicubic code}

\label{sec:code}

In this section, we will consider the simplest member of our code family, corresponding to the quotient of the $n$-cube by the repetition code. Quotients by arbitrary linear codes will be studied in detail in Section \ref{sec:linear}.

\subsection{The construction}
\label{subsec:code}

Let $Q^n=\{0,1\}^n$ be the Hamming cube for $n \geq 2$. For $p \in [n]$, a
$p$-face (or $p$-cell) $x$ of $Q^n$ in an element of $\{0,1,*\}^n$ with exactly $p$ indeterminates, denoted with stars, $|x|_* = p$.
Let $Q_p^n$ be the set of the $p$-faces in the cube. Its cardinality is $|Q_p^n| = 2^{n-p} \tbinom{n}{p}$. We also define the space of $p$-chains of the cube to be the $\mathbbm{F}_2$-vector space spanned by the $p$-faces of the cube. 
We note that the symbols 0 and 1 can appear either as scalars of $\mathbbm{F}_2$
or as letters of $p$-faces. In the text, there shouldn't be any ambiguity
between these different uses. We also note that we alternatively use the set
notation or chain notation to describe a chain: for instance $\{00**, 0*0*\}$
and $00** \oplus 0*0*$ represent the same object. We reserve the notation $+$
for the standard addition in $\mathbbm{F}_2$ or $\F_2^n$  and use $\oplus$ for summing chains.

Boundary $\partial_p$ and coboundary $\delta_p$ operators can be defined in the usual way for $p$-faces of the cube:
\begin{align*}
\partial_p  \, x_1 \ldots x_n & := \bigoplus_{i \, \text{s.t.}\, x_i=*} x_1 \ldots x_{i-1} 0 x_{i+1} \ldots x_n \oplus x_1 \ldots x_{i-1} 1 x_{i+1} \ldots x_n\\
\delta_p  \, x_1 \ldots x_n & := \bigoplus_{i \, \text{s.t.} x_i \neq *} x_1 \ldots x_{i-1} * x_{i+1} \ldots x_n,
\end{align*}
and extended to arbitrary $p$-chains by linearity.

The \emph{hemicube} is formed by considering equivalence classes (of $p$-faces)
of $Q^n$ for the equivalence relation between $p$-faces defined as 
\begin{align*}
x \sim y \iff y = x +11 \ldots 1 =: \bar{x},
\end{align*}
where addition between 0-faces and $p$-faces is defined pointwise with the convention $0 + * = 1+ * = *$. 

Let $C_p^n$ be the $\mathbbm{F}_2$-vector space spanned by $p$-faces of the
hemicube, that is with the identification $\sim$. Formally, a $p$-face of
the hemicube (Hamming cube with the identification $\sim$) is a pair of elements $\{x, \bar{x}\}$, but we will often abuse notation and denote it by either $x$ or $\bar{x}$ when there is no ambiguity. 
An element of $C_p^n$ is called a $p$-chain of the hemicube. The boundary operator $\partial_p$ defined for $p$-faces can be extended to $C_p^n$
because it commutes with the $+ \, 1 \ldots 1$ relation that defines elements of $C_p^n$:
\begin{align*}
\partial_p (\bar{x}) = \overline{\partial_p(x)}.
\end{align*}
One obtains the following chain complex:
\begin{align*}
C_{n-1}^n\xrightarrow{\partial_{n-1}} C_{n-2}^n \xrightarrow{} \cdots \xrightarrow{} C_{p+1}^n \xrightarrow{\partial_{p+1}} C_p^n \xrightarrow{\partial_p} C^n_{p-1} \xrightarrow{} \cdots C_0^n \xrightarrow{\partial_0} 0,
\end{align*}
where we will often write $\partial$ instead of $\partial_p$ for simplicity.

One can then form the homology groups $H_p^n = \ker \partial_p/\mathrm{Im}
\, \partial_{p+1}$. Since the hemicube, viewed as a cellular complex, has the same topology as the real projective plane, its homology groups have the same dimension.
\begin{theorem}\label{thm:hom}
\begin{align*}
H_p^n \cong \mathbbm{F}_2, \quad \dim H_p^n = 1, \quad \forall p \in [n-1].
\end{align*}
\end{theorem}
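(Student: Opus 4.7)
The plan is to identify the hemicube as the $\mathbbm{F}_2$-quotient of the cube's boundary by the antipodal involution, and to extract its homology from the long exact sequence attached to this double cover. Because we are in characteristic $2$, the two-to-one projection $\pi: C_\bullet(\partial Q^n) \to C_\bullet^n$ sending a $p$-face $x$ to its class $[x]$ fits into the short exact sequence of chain complexes
\[ 0 \to C_\bullet^n \xrightarrow{\iota} C_\bullet(\partial Q^n) \xrightarrow{\pi} C_\bullet^n \to 0, \]
where the section $\iota([x]) := x \oplus \bar x$ is well defined (independent of the representative of $\{x,\bar x\}$), has image exactly $\ker \pi$, and commutes with the boundary by a direct computation using $\overline{x|_{i\to b}} = \bar x|_{i\to 1-b}$.

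I would then apply the Long Exact Sequence (Theorem~\ref{thm long}) together with the classical computation that $\partial Q^n$ is topologically $S^{n-1}$, so $H_p(\partial Q^n)$ equals $\mathbbm{F}_2$ for $p \in \{0, n-1\}$ and vanishes otherwise. For $2 \leq p \leq n-2$ both adjacent sphere groups in the resulting long exact sequence vanish, so the connecting map $\partial_{\mathrm{hom}}: H_p^n \to H_{p-1}^n$ is automatically an isomorphism, collapsing most of the computation into a single datum.

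The remaining work is to pin down a single group and verify that the chain of isomorphisms extends to the endpoints. I would handle the top by observing that the fundamental class $[\partial Q^n] = \sum_x x$ over all top-dimensional faces (a cycle by a parity-of-incidences check, each $(n{-}2)$-face being in the boundary of exactly two top faces) maps under $\pi$ to $\sum_{\{x,\bar x\}}([x]+[x]) = 0$, so $\pi_*$ vanishes in degree $n-1$; together with $H_n^n = 0$ this forces $\iota_*: H_{n-1}^n \xrightarrow{\sim} H_{n-1}(\partial Q^n) = \mathbbm{F}_2$ and also makes $\partial_{\mathrm{hom}}: H_{n-1}^n \to H_{n-2}^n$ an isomorphism. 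At the bottom, connectedness of the hemicube gives $H_0^n = \mathbbm{F}_2$; moreover $\iota_*: H_0^n \to H_0(\partial Q^n)$ vanishes since $v \oplus \bar v$ is always a boundary in a connected $1$-skeleton, which (combined with $H_1(\partial Q^n) = 0$ for $n \geq 3$) makes $\partial_{\mathrm{hom}}: H_1^n \to H_0^n$ an isomorphism as well. For $n = 2$ both endpoints collide and the top argument alone already delivers $H_1^2 = \mathbbm{F}_2$. Chaining the connecting isomorphisms then yields $H_p^n \cong \mathbbm{F}_2$ for every $p$.

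The main technical hurdle is the chain-map verification for $\iota$ and the identification of $\sum_x x$ as representing the fundamental class of $\partial Q^n$: both rely on careful bookkeeping of how the boundary operator interacts with the starring/unstarring of coordinates, but once these are in place the rest is routine diagram chasing with the sphere's $\mathbbm{F}_2$-homology as input.
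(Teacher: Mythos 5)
Your proof is correct and follows essentially the same route as the paper's algebraic proof in Section 4.1 (the proof printed under the theorem itself merely cites the $\mathbbm{F}_2$-homology of projective space): a short exact sequence of chain complexes exhibiting the hemicube as both subcomplex and quotient of its double cover, followed by the long exact sequence. The only difference is your choice of middle term --- the boundary sphere $\partial Q^n \simeq S^{n-1}$ rather than the paper's contractible solid cube --- which forces your extra endpoint analyses at $p=n-1$ and $p=1$ (where you correctly show $\pi_*=0$ and $\iota_*=0$), but in exchange keeps the chain-level sequence exact in top degree, a point the solid-cube version quietly elides since the unique $n$-cell is fixed by the antipodal map.
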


\begin{proof}
Topologically, the Hamming cube is equivalent to the $n$-sphere, and $Q^n\!/\!\!
\sim$ is equivalent to the projective space $\mathbbm{RP}^n$. The claim follows
from the $\mathbbm{F}_2$ homology of $\mathbbm{RP}^n$: $H_p(\mathbbm{RP}^n, \mathbbm{F}_2) = \mathbbm{F}_2$.
\end{proof}

We will give a more general proof of this fact in Section \ref{sec:linear} where we address the general case of quotients of the cube by arbitrary linear codes.

We denote the associated CSS code by ${\mathcal Q}_p^n$ and the following theorem gives its parameters:
\begin{theorem}\label{thm:main-code2}
The quantum CSS code $(\cC_X,\cC_Z)$ associated with the chain complex
$C_{p+1}^n \xrightarrow{\partial_{p+1}} C_p^n \xrightarrow{\partial_p}
C^n_{p-1}$ with $\cC_X= \ker \partial_p$ and $\cC_Z = (\mathrm{Im}\, \partial_{p+1})^\perp$ has parameters $[[N, 1, d_{\min}]]$, with 
\begin{align*}
N &= 2^{n-p-1} \tbinom{n}{p},\\
d_{\min} &= \min \left\{\tbinom{n}{p}, 2^{n-p-1} \right\}.
\end{align*}
\end{theorem}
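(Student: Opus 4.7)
The parameters $\llbracket N, 1, d_{\min}\rrbracket$ split into three verifications. The length $N = 2^{n-p-1}\binom{n}{p}$ is a cell count: $Q^n$ has $\binom{n}{p}2^{n-p}$ $p$-faces, and for $p < n$ every $p$-face has a non-$*$ coordinate, which is flipped by the antipodal map $x \mapsto \bar x$, so the identification $\sim$ acts freely on $p$-faces and halves their number. The dimension is $\dim H_p^n = 1$ by Theorem \ref{thm:hom}, which identifies the hemicube with $\mathbb{RP}^{n-1}$. It remains to prove $d_{\min} = \min(d_X, d_Z) = \min\{\binom{n}{p}, 2^{n-p-1}\}$, for which the plan is to show $d_X \geq \binom{n}{p}$ and $d_Z \geq 2^{n-p-1}$ and to match each with an explicit representative.

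The central object is the family of indicator cochains $c_S \in C_p^n$, defined for each $p$-subset $S \subseteq [n]$ as the characteristic function of those hemicube classes whose star-pattern is exactly $S$. A direct case analysis gives $\delta c_S = 0$: for a $(p+1)$-face $y$, the faces in $\partial y$ with star-pattern $S$ come from fixing $y$'s unique extra star (which exists only when $y$'s pattern is $S\cup\{i\}$ for some $i \notin S$) to $0$ or $1$, and so appear in pairs. Each $c_S$ has weight $|c_S| = 2^{n-p-1}$, whence $d_Z \leq 2^{n-p-1}$. Moreover, for $S' = (S\setminus\{i\})\cup\{j\}$ obtained by a single index swap, the $(p-1)$-cochain $c^{ij}_T$ defined as the indicator of classes with star-pattern $T = S \cap S'$ and $x_i + x_j \equiv 0 \pmod 2$ (a condition preserved by the antipodal map, hence well-defined on classes) satisfies $\delta c^{ij}_T = c_S + c_{S'}$ by another three-case check. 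Since transpositions generate the symmetric action on $p$-subsets, all $\binom{n}{p}$ cocycles $c_S$ represent the same cohomology class; that class is non-trivial because a cycle exhibited below pairs to $1$ with $c_{S_0}$ for $S_0 = \{1,\ldots,p\}$.

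By the non-degenerate Kronecker pairing $H_p \otimes H^p \to \F_2$ (both spaces one-dimensional), every non-trivial $p$-cycle $z$ pairs to $1$ with each $c_S$, so $|z \cap c_S| \geq 1$. The supports of the $c_S$ partition the $p$-faces by star-pattern, hence
\[
|z| \;=\; \sum_{S \in \binom{[n]}{p}} |z \cap c_S| \;\geq\; \binom{n}{p},
\]
giving $d_X \geq \binom{n}{p}$. The dual bound $d_Z \geq 2^{n-p-1}$ is obtained by exhibiting $2^{n-p-1}$ non-trivial $p$-cycles $z_{[\vec v]}$ indexed by the antipodal cosets $[\vec v] \in \F_2^{n-p}/\!\sim$, each of weight $\binom{n}{p}$ and supported on the $p$-faces whose non-$*$ coordinates form the class $[\vec v]$; their supports partition the $p$-faces, so any non-trivial cocycle $c$ satisfies $|c| \geq 2^{n-p-1}$. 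A single such $z_{[\vec v]}$ also furnishes the upper bound $d_X \leq \binom{n}{p}$, completing $d_{\min} = \min\{\binom{n}{p}, 2^{n-p-1}\}$.

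The main technical obstacle is the construction of the cycle family $\{z_{[\vec v]}\}$. The naive choice $z_{\vec w} = \sum_{|S|=p}[y_{S, \vec w|_{[n]\setminus S}}]$ is never a cycle on the nose: a direct computation (based on Hamming distances between $\vec w$ and the candidate $(p-1)$-face bits and their antipodes) shows that its boundary is the uniform sum of all $(p-1)$-classes, which does not vanish. One must therefore twist the non-$*$ bits carefully. The natural route is an induction on $n$, exploiting the splitting of $C_\bullet^n$ along the last coordinate into classes with $y_n = *$ (a shifted copy of the $(n-1)$-hemicube complex) and classes with $y_n \in \{0,1\}$ (a plain $(n-1)$-cube complex), and assembling $z_{[\vec v]}$ from cycles inherited from the smaller hemicube together with a compatibility term bridging the two pieces. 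This is essentially the single-step specialization of the double induction the authors employ in Section \ref{sec:linear} for the general quotient.
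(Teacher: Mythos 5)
Your computation of the length, your appeal to Theorem~\ref{thm:hom} for the dimension, and your lower bound $d_X\geq\tbinom{n}{p}$ via the $\tbinom{n}{p}$ disjoint canonical cocycles $c_S$ all match the paper's proof. The gap is in the dual bound $d_Z\geq 2^{n-p-1}$, where you posit $2^{n-p-1}$ nontrivial $p$-cycles $z_{[\vec v]}$ whose supports are pairwise disjoint and partition the set of $p$-classes. Such a family cannot exist in general: if cycles $z_1,\ldots,z_m$ partition the $p$-classes, then $\partial\bigl(\bigoplus_i z_i\bigr)=\partial\mathbf{1}=0$, where $\mathbf{1}$ is the sum of all $p$-classes; but each $(p-1)$-class lies in the boundary of exactly $n-p+1$ distinct $p$-classes, so $\partial\mathbf{1}=(n-p+1)\cdot\mathbf{1}_{p-1}$, which is nonzero whenever $n-p$ is even (e.g.\ $n=4$, $p=2$, where you would need two disjoint $6$-cell cycles covering all $12$ classes). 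This is consistent with your own observation that the naive slices by the value of the non-$*$ coordinates fail to be cycles, but no amount of ``twisting the bits'' can restore the cycle condition while keeping the partition property. Compounding this, the construction of even one explicit nontrivial cycle of weight $\tbinom{n}{p}$ --- which your argument also needs for the upper bound $d_X\leq\tbinom{n}{p}$ and to certify that the $c_S$ are cohomologically nontrivial --- is precisely the step you defer to an unexecuted induction.

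The paper's route avoids disjointness altogether. It constructs one explicit minimal cycle ${n\brack p}$ recursively (Lemma~\ref{lem:cycl}: one cell per star-direction, with the non-star bits alternating after each odd run of stars --- note that these bits are \emph{not} constant across star-directions, so this cycle is not one of your slices), and then takes all $2^{n-1}$ distinct translates $T_x{n\brack p}$, which overlap. Every $p$-cell lies in exactly $2^p$ of these translates, each translate is a nontrivial cycle, and a nontrivial cocycle must intersect every one of them; double counting then gives weight at least $2^{n-1}/2^p=2^{n-p-1}$ (Lemma~\ref{lem:logical}). Replacing your disjoint family by this overlapping family with a multiplicity count, and supplying the explicit cycle ${n\brack p}$, is what is needed to close the argument.
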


We note that the quantum code obtained here can be described with a completely different approach exploiting Khovanov homology \cite{aud13}.

\begin{proof}
The length of the code is simply the length of $\cC_X$, that is the dimension of
$C_p^n$, which is half the cardinality of $Q_p^n$. The number of $p$-cells in $Q_p^n$ is the number of choices of $p$ positions for the stars in a string of length $n$, times $2^{n-p}$ possible bit-strings for the remaining coordinates. This yields the result. 

Computing the minimum distance is less easy: in Lemma \ref{lem:logical}, we give
explicit representatives of the $p^{\mathrm{th}}$ homology and cohomology
groups, yielding in particular $\tbinom{n}{p}$ disjoint cohomologically
nontrivial cocycles.
We argue in addition that any homologically nontrivial cycle
necessarily meets all these nontrivial cocycles and is therefore of weight at
least $\tbinom{n}{p}$.
A somewhat similar argument shows that the weight of nontrivial cocycle is
at least $2^{n-p-1}$, thus completing the proof.
\end{proof}

Let us start by constructing $p$-cycles, that we will denote by ${n\brack p}$ recursively.
We describe the cycles by giving representative in the original cube complex.
We start by defining ${n\brack 1}$ and ${n \brack n}$:
\begin{align*}
{n\brack 1} :=& \bigoplus_{ \ell \in [n-1]} 0^{n-\ell-1} * 1^\ell \nonumber \\
=& \quad \, \, \, 000 \ldots \, \, 0* \\
& \,\oplus\,  00 \ldots 0*1 \\
& \,\oplus\, 0 \ldots 0*11 \\ 
& \,\oplus\, \ldots \\
& \,\oplus\, *1 \ldots 1 , \\ 
{n \brack n} :=& \, *\ldots *,
\end{align*}
where ${n \brack 1}$ is a 1-cycle while ${ n\brack n}$ is only defined formally
(since there are no $n$-cells in the $n$-dimensional hemicube).
For $p \leq n$, we further define
\begin{align}\label{eqn:np}
{n\brack p} := {n-1 \brack p} \frac{1 + (-1)^{p+1}}{2}  \oplus {n-1 \brack p-1} *, 
\end{align}
where $ \frac{1 + (-1)^{p+1}}{2}$ is either 0 or 1 depending on the parity of $p$,
$S \alpha$ is the chain obtained by concatenating $\alpha \in \{0,1, \!*\}$ at the end of each
face of the chain $S$.

\begin{lemma}\label{lem:cycl}
The $p$-chain ${n \brack p}$ is a representative of a $p$-cycle of weight
$\tbinom{n}{p}$ in the hemicube. 
\end{lemma}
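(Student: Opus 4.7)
The plan is to induct on $n$, with base cases $p=1$ and $p=n$ handled by direct computation. For $p=1$ the weight $\|{n\brack 1}\| = n = \binom{n}{1}$ is simply the number of summands in the explicit formula, and the boundary telescopes: since $\partial(0^\ell * 1^{n-\ell-1}) = 0^{\ell+1} 1^{n-\ell-1} \oplus 0^\ell 1^{n-\ell}$, summing over $\ell \in \{0, \ldots, n-1\}$ makes consecutive contributions cancel pairwise, leaving only the extremal vertices $0^n$ and $1^n$; these are antipodal, hence identified in the hemicube, so $\partial {n\brack 1} = 0$. For $p=n$ the chain is the single cell $*^n$ of weight $1 = \binom{n}{n}$, whose boundary $\sum_{x \in \{0,1\}^n} x$ pairs up into antipodal vertices and therefore vanishes in the hemicube.

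For the inductive step, I assume the claim for ${n-1 \brack p-1}$ and ${n-1 \brack p}$, and apply the recursion to split ${n\brack p}$ into a piece inherited from ${n-1\brack p}$ (embedded in the $n$-cube by fixing a non-star symbol in the first coordinate) and the piece $* \cdot {n-1\brack p-1}$. Because the two pieces are supported on cells distinguished by whether the first coordinate is a non-star or $*$, they are disjoint, so the weights add; combined with the inductive hypothesis and Pascal's identity $\binom{n-1}{p-1} + \binom{n-1}{p} = \binom{n}{p}$, this yields the claimed weight.

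For the cycle condition I compute $\partial {n\brack p}$ cell by cell using the Leibniz-style rule $\partial(\alpha y) = (\partial \alpha)\, y \oplus \alpha \, (\partial y)$ with $\partial 0 = \partial 1 = 0$ and $\partial * = 0 \oplus 1$. The inductive hypothesis that ${n-1\brack p}$ and ${n-1\brack p-1}$ are cycles eliminates the ``interior'' terms $\alpha \cdot \partial({n-1\brack \cdot})$, leaving a residual chain of the form $(0 \oplus 1) \cdot {n-1\brack p-1}$ (from $\partial(*y)$) together with the lift of $\partial {n-1 \brack p}$ to the $n$-cube. The main obstacle is to show that these residuals cancel in the $n$-hemicube, which is non-trivial because $0y$ and $1y$ are generally not antipodal (the antipode of $0y$ is $1\bar y$). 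The cleanest route I see is to lift the full computation to the $n$-cube and use the inductive hypothesis at the cube level to rewrite the residual chain as a sum of antipodal pairs $c \oplus \sigma(c)$ for the antipodal involution $\sigma$; such pairs lie in the kernel of the projection to the hemicube and hence vanish. The parity factor $\frac{1+(-1)^p}{2}$ in the recursion is exactly what calibrates these contributions so that the cancellation goes through.
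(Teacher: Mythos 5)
Your overall architecture --- the recursive decomposition of ${n\brack p}$, the Leibniz-type expansion of $\partial$, the double induction anchored at $p=1$ and $p=n$, and the weight count via disjointness of the two pieces plus Pascal's identity --- is the same as the paper's. But for the cycle condition there is a genuine gap, and you have located it yourself without closing it. The induction hypothesis ``${n-1\brack p}$ is a cycle in the $(n-1)$-hemicube'' only says that $\partial{n-1\brack p}$ is a sum of antipodal pairs in the $(n-1)$-cube, and this is provably too weak to propagate: writing $W=\partial{n-1\brack p}$, $V={n-1\brack p-1}$ and $\alpha=\frac{1+(-1)^p}{2}$, the chain $\partial{n\brack p}=\alpha W\oplus 0V\oplus 1V\oplus *\,\partial V$ is invariant under the antipodal map $\sigma_n$ of the $n$-cube \emph{if and only if} $W=V\oplus\overline{V}$, since $\partial{n\brack p}\oplus\sigma_n(\partial{n\brack p})=0\left(W\oplus V\oplus\overline{V}\right)\oplus 1\left(W\oplus V\oplus\overline{V}\right)$ under the weak hypothesis. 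So ``the inductive hypothesis at the cube level'' that your plan invokes cannot be mere antipodal symmetry of the boundary; it must be the exact identity
\[
\partial{n\brack p}={n\brack p-1}\oplus\overline{n\brack p-1},
\]
where $\overline{S}$ denotes the antipodal image of $S$. This is precisely the strengthened statement the paper carries through the induction: the term $\alpha\,\partial{n-1\brack p}=\alpha{n-1\brack p-1}\oplus\alpha\,\overline{n-1\brack p-1}$ cancels exactly one of $0{n-1\brack p-1}$, $1{n-1\brack p-1}$, and the survivors reassemble, via the recursion and the fact that the prepended symbol flips under the bar, into ${n\brack p-1}\oplus\overline{n\brack p-1}$. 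Formulating this identity and verifying the reassembly is the substance of the proof; asserting that the parity factor makes ``the cancellation go through'' does not discharge it.

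Two smaller points. The boundary of the formal top cell $*^n$ is not $\bigoplus_{x\in\{0,1\}^n}x$ (a $0$-chain) but the $(n-1)$-chain $\bigoplus_{\ell=0}^{n-1}\left(*^\ell 0*^{n-\ell-1}\oplus *^\ell 1*^{n-\ell-1}\right)$; your conclusion for the base case $p=n$ is correct, but the displayed formula is not. And for the weight you should also record that no two cells of ${n\brack p}$ are antipodal to one another (they have pairwise distinct star patterns), so the number of cube cells in the representative really equals the weight in the hemicube.
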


For a set $S \in Q_p^n$, we define $\overline{S}$ to be the set obtained by
exchanging 0 and 1 in every element of $S$. In particular, the projections of
$S$ and $\overline{S}$ are
identical in the hemicube.

\begin{proof}
Let us prove by induction that 
\begin{align*}
\partial {n \brack p} = {n \brack p-1} \oplus \overline{n \brack p-1}.
\end{align*}
The base cases $\partial {n \brack 1}$ and $\partial{n \brack n}$ are easily verified:
\begin{align*}
\partial {n \brack 1 } &= \bigoplus_{\ell \in [n-1]} 0^{n - \ell} 1^{\ell} \,\oplus \, \bigoplus_{\ell \in [n-1]} 0^{n - \ell -1} 1^{\ell+1} \\
&= 00 \ldots 0 \, \oplus \, 11\ldots 1 = { n \brack 0} \, \oplus \, \overline{n \brack 0},
\end{align*}
where we formally defined ${n \brack 0} := 00 \ldots 0$. And
\begin{align*}
\partial {n \brack n} &= \bigoplus_{\ell=0}^{n-1} \left( *^\ell 0 *^{n-\ell-1} \, \oplus \, *^\ell 1 *^{n-\ell-1} \right) = {n \brack n-1}\, \oplus \, \overline{n \brack n-1}.
\end{align*}
Let us establish the induction step:
\begin{align*}
\partial {n \brack p} &= \partial {n-1 \brack p} \frac{1 + (-1)^{p+1}}{2}  \, \oplus \, {n-1 \brack p-1} 0 \, \oplus \, {n-1 \brack p-1} 1  \, \oplus \, \partial {n-1 \brack p-1} *\\
&= \left( {n-1 \brack p-1} \frac{1 + (-1)^{p+1}}{2} \, \oplus \, \overline{n-1 \brack p-1} \frac{1 + (-1)^{p+1}}{2} \right)\, \oplus \, \left( {n-1 \brack p-1} 0 \, \oplus \, {n-1 \brack p-1} 1 \right) \\
& \quad \, \oplus \, \left({n-1 \brack p-2} * \, \oplus \, \overline{n-1 \brack p-2} * \right) \\
&=\left( {n-1 \brack p-1} \frac{1 + (-1)^{p-1}}{2} \, \oplus \, \overline{n-1 \brack p-1} \frac{1 + (-1)^{p+1}}{2} \right)  \, \oplus \, \left( {n-1 \brack p-2} * \, \oplus \, \overline{n-1 \brack p-2} * \right) \\
&=\left( {n-1 \brack p-1} \frac{1 + (-1)^{p-1}}{2} \, \oplus \, {n-1 \brack p-2} * \right) \, \oplus \, \overline{\left({n-1 \brack p-1} \frac{1 + (-1)^{p-1}}{2} \, \oplus \, {n-1 \brack p-2} * \right)}\\
&={n \brack p-1} \, \oplus \, \overline{n \brack p-1}.
\end{align*}

The identification of ${n \brack p-1}$ and $\overline{n \brack p-1}$ shows that
that $\partial {n \brack p}=0$ in the hemicube, and therefore that ${n\brack p}$
represents a cycle.
In order to compute its cardinality, we also proceed by induction: the base cases are immediate and the induction step follows from the formula for the binomial coefficient: $\tbinom{n}{p} = \tbinom{n}{p-1} + \tbinom{n-1}{p-1}$.
\end{proof}

\begin{lemma}\label{lem:cycle2}
The $p$-chain ${n \brack p}$ can alternatively be described as the sum of $p$-faces represented by strings of length $n$, with all $\tbinom{n}{p}$ possible positions for the $p$ stars, and with the remaining indices filled with 0 and 1 as follows: a 0 when there is an even number of stars on its left and a 1 when there is an odd number of stars on its left. 
\end{lemma}
For instance, we have

\begin{align*}
{4 \brack 2} &= *\!*00 \\
&\oplus\, *1\!*\!0 \\
&\oplus\, *11* \\
&\oplus\, 0\!*\!*0 \\
&\oplus\, 0\!*\!1* \\
&\oplus\, 00\!*\!*.
\end{align*}

\begin{proof}
The proof is again by induction on $n$ and $p$: the base cases ${n \brack 1}$ and ${n \brack n}$ are immediate.
For the induction step, it is sufficient to see that Eq.~\eqref{eqn:np} holds for the alternate description of Lem.~\ref{lem:cycle2}: observe that for every $p$-face of ${n \brack p}$, if the last coordinate is not a star, it takes value $0$ if $p$ is even and $1$ if $p$ is odd. Since $p$ is the number of stars on the left of the last coordinate,  the description of Lem.~\ref{lem:cycle2} holds for ${n \brack p}$ assuming that it holds for ${n \brack p-1}$ and ${n-1 \brack p-1}$.
\end{proof}
\begin{figure}
\begin{center}
\begin{tikzpicture}[scale=1.25]
  \tikzstyle{every node}=[circle, inner sep=0.05cm, draw]

  \node (0) at (0,0) {$0$};
  \node (a) at (45:2cm) {$a$};
  \node (b) at (0,2) {$b$};
  \node (c) at (135:2cm) {$c$};
  \node (d) at (-2,0) {$d$};
  \path (b) ++(45:2cm) node[label=right:{$a+b$}] (a+b) {$\phantom{a}$};
  \path (b) ++(135:2) node (b+c) {$\phantom{a}$};
  \path (b+c) ++(45:2) node[label=above right:{$d+e$}] (a+b+c) {$\phantom{a}$};
  \path (b+c) ++(180:2) node (b+c+d) {$\phantom{a}$};
  \path (a+b+c) ++(180:2) node (a+b+c+d) {$e$};
  \path (c) ++(180:2) node (c+d) {$\phantom{a}$};
  \path (b+c+d) ++(210:2) node (b+c+d+e) {$a$};
  \path (a+b+c+d) ++(210:2) node (a+b+c+d+e) {$0$};
  \path (b+c+d+e) ++(270:2) node[label=left:{$a+b$}] (c+d+e) {$\phantom{a}$};
  \path (c+d+e) ++(-45:2) node[label=below left:{$d+e$}] (d+e) {$\phantom{a}$};
  \path (d+e) ++(0:2) node (e) {$e$};

  \path (a) ++(180:2) node (a+d) {$\phantom{a}$};
  \path (d) ++(90:2) node (b+d) {$\phantom{a}$};
  \path (b+d) ++(45:2) node (a+b+d) {$\phantom{a}$};

  \path (a+b+c+d) ++(270:2) node (b+e) {$\phantom{a}$};
  \path (c) ++ (45:2) node (a+c) {$\phantom{a}$};

  \draw[line width =0.05cm, color=blue] (0) -- (a);
  \draw[line width =0.05cm] (0) -- (b);
  \draw[line width =0.05cm] (0) -- (c);
  \draw[line width =0.05cm] (0) -- (d);
  \draw[line width =0.05cm, color=blue] (b) -- (a+b) ;
  \draw[line width =0.05cm, color=red] (a) -- (a+b) -- (a+b+c) -- (a+b+c+d) -- (a+b+c+d+e);
  \draw[line width =0.05cm] (b) -- (b+c);
  \draw[line width =0.05cm] (c) -- (b+c);
  \draw[line width =0.05cm, color=blue] (b+c) -- (a+b+c) ;
  \draw[line width =0.05cm] (b+c) -- (b+c+d);
  \draw[line width =0.05cm, color=blue] (b+c+d) -- (a+b+c+d) ;
  \draw[line width =0.05cm] (c) -- (c+d) -- (b+c+d);
  \draw[line width =0.05cm] (d) -- (c+d);
  \draw[line width =0.05cm, color=blue] (b+c+d+e) -- (a+b+c+d+e) ;
  \draw[line width =0.05cm] (b+c+d) -- (b+c+d+e);
  \draw[line width =0.05cm, color=red] (b+c+d+e) -- (c+d+e) -- (d+e) -- (e) --(0);
  \draw[line width =0.05cm] (c+d+e) -- (c+d);
  \draw[line width =0.05cm] (d+e) -- (d);

  \draw[style=dashed, line width =0.05cm, color=blue] (d) -- (a+d);

  \draw[style=dashed, line width =0.05cm, color=blue] (b+d) --
  (a+b+d);
  \draw[style=dashed] (d) -- (b+d);
  \draw[style=dashed] (b) -- (b+d);
  \draw[style=dashed] (a) -- (a+d) -- (a+b+d) -- (a+b);
  \draw[style=dashed, line width =0.05cm, color=blue] (c+d) -- (b+e);
  \draw[style=dashed] (b+e) -- (a+b+c+d);
  \draw[style=dashed, line width =0.05cm, color=blue] (c) -- (a+c);
  \draw[style=dashed] (b+e) -- (a+c) -- (a+b+c);
  \draw[style=dashed] (a) -- (a+c);
  \draw[style=dashed] (b+d) -- (b+c+d);
  \draw[style=dashed] (a+b+d) -- (a+b+c+d);
\end{tikzpicture}
\end{center}
\caption{A minimal nontrivial $2$-cycle for $n=5$. The hemicube is obtained from the Cayley graph 
over $\F_2^5$ with
generators $a,b,c,d,e$ by the identification $a+b+c+d+e=0$. The 16 vertices of
the graph are represented, together with the 10 faces that make up the
$2$-cycle. The two red paths are identified after twisting, $2$-faces with dashed edges are not in the cycle.}
\label{fig:2cycle}
\end{figure}
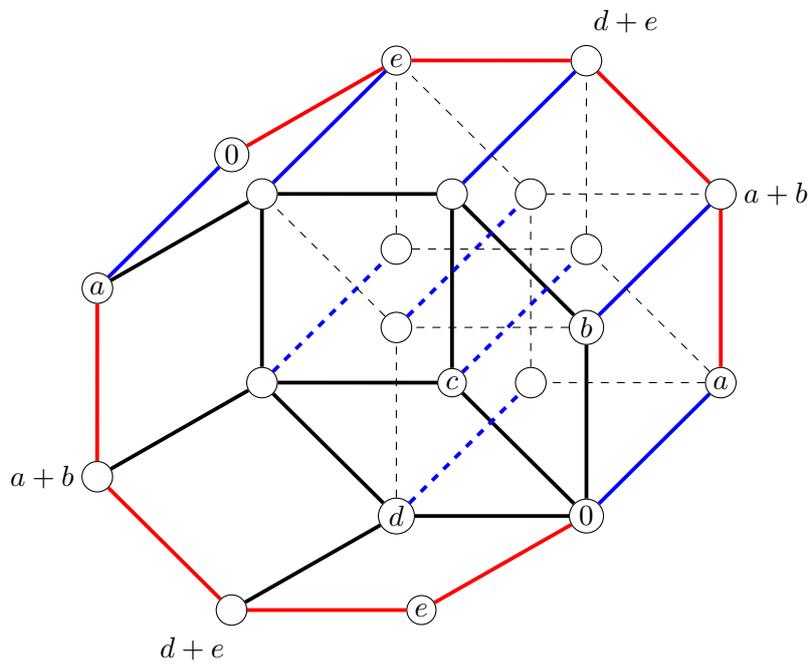
For $n=5$ and $p=2$, a $2$-cycle is represented on Figure~\ref{fig:2cycle}.

\medskip

Let us turn our attention to cocycles. We will show that a representative of a
$p$-cocycle in the hemicube is

\begin{align*}
S_p^n := \bigoplus_{a_i \in \{0,1\}, \, 0\leq  i \leq n-p-1} *\ldots *0a_1 \ldots a_{n-p-1}.
\end{align*}

In words,
 $S_p^n$ represents all the $p$-faces of the hemicube parallel to $*\dots*0\ldots0$. By parallel we mean that their $p$ stars are at the same positions.

\begin{lemma}\label{lem:cocycle}
The set $S_p^n$ represents a $p$-cocycle in the hemicube, that is, $\delta_p S_p^n = 0$.
\end{lemma}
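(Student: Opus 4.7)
The plan is to directly compute $\delta_p S_p^n$ from the definition and show it vanishes by pairing off terms, using the antipodal identification of the hemicube only in the final step.

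First I would fix a generic face $x_a = *^p 0 a_1 \ldots a_{n-p-1} \in S_p^n$ (with $a \in \{0,1\}^{n-p-1}$) and expand $\delta_p x_a$. Since $\delta_p$ only sums over non-star positions, there are $n-p$ contributing terms, which I split into two groups: the term $T_0(a) := *^{p+1} a_1 \ldots a_{n-p-1}$ obtained by turning the forced $0$ in position $p+1$ into a $*$, and the terms $T_i(a) := *^p 0\, a_1 \ldots a_{i-1} * a_{i+1}\ldots a_{n-p-1}$ for $i = 1, \ldots, n-p-1$ obtained by turning $a_i$ into a $*$.

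Next I would sum over $a \in \{0,1\}^{n-p-1}$ and handle the two groups separately. For the $T_i$ contributions with $i \geq 1$, the face $T_i(a)$ depends only on $a_1, \ldots, a_{i-1}, a_{i+1}, \ldots, a_{n-p-1}$ and not on $a_i$. Hence for each fixed setting of the other coordinates, the two summands with $a_i = 0$ and $a_i = 1$ yield identical $(p+1)$-faces and cancel in $\F_2$, so $\bigoplus_a T_i(a) = 0$ already in $C_{p+1}^n$, without invoking the quotient. This reduces the whole calculation to showing $\bigoplus_{a \in \{0,1\}^{n-p-1}} T_0(a) = 0$ in the hemicube.

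For the $T_0$ sum, the key observation is that $\overline{T_0(a)} = *^{p+1} \bar a_1 \ldots \bar a_{n-p-1} = T_0(\bar a)$, since the $*$'s are absorbing and the $0,1$ positions are flipped. Provided $n-p-1 \geq 1$ we have $a \neq \bar a$ for every $a$, so the involution $a \mapsto \bar a$ partitions $\{0,1\}^{n-p-1}$ into pairs, and under the hemicube identification each pair collapses to a single face counted twice in the sum. Thus $\bigoplus_a T_0(a) = 0$ in $C_{p+1}^n$, which together with the previous step gives $\delta_p S_p^n = 0$. The only real subtlety is this last pairing step, which fails in the degenerate case $p = n-1$ where $a$ is the empty string and the top face $*^n$ is self-antipodal; I would either restrict the statement to $p \leq n-2$ (which is what the rest of the paper uses) or treat that edge case separately by appealing to the projective-space description of the quotient.
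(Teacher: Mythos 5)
Your proof is correct and follows the same basic strategy as the paper's (expand $\delta_p$ face by face and cancel terms pairwise over $\mathbb{F}_2$), but it is actually more complete. The paper's displayed formula for $\delta S_p^n$ only records the terms obtained by starring one of the free coordinates $a_i$ — your $T_i(a)$ for $i\geq 1$ — and cancels them exactly as you do, by pairing $a_i=0$ with $a_i=1$; it silently omits the term $T_0(a)=*^{p+1}a_1\ldots a_{n-p-1}$ coming from starring the forced $0$ in position $p+1$. That omitted term is precisely the one whose cancellation genuinely requires the quotient: in the plain cube $\bigoplus_a T_0(a)\neq 0$, so $S_p^n$ is \emph{not} a cocycle there, and your pairing $T_0(a)$ with $T_0(\bar a)$ via the antipodal identification is the step that makes the lemma true in the hemicube. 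Your treatment of the degenerate case $p=n-1$ (where $*^n$ is self-antipodal, but also where there are no $(p+1)$-cells in the complex so $\delta_{n-1}=0$ anyway) is also appropriate. In short: same route, but you supply the one step the paper's write-up skips.
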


\begin{proof}
The coboundary of $S_p^n$ is the $(p+1)$-chain:
\begin{align*}
\delta S_p^n &=\bigoplus_{i=1}^{n-p-1} \bigoplus_{a_1, \ldots, a_{n-p-1} \in \{0,1\}^{n-p-2}} *\ldots*0a_1 \ldots a_{i-1}*a_{i+1}\ldots a_{n-p-1}.
\end{align*}
Observe that the same term $*\ldots*0a_1 \ldots a_{i-1}*a_{i+1}\ldots a_{n-p-1}$ appears twice, once from $*\ldots*0a_1 \ldots a_{i-1}0a_{i+1}\ldots a_{n-p-1}$ and a second time from $*\ldots*0a_1 \ldots a_{i-1}1a_{i+1}\ldots a_{n-p-1}$. The coboundary of $S_p^n$ is therefore null, which proves the claim. 
\end{proof}

\begin{lemma}
\label{lem:logical}
The cycle ${n\brack p}$ and the cocycle $S_p^n$ are homologically and
cohomologically nontrivial respectively. Furthermore they are of minimum weight
among nontrivial cycles and cocycles.
\end{lemma}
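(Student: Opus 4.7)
My plan is to establish both parts simultaneously using the intersection pairing $\langle \gamma, c\rangle := \sum_\sigma \gamma_\sigma c_\sigma \in \F_2$ between $p$-chains and $p$-cochains. The identity $\langle \partial \beta, c\rangle = \langle \beta, \delta c\rangle$ means this pairing descends to $H_p^n \times H^p \to \F_2$, a pairing of one-dimensional $\F_2$-spaces by Theorem~\ref{thm:hom}. I would first pin down this pairing by computing $\langle {n\brack p}, S_p^n\rangle$: by the description in Lemma~\ref{lem:cycle2}, ${n\brack p}$ contains exactly one cell with stars at positions $\{1,\ldots,p\}$, namely $*^p 0^{n-p}$, which also lies in $S_p^n = *^p 0\{0,1\}^{n-p-1}$; no other cell of ${n\brack p}$ has the right star pattern to be in $S_p^n$. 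Hence $\langle {n\brack p}, S_p^n\rangle = 1$, which forces ${n\brack p}$ and $S_p^n$ to be homologically and cohomologically nontrivial, settling the first assertion.

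For the minimum weight of nontrivial cycles, I would generalize $S_p^n$ to $S_T$, the set of all $p$-cells of the hemicube with star pattern $T$, for each $T\subset[n]$ with $|T|=p$. The argument of Lemma~\ref{lem:cocycle}, applied with the roles of coordinates permuted, shows $\delta S_T = 0$; and $\langle {n\brack p}, S_T\rangle = 1$ (one cell of ${n\brack p}$ per star pattern) makes each $S_T$ cohomologically nontrivial. An arbitrary nontrivial $p$-cycle $\gamma$ is homologous to ${n\brack p}$ by Theorem~\ref{thm:hom}, so $\langle \gamma, S_T\rangle = 1$ and hence $|\gamma \cap S_T|\geq 1$ for every $T$. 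Because the $\binom{n}{p}$ sets $S_T$ partition the $p$-cells of the hemicube according to star pattern, this yields $|\gamma| = \sum_T |\gamma \cap S_T|\geq \binom{n}{p}$.

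For the minimum weight of nontrivial cocycles I would exploit translates of ${n\brack p}$. For $v\in\F_2^n$, define $v+{n\brack p}$ cell-wise by XOR on non-star coordinates; this is well defined on the hemicube. Translation commutes with $\partial$, so each $v+{n\brack p}$ is a cycle; it again has exactly one cell per star pattern and so satisfies $\langle v+{n\brack p}, S_p^n\rangle = 1$, hence is nontrivial. A short case analysis shows that the only $v$'s stabilizing ${n\brack p}$ are $0$ and $1^n$ (which coincide in the hemicube), yielding $2^{n-1}$ distinct nontrivial translates. For any nontrivial cocycle $c$, one-dimensionality of cohomology gives $\langle c, v+{n\brack p}\rangle = 1$ for every $v$, so $\sum_v |c \cap (v+{n\brack p})|\geq 2^{n-1}$. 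On the other hand, a direct count shows that any cell $\sigma$ of star pattern $T$ lies in exactly $2^p$ translates: the relation $\sigma = v+c_T$ in the hemicube determines $v|_{T^c}$ up to the antipode $1^{T^c}$ while leaving $v|_T$ free, giving $2^{p+1}/2 = 2^p$ choices in $\F_2^n/\langle 1^n\rangle$. Combining the two counts, $2^p|c|\geq 2^{n-1}$, i.e.\ $|c|\geq 2^{n-p-1}$.

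The step I expect to require the most care is the cocycle argument: both the count of $2^{n-1}$ distinct translates (which hinges on showing the stabilizer of ${n\brack p}$ is exactly $\{0, 1^n\}$) and the double count (each cell lies in exactly $2^p$ translates) need to be handled explicitly. Both facts rest on the crisp property from Lemma~\ref{lem:cycle2} that ${n\brack p}$ visits every star pattern in exactly one cell, which makes the action of $\F_2^n$-translations on ${n\brack p}$ transparent enough to enumerate.
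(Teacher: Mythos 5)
Your proposal is correct and follows essentially the same route as the paper: nontriviality via the single intersection cell $*^p0^{n-p}$ of ${n\brack p}$ and $S_p^n$, the cycle bound via the $\tbinom{n}{p}$ disjoint canonical cocycles $S_T$ (one per star pattern), and the cocycle bound via the $2^{n-1}$ translates of ${n\brack p}$ with the same $2^p$-fold double count. The only difference is that you make explicit two points the paper asserts without comment (the pairing descending to (co)homology, and the triviality of the translation stabilizer of ${n\brack p}$), which is a reasonable refinement rather than a new idea.
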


\begin{proof}
We first notice that ${n\brack p}$ and $S_p^n$ meet in exactly one $p$-cell of the hemicube, namely $\{** \ldots * 00\ldots0, ** \ldots * 11\ldots1\}$. This means in particular that ${n\brack p}$ and $S_p^n$ 
are non-orthogonal vectors of $C_p^n$. Both are therefore nontrivial since any
cycle is orthogonal to all trivial cocycles (coboundaries) and any cocycle is
orthogonal to all trivial cycles (boundaries).

We make the further remark that if a nontrivial cycle were orthogonal to a given
nontrivial cocycle, then {\em all} nontrivial cycles, and hence {\em all}
cycles must be also be orthogonal to this given cocycle (because the homology of the hemicube has
dimension $1$). But the orthogonal space of the cycle space is the space of
trivial cocycles, a contradiction. Therefore any nontrivial cycle/nontrivial
cocycle pair must consist of non-orthogonal vectors.

To establish that ${n\brack p}$ is a nontrivial cycle of minimum weight,
we find $\tbinom{n}{p}$ nontrivial cocycles with disjoint supports. 
These are obtained similarly to $S_p^n$ by placing the $p$ stars in all
possible $\tbinom{n}{p}$ positions, instead of the first $p$ coordinates.
Since a nontrivial cycle must be non-orthogonal to, and therefore
intersect, all of them, this proves that its
weight is at least $\tbinom{n}{p}$.

The reasoning is almost similar for the minimum nontrivial cocycle weight (and can be seen as an application of
Theorem 2.8 in \cite{AC15}). For $x \in \{0,1\}^n$, define $T_x {n \brack p}$ to
be the cycle that is the translate of ${n \brack p}$ by the vector $x$. 
This means that $T_x {n \brack p}$ is obtained from  ${n \brack p}$ by replacing
every $p$-face $y$ in ${n \brack p}$ by $y+x$.
Note that any $p$-face $z$ belongs to exactly $2^{p}$ translates 
$T_x {n \brack p}$ (all such $x$ may differ only in the star coordinates of
$z$). All translates $T_x {n \brack p}$ are nontrivial cycles since they are not orthogonal to the nontrivial cocycle $S_p^n$. Indeed they intersect at exactly one $p$-face since $T_x {n \brack p}$ contains one $p$-face for each positions of the $p$ stars and $S_p^n$ contains all the $p$-faces that have the $p$ stars in the first $p$ positions.
Therefore any nontrivial cocycle must be non-orthogonal to, and in particular intersect, all
of them. Since any of its $p$-faces can belongs to exactly $2^p$ translates $T_x
{n \brack p}$ and since the total number of distinct translates $T_x {n \brack p}$ 
in the hemicube complex is $2^{n-1}$, because translating by $x$ or
$\overline{x}$ is equivalent, we get that the weight of any nontrivial cocycle is
at least $2^{n-p-1}$. Since this is the weight of $S_p^n$, the bound is tight.
\end{proof}

We immediately get the value of the minimum distance. 
\begin{corol}
The minimum distance of the quantum code corresponding to $p$-chains in the
$n$-dimensional hemicube is 
\begin{align*}
d_{\min} = \min \left\{\tbinom{n}{p}, 2^{n-p-1}\right\}.
\end{align*}
\end{corol}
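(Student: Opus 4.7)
The corollary is an immediate consequence of Lemma \ref{lem:logical} combined with the standard characterization of the minimum distance of a CSS code. The plan is simply to assemble these pieces.

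Recall that for a CSS code arising from the chain complex $C_{p+1}^n \xrightarrow{\partial_{p+1}} C_p^n \xrightarrow{\partial_p} C^n_{p-1}$, one has $\cC_X = \ker \partial_p$ and $\cC_Z = (\mathrm{Im}\,\partial_{p+1})^\perp$, and the minimum distance is $d_{\min} = \min(d_X, d_Z)$ where
\[
d_X = \min\{\|w\| : w \in \cC_X \setminus \cC_Z^\perp\}, \qquad d_Z = \min\{\|w\| : w \in \cC_Z \setminus \cC_X^\perp\}.
\]
By definition of $\cC_X$ and $\cC_Z$, an element of $\cC_X \setminus \cC_Z^\perp$ is exactly a $p$-cycle that is not a $p$-boundary, i.e., a representative of a nontrivial homology class; likewise an element of $\cC_Z \setminus \cC_X^\perp$ is a representative of a nontrivial cohomology class. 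Hence $d_X$ is the minimum weight of a homologically nontrivial $p$-cycle and $d_Z$ is the minimum weight of a cohomologically nontrivial $p$-cocycle in the hemicube complex.

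Now I invoke Lemma \ref{lem:logical}: the cycle ${n\brack p}$ is homologically nontrivial of minimum weight, and by Lemma \ref{lem:cycl} its weight is $\tbinom{n}{p}$, so $d_X = \tbinom{n}{p}$. Similarly, the cocycle $S_p^n$ is cohomologically nontrivial of minimum weight, and its weight is clearly $|S_p^n| = 2^{n-p-1}$ (there are $2^{n-p-1}$ binary strings of length $n-p-1$ indexing its $p$-cells), so $d_Z = 2^{n-p-1}$.

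Taking the minimum of the two yields $d_{\min} = \min\{\tbinom{n}{p}, 2^{n-p-1}\}$, as claimed. No nontrivial step remains since Lemma \ref{lem:logical} has already absorbed the actual difficulty, namely exhibiting $\tbinom{n}{p}$ disjoint cohomologically nontrivial cocycles (giving the lower bound on $d_X$) and the family of translates $T_x{n\brack p}$ (giving the lower bound on $d_Z$). The main potential pitfall, which is avoided by the lemma's construction, is ensuring that the ``tester'' families of (co)cycles against which one argues non-orthogonality are themselves nontrivial and that the intersection counting is tight; both are handled in the preceding proof.
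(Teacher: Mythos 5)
Your proof is correct and follows exactly the paper's route: the corollary is stated in the paper as an immediate consequence of Lemma~\ref{lem:logical} (together with Lemma~\ref{lem:cycl} and the CSS dictionary identifying $d_X$ and $d_Z$ with the minimum weights of homologically nontrivial cycles and cohomologically nontrivial cocycles), which is precisely what you have written out. Your version merely makes the implicit translation explicit; there is no gap and no difference in approach.
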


Let $h(x) = -x \log x -(1-x)\log (1-x)$ be the binary entropy function where here and throughout, the logarithm is taken in base 2.
Let us define $\alpha^* \approx 0.227$ to be the unique nonzero solution of $h(\alpha^*) = 1-\alpha^*$.
Then choosing $p = \left\lfloor \alpha^* n\right\rfloor$ yields a quantum code where both minimum distances are approximately equal and satisfy:
\begin{align*}
d_{\min} = \Theta(\sqrt{N}). 
\end{align*}
See \cite{aud13} for a detailed analysis of the constant in this equation. Specifically, Proposition 5.5 of Ref.~\cite{aud13} shows that one can extract a subfamily of codes with $d_{\min} \geq \frac{\sqrt{N}}{1.62}$.

Let us finally mention that the number $m$ of constraints in the hemicubic code is of the same order as the number $N$ of qubits since we have
\begin{align*}
N &= 2^{n-p-1}\tbinom{n}{p}\\
m &= 2^{n-p}\tbinom{n}{p+1} + 2^{n-p-2}\tbinom{n}{p-1} = \left(\frac{2(n-p)}{p+1} + \frac{p}{2(n-p+1)} \right) N\\
&= O(N).
\end{align*}
We note that this property is welcome in the physics context of the Local Hamiltonian since the relevant Hamiltonians in Nature should be such that every particle is only acted upon by a reasonably small number of local terms, but that it is much easier to obtain local testability when allowing the number of constraints to be larger than the number of (qu)bits. In fact, such redundancy among constraints is even required to get local testability in the classical case \cite{BGK10}.

\section{Quotient by general linear codes}

\label{sec:linear}

The construction of the quantum code associated with the hemicube can be readily
generalized by realizing that identifying antipodal points in the hypercube
$Q^n$ is in fact equivalent to taking the quotient by a repetition code of
length $n$. It becomes then natural to consider quotients by more general classical linear codes, provided that their minimum distance is large enough (it should be larger than $p+2$). In particular, the quotient of the cube by a classical code of dimension $k$ will yield a quantum code of dimension $\tbinom{k+p-1}{p}$.

\subsection{Dimension of a hemicubic code: algebraic proof}
\label{subsec:hemi}

The $n$-hemicube is the quotient of the cube by the antipodal map. We have already stated that since the $n$-hemicube has the topology of the projective $n$-space, the quantum code obtained from identifying qubits with $p$-faces of the $n$-hemicube has as many logical qubits as the rank of $H_p(\mathbbm{RP}^n, \mathbbm{F}_2)$: it has one logical qubit for $1 \leq p \leq n-2$. In this section, we give a more algebraic proof of this result using the long exact sequence of \S \ref{long exact sequence}. This proof generalizes to other quotients of the $n$-cube. \\

There is a natural projection map from $p$-faces of the cube to $p$-faces of the hemicube defined by sending a face $f_p$ of the $n$-cube to its equivalence class $\floor{f_p}$ in the hemicube:
\[
\forall f_p \in F_p(Q^n), \pi_p(f_p) = \floor{f_p}.\]
For example for $n=3$ and $p=0$, $\pi_p(001) = \pi_p(110) = \floor{001}$. It is straightforward to extend $\pi_p$ by linearity to $\pi_{chain, p}$ from $C_p (Q^n)$ to $C_p (Q^n/\cC_r)$. For example, $\pi_{chain,p}(000 \oplus 001 \oplus 010 \oplus 111) = \floor{001} \oplus \floor{010}$.

Something confusing in our construction is that on the one hand we take quotients of the cube by classical codes and on the other hand homology classes classes are defined as the quotient of a coarse-grained equivalence class (cycles) by a fine-grained equivalence class (boundaries). To minimize confusion between different quotients, we denote classical codes equivalence classes with the floor symbol: $\floor{f_p}$ and homology classes with brackets: $h = [c]$ where the chain $c$ is a sum of $p$-faces: $c = \floor{f_p} \oplus \floor{g_p}$ for instance. \\

We also define an injective map from the hemicube to the cube. This map is directly defined at the level of chains:
\[
\forall \, c = \bigoplus \floor{f_p} \in C_p (Q^n/\cC_r), i_{chain,p}(\bigoplus \floor{f_p}) = \bigoplus f_p \oplus (f_p + 1\ldots 1).\]
We denote by $f_p + 1\ldots 1$ the translate of $f_p$ by the non-zero codeword of the repetition code: $1\ldots 1$. For example,  $i_{chain,p}(\floor{000} \oplus \floor{001} \oplus \floor{010}) = 000 \oplus 111 \oplus 001 \oplus 110 \oplus 010 \oplus 101$. \\

\begin{lemma}
Recall from subsection \ref{long exact sequence} that $C_\bullet(Q^n)$ and $C_\bullet(Q^n/ \cC_r )$ denote the chain complexes of the $n$-cube and the $n$-hemicube. $i_{complex}$ and $\pi_{complex}$ are respectively the collection of maps $i_{chain, p}$ and $\pi_{chain, p}$.  The sequence of chain complexes
\begin{align*}
0 \rightarrow C_\bullet (Q^n/ \cC_r ) \xrightarrow{i_{complex}}
C_\bullet (Q^n) \xrightarrow {\pi_{ccomplex}} C_\bullet (Q^n/ \cC_r ) \rightarrow 0
\end{align*}
is a short exact sequence of chain complexes.
\end{lemma}

\begin{proof}
\begin{itemize}
\item $i_{chain,p}$ commutes with the boundary operator $\partial$.
\item $\pi_{chain,p}$ commutes with the boundray operator $\partial$.
\item $i_{chain,p}$ is injective.
\item $\pi_{chain,p}$ is surjective.
\item $\Ima i_{chain,p} = \ker \pi_{chain,p}$
\end{itemize}
\end{proof}

Theorem \ref{thm long} associates a long exact sequence of homology groups to this short exact sequence of chain complexes:
\begin{align*}
\ldots \rightarrow H_{p}(Q^n/ \cC_r) \rightarrow H_{p}(Q^n) \rightarrow
H_{p}(Q^n/ \cC_r) \rightarrow H_{p-1}(Q^n/ \cC_r) \rightarrow H_{p-1}(Q^n) \rightarrow \ldots \; .
\end{align*}

Since $H_{p}(Q^n) = 0$ for $1 \leq p \leq n$, the exact sequence breaks into small pieces:
\begin{align*}
\forall p \in  \{1, \ldots, n-1\} \, , \, 0 \rightarrow H_{p}(Q^n/\cC_r)
\rightarrow H_{p-1}(Q^n/ \cC_r) \rightarrow 0.
\end{align*}

\noindent
$H_{0}(Q^n/ \cC_r )$ has dimension 1 since the hemicube is path-connected. By immediate induction, $H_{p}(Q^n/ \cC_r )$ has dimension 1 for $1 \leq p \leq  n-1$. \\

Therefore the quantum code constructed by identifying qubits with $p$-faces of the hemicube
for $1 \leq p \leq  n-1$ has dimension $1$. This completes the algebraic proof of Theorem \ref{thm:hom}. \\

In this subsection, we have deduced the homology groups of the hemicube from the homology groups of the cube. In the next sections, we extend by induction this reasoning to more general quotients.

\subsection{Long exact sequences for a generalized hemicube}

To define a generalized hemicubic code we take the quotient of the cube
$Q^n$ by a classical code $\cC$ with parameters $[n, k, d]$ with $d \geq p+2$, thus creating the
quotient polytope $Q^n / \cC$ where faces of the cube are identified when they
are indexed by $n$-tuples that differ by a codeword of~$\cC$. The quantum
code is then associated to the polytope in the usual way
(qubits correspond to $p$-faces of the quotient.)

\begin{lemma}
\label{long_seq}
Let $\cC_k$ be an $[n, k, d_k]$ classical code and $\cC_{k+1}$ be an $[n, k+1, d_{k+1}]$ classical code such that $\cC_{k+1}$ contains $\cC_k$. \\
For $p \leq d_{k+1}$ and $p-1 \geq 0$, the following long sequence of homology groups is exact:
\begin{align*}
\ldots  \xrightarrow{\partial_{hom}} &H_{p}(Q^n/ \cC_{k+1})
\xrightarrow{i_{hom,p}} H_{p}(Q^n/\cC_k) \xrightarrow{\pi_{hom,p}} H_{p}(Q^n/ \cC_{k+1}) \xrightarrow{\partial_{hom}} \\
&H_{p-1}(Q^n/ \cC_{k+1}) \xrightarrow{i_{hom,p-1}} H_{p-1}(Q^n/\cC_k)
\xrightarrow{\pi_{hom,p-1}} H_{p-1}(Q^n/ \cC_{k+1}) \xrightarrow{\partial_{hom}} \ldots 
\end{align*}
For $p \geq 0$ and $p+1 \leq d_{k+1}$, the following long sequence of cohomology groups is exact:
\begin{align*}
\ldots  \xrightarrow{\delta_{cohom}} &H^{p}(Q^n/ \cC_{k+1})
\xrightarrow{i_{cohom,p}} H^{p}(Q^n/ \cC_k) \xrightarrow{\pi_{cohom,p}}
H^{p}(Q^n/ \cC_{k+1}) \xrightarrow{\delta_{cohom}} \\
&H^{p+1}(Q^n/ \cC_{k+1}) \xrightarrow{i_{cohom,p+1}} H^{p+1}(Q^n/\cC_k)
\xrightarrow{\pi_{cohom,p+1}} H^{p+1}(Q^n/ \cC_{k+1}) \xrightarrow{\delta_{cohom}} \ldots 
\end{align*}
\end{lemma}

\begin{proof}
For $p \geq 0$, we can construct
the following short exact sequence of $p$-chain vector spaces: 
\begin{equation}
\label{eq:ipi}
0 \rightarrow C_p (Q^n/ \cC_{k+1}) \xrightarrow{i_{chain,p}} C_p (Q^n/ \cC_{k})
\xrightarrow {\pi_{chain,p}} C_p (Q^n/ \cC_{k+1}) \rightarrow 0
\end{equation}

\noindent
where, like in subsection \ref{subsec:hemi}, $\pi_{chain,p}$ is the linear extension at the level of chains of the projection of $p$-faces from $Q^n/ \cC_{k}$ to
 $Q^n/ \cC_{k+1}$  and $i_{chain,p}$ lifts each $p$-face of a $p$-chain of $C_p (Q^n/\cC_{k+1})$ to the sum of the two corresponding $p$-faces in $C_p (Q^n/\cC_{k})$. The subscripts $chain$, $complex$, $hom$ and $cohom$ indicate that we are considering the corresponding variants of $i$ and $\pi$.

Like in subsection \ref{subsec:hemi}, the sequences \eqref{eq:ipi} for $p \geq 0$ define a short exact sequence of chain
complexes since $i_{chain,p}$ and $\pi_{chain,p}$ commute with the boundary
operator $\partial$: 
\begin{equation}
\notag
0 \rightarrow C_\bullet (Q^n/ \cC_{k+1}) \xrightarrow{i_{complex}} C_\bullet
(Q^n/ \cC_{k}) \xrightarrow {\pi_{complex}} C_\bullet (Q^n/ \cC_{k+1}) \rightarrow 0.
\end{equation}

\noindent
The associated long exact sequence of homology groups gives the first part of the result. \\

The derivation of the long exact sequence of cohomology groups is formally identical. Chains and cochains are canonically identified: they both can be considered as subsets of the set of faces. We define $i_{cochain,p}$ to be equal to $i_{chain,p}$ and $\pi_{cochain,p}$ to be equal to $\pi_{chain,p}$. Since $i_{cochain,p}$ and $\pi_{cochain,p}$ commute with the coboundary operator $\delta$ (the transpose of the boundary operator $\partial$), we have the following short exact sequence of cochain complexes: 
\begin{equation}
\notag
0 \rightarrow C^\bullet (Q^n/ \cC_{k+1}) \xrightarrow{i_{cocomplex}} C^\bullet
(Q^n/ \cC_{k}) \xrightarrow {\pi_{cocomplex}} C^\bullet (Q^n/ \cC_{k+1}) \rightarrow 0.
\end{equation}

\noindent
The associated long exact sequence of cohomology groups gives the second part of the result.
\end{proof}

We highlight the fact that even though $i_{chain, p}$ is the same map as $i_{cochain, p}$, $i_{hom,p}$ can be (and actually is as we will show later) very different from $i_{cohom,p}$. Similarly $\pi_{hom,p}$ is very different from $\pi_{cohom,p}$.

\subsection{Cohomology basis and short exact sequence in cohomology for a
generalized hemicube}
\label{cohomology basis}

The long exact sequence of cohomology groups is actually easier to manipulate
than its homology counterpart. In this subsection we will define a basis of the cohomology group $H^{p}(Q^n/ \cC_k)$ with a prefered representative for each element of the basis. We will see the consequences of this basis for the long exact sequence of cohomology groups.

It will prove useful to denote a cohomology class $cohom$ by one of its representative, \textit{i. e.} by a cocycle $cocyc$ which is not a boundary and which belongs to $cohom$. In this case we have that $cohom = [cocyc]$. The additional knowledge of a preferred representative for each element of the cohomology basis will be crucial.

\begin{defn}
The $p$-direction of a $p$-face $f_p \in F_p(Q^n/\cC_k)$ is the subset of coordinates where $f_p$ has a star: $\{ i \in \{1, \ldots  , n\} \, | \, f_p(i) = *\}$.
\end{defn}

\begin{defn}[Canonical cocycle]
For a $p$-direction $I \subset \{1, \ldots, n\}$ with $|I|=p$, we call the {\em
canonical cocycle of $p$-direction $I$}
the sum of all the $p$-faces in $F_p(Q^n/\cC_k)$ having this $p$-direction~$I$. We denote it by $\cocyc^{I,p,k}$. It is straightforward to verify that it is indeed a cocycle.
\end{defn}

For example for the cube with $n = 3$,  $p = 1$ and the $1$-direction $I = \{3\}$, the canonical cocycle is $\cocyc^{I,1,0} = (00*) \oplus (01*) \oplus (10*) \oplus (11*)$. For the hemicube with the same parameters, it is $\cocyc^{I,1,0} = \floor{00*} \oplus \floor{01*}$. Note that in the hemicube $\floor{00*}$ is the same $1$-face as $\floor{11*}$ and we could write $\cocyc^{I,1,0} = \floor{11*} \oplus \floor{01*}$ as well.
\begin{theorem}
The cohomology group $H^{p}(Q^n/ \cC_{k})$ has a basis such that each basis element is represented by a canonical cocycle.
\end{theorem}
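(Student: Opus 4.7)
The plan is to proceed by induction on the dimension $k$ of the classical code $\cC_k$. The base case $k=1$ is the hemicube treated in Section~\ref{sec:code}: Lemma~\ref{lem:cocycle} shows that every canonical cocycle is indeed a cocycle, Lemma~\ref{lem:logical} shows that it is non-trivial, and since $H^p(Q^n/\cC_1)$ is one-dimensional, any single canonical cocycle class provides a basis.

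The inductive step leverages the long exact sequence in cohomology derived in Section~\ref{sec:linear}, together with a very clean interaction between canonical cocycles and the maps $i$ and $\pi$. The minimum distance hypothesis $d \ge p+2$ guarantees that each $p$-face of $Q^n/\cC_{k+1}$ has exactly two preimages in $Q^n/\cC_k$ sharing the same $p$-direction, from which one reads off $i_p(\cocyc^{I,p,k+1}) = \cocyc^{I,p,k}$ and $\pi_p(\cocyc^{I,p,k}) = 2\,\cocyc^{I,p,k+1} = 0$ over $\F_2$. Assuming the theorem for $\cC_k$ at every level $p$, canonical cocycles span $H^p(Q^n/\cC_k)$, so the induced map $\pi_{\mathrm{cohom},p}$ vanishes identically for every $p$. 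Inserting this into the long exact sequence collapses it into short exact sequences
\[ 0 \longrightarrow H^{p-1}(Q^n/\cC_{k+1}) \xrightarrow{\delta_{\mathrm{cohom}}} H^p(Q^n/\cC_{k+1}) \xrightarrow{i_{\mathrm{cohom},p}} H^p(Q^n/\cC_k) \longrightarrow 0 \]
for every $p\ge 1$, which also reproduces the expected dimension recursion.

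I would then run a secondary induction on $p$ to assemble a basis of $H^p(Q^n/\cC_{k+1})$. By the outer induction hypothesis, $H^p(Q^n/\cC_k)$ has a basis of classes of canonical cocycles; since $i_{\mathrm{cohom},p}[\cocyc^{I,p,k+1}] = [\cocyc^{I,p,k}]$, lifting this basis back through the surjective map $i_{\mathrm{cohom},p}$ supplies canonical cocycle classes in $H^p(Q^n/\cC_{k+1})$. By the inner induction hypothesis at level $p-1$, $H^{p-1}(Q^n/\cC_{k+1})$ itself admits a basis of canonical cocycle classes, and the short exact sequence completes the basis of $H^p(Q^n/\cC_{k+1})$ with their $\delta_{\mathrm{cohom}}$-images.

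The main obstacle is to show that each such image $\delta_{\mathrm{cohom}}[\cocyc^{J,p-1,k+1}]$ is again representable by a canonical cocycle. Using the explicit description of the connecting morphism recalled in Section~\ref{long exact sequence}, I would lift $\cocyc^{J,p-1,k+1}$ to a cochain $b \in C^{p-1}(Q^n/\cC_k)$ by selecting, for each $(p-1)$-face with direction $J$ in $Q^n/\cC_{k+1}$, one preferred preimage according to a consistent rule (for instance, the representative containing the reference vertex $0^n$); then $\pi(\delta b) = \delta\pi(b) = \delta\cocyc^{J,p-1,k+1} = 0$, so $\delta b = i(a)$ for a unique $a \in C^p(Q^n/\cC_{k+1})$ whose class equals $\delta_{\mathrm{cohom}}[\cocyc^{J,p-1,k+1}]$. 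The hard step is a combinatorial analysis of $a$ showing that, modulo a coboundary, it decomposes as an $\F_2$-linear combination of canonical cocycles $\cocyc^{I,p,k+1}$ with $J \subsetneq I$; the hypothesis $d \ge p+2$ is precisely what ensures that the orbits under the newly adjoined codeword remain disjoint and that the coboundary of the chosen half-lift is organized cleanly by $p$-direction, without unintended cancellations across directions. Once this structural statement is in place, pruning the resulting spanning set of canonical cocycle classes to a linearly independent subfamily yields the desired basis.
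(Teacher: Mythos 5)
Your overall strategy is the paper's: induct on $k$, observe that $\pi_{\mathrm{cochain}}$ kills canonical cocycles so that $\pi_{\mathrm{cohom}}=0$ by the induction hypothesis, collapse the long exact sequence in cohomology to the short exact sequences
\[ 0 \rightarrow H^{p-1}(Q^n/\cC_{k}) \xrightarrow{\delta_{\mathrm{cohom}}} H^{p}(Q^n/\cC_{k}) \xrightarrow{i_{\mathrm{cohom}}} H^{p}(Q^n/\cC_{k-1}) \rightarrow 0, \]
lift the canonical basis of the right-hand term through $i_{\mathrm{cohom}}$ using $i_{\mathrm{cochain}}(\cocyc^{I,p,k})=\cocyc^{I,p,k-1}$, and complete it with $\delta_{\mathrm{cohom}}$-images of the canonical basis of the left-hand term. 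All of that matches the paper.

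However, the step you explicitly defer as ``the hard step'' --- showing that $\delta_{\mathrm{cohom}}[\cocyc^{J,p-1,k}]$ is represented by a canonical cocycle --- is precisely the content of the proof, and it is a genuine gap as written. Your proposed lift (pick the preimage of each face containing the reference vertex $0^n$) does not lead to a clean coboundary, and you are then forced into the vague claim that $\delta b$ decomposes ``modulo a coboundary'' into canonical cocycles with directions containing $J$, followed by a pruning argument. The paper avoids all of this with one combinatorial choice: adjust the basis so that some coordinate $j\in\mathrm{Supp}(c_k)$ lies in the support of no other basis codeword $c_1,\dots,c_{k-1}$. Then the $j$-th coordinate of a face of $Q^n/\cC_{k-1}$ is well defined, $j\notin J$, and the preimage of $\cocyc^{J,p-1,k}$ under $\pi_{\mathrm{cochain}}$ obtained by keeping exactly the faces with $j$-th coordinate $0$ has coboundary equal to $i_{\mathrm{cochain}}(\cocyc^{J\cup\{j\},p,k})$ \emph{on the nose}. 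Hence $\delta_{\mathrm{cohom}}[\cocyc^{J,p-1,k}]=[\cocyc^{J\cup\{j\},p,k}]$ is a single canonical class, the image of the basis is automatically a free family by injectivity of $\delta_{\mathrm{cohom}}$, and no pruning is needed. Without this (or an equivalent) explicit computation, your argument does not establish the theorem.
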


\begin{proof}
We establish the claim by induction over $(p+k)$. The base case was proved in Lemmas \ref{lem:cocycle}
and~\ref{lem:logical}.

Let $(c_1, \ldots, c_{k-1})$ be a basis of $\cC_{k-1}$ and $c_k$ be such that $(c_1, \ldots, c_{k})$ is a basis of $\cC_k$. We consider a fixed position $j \in \text{Supp}(c_k)$. $\text{Supp}(c_k)$ is the set of positions such that $c_k$ has a $1$ at this position. We can assume without loss of generality that for $1 \leq i \leq k-1$, $j \notin \text{Supp}(c_i)$ (just add $c_k$ to $c_i$ if needed). 

By the induction hypothesis, the cohomology group $H^{p}(Q^n/ \cC_{k-1})$ has a
basis such that each basis element is represented by a canonical cocycle. Since
$\pi_{cochain}$ applied to a canonical cocycle gives the empty cochain
$\varnothing$, $\pi_{cohom}$ is zero on $H^{p}(Q^n/ \cC_{k-1})$. For the same
reason $\pi_{cohom}$ is zero on $H^{p-1}(Q^n/ \cC_{k-1})$ and the long exact sequence in cohomology breaks into the following short exact sequences:
\begin{equation}
\notag
0 \rightarrow H^{p-1}(Q^n/ \cC_{k}) \xrightarrow{\delta_{cohom}} H^{p}(Q^n/
\cC_{k}) \xrightarrow{i_{cohom}}  H^{p}(Q^n/ \cC_{k-1}) \rightarrow 0.
\end{equation}

We will use the above short exact sequence and apply the induction hypothesis to
the cohomology groups
$H^{p}(Q^n/ \cC_{k-1})$ and $H^{p-1}(Q^n/ \cC_{k})$:

Let $I \subset \{1, \ldots, n\}$ with $|I|=p$ be a $p$-direction such that
$[\cocyc^{I,p,k-1}]$ is an element of the basis of $H^{p}(Q^n/ \cC_{k-1})$. Since
$i_{cochain}(\cocyc^{I,p,k}) = \cocyc^{I,p,k-1}$, $i_{cohom}([\cocyc^{I,p,k}]) =
[\cocyc^{I,p,k-1}]$. Therefore the basis of cohomology classes of $H^{p}(Q^n/
\cC_{k-1})$ represented by canonical cocycles has a free family of preimages by
$i_{cohom}$ represented by canonical cocycles of $H^{p}(Q^n/ \cC_{k})$.

Let $I \subset \{1, \ldots, n\}$ with $|I|=p-1$ be a $(p-1)$-direction such that
$[\cocyc^{I,p-1,k}]$ is an element of the basis of $H^{p-1}(Q^n/ \cC_{k})$. $j
\notin I$ because $\forall x \in \cC_{k-1}, j \notin \text{Supp}(x)$. Also
because $\forall x \in \cC_{k-1}, j \notin \text{Supp}(x)$, it makes sense to
say that the $j^{th}$ coordinate of a $p$-face of $\cocyc^{I,p-1,k}$ is zero or
one. Keeping only the faces of $\cocyc^{I,p-1,k}$ whose $j^{th}$ coordinate is zero gives a preimage of $\cocyc^{I,p-1,k}$ by $\pi_{cochain}$. 
Applying $\delta_{cochain}$ to this preimage gives $i_{cochain}(\cocyc^{I \cup
\{j\},p,k})$. Since $\delta_{cohom}$ corresponds to $i_{cochain}^{-1} \circ
\delta_{cochain} \circ \pi_{cochain}^{-1}$ at the level of cochains, we obtain
that $\delta_{cohom}([\cocyc^{I,p-1,k}]) = [\cocyc^{I \cup \{j\},p,k}]$. Therefore
the basis of cohomology classes of $H^{p-1}(Q^n/ \cC_{k})$ represented by
canonical cocycles is sent by $\delta_{cohom}$ to a free family of
cohomologically classes represented by canonical cocycles of $H^{p}(Q^n/ \cC_{k})$.

The exactness of the short exact sequence implies that the concatenation of
these two free families forms a basis of $H^{p}(Q^n/ \cC_{k})$.
\end{proof}

As side products, we obtain that $\pi_{cochain,p,k} = 0$ and that the long exact sequences in cohomology break into pieces of small exact sequences:

\begin{equation}
\notag
0 \leftarrow H^{p}(Q^n/ \cC_{k-1}) \xleftarrow{i_{cohom}} H^{p}(Q^n/ \cC_{k})
\xleftarrow{\delta_{cohom}}  H^{p-1}(Q^n/ \cC_{k}) \leftarrow 0.
\end{equation}

We wrote the above short exact sequence in cohomology from right to left to prepare its adjunction property with its homology counterpart.

\subsection{Adjunction and short exact sequence in homology for a generalized
hemicube}

The following ``quasi-equations'' depicted with $\approx$ summarise how the connecting homology and cohomology morphisms are constructed from applications at the level of chains and cochains:

\begin{align}
\label{conecting_morphism_1}
\partial_{hom} &\approx i_{chain}^{-1} \circ \partial_{chain} \circ \pi_{chain}^{-1}, \\
\label{conecting_morphism_2}
\delta_{cohom} &\approx i_{cochain}^{-1} \circ \delta_{cochain} \circ \pi_{cochain}^{-1}.
\end{align}

\noindent
On the right hand side of $\approx$ are (co)chain morphisms and preimages of chain morphisms. On the left hand side of $\approx$ are (co)homology group morphisms. $\approx$ means that if we consider a (co)chain representing a (co)homology class, any preimage or image by the right hand side (co)chain morphisms yields a representative of the image by the left hand side (co)homology morphism. This is true by construction of the connecting (co)homology morphisms $\partial_{hom}$ and $\delta_{cohom}$ as it was described in subsection \ref{long exact sequence}. 

\begin{lemma}
\label{adjoints}
$\pi_{chain}$ and $i_{cochain}$ are adjoint with respect to $\langle \_ \, , \_ \, \rangle$, the standard bilinear form over $\mathbb{F}_2$ (element-wise xor). Since chains and cochains are canonically identified, $\pi_{cochain}$ and $i_{chain}$ are also adjoint.
\end{lemma}

\begin{proof}
By linearity it is sufficient to prove it for chains made of a single $p$-face. Two such chains intersect with respect to the standard bilinear form if and only if they are equal. Let $f_{p,k-1}$ be a $p$-face of $Q^n/ \cC_{k-1}$ and $f_{p,k}$ be a $p$-face of $Q^n/ \cC_{k}$.
\begin{align*}
&\langle \pi_{chain}(f_{p,k-1}), f_{p,k} \rangle = 1 \\
\Leftrightarrow \quad &\pi_{chain}(f_{p,k-1}) = f_{p,k} \\
\Leftrightarrow \quad &i_{cochain} \circ \pi_{chain}(f_{p,k-1})  = i_{cochain} (f_{p,k}) \\
\Leftrightarrow \quad &f_{p,k-1} \oplus (f_{p,k-1} + c_k)  = i_{cochain} (f_{p,k}) \\
\Leftrightarrow \quad &\langle f_{p,k-1}, i_{cochain}(f_{p,k}) \rangle = 1
\end{align*}
The inverse direction in the last equivalence comes from the fact that $i_{cochain}(f_{p,k})$ is always the sum of a $p$-face of $Q^n/ \cC_{k}$ and its translation by $c_k$.
\end{proof}

\begin{lemma}
The long exact sequences in homology break into pieces of small exact sequences:
\begin{equation}
\notag
0 \rightarrow H_{p}(Q^n/ \cC_{k-1}) \xrightarrow{\pi_{hom}} H_{p}(Q^n/ \cC_{k})
\xrightarrow{\partial_{hom}} H_{p-1}(Q^n/ \cC_{k}) \rightarrow 0.
\end{equation}
\end{lemma}

\begin{proof}
By definition, $\partial_{chain}$ and $\delta_{cochain}$ are adjoint.

We also know from standard homology theory that the bilinear form $\langle \_ \, , \_ \, \rangle$ is well defined at the level of homology and cohomology groups. Using \cref{conecting_morphism_1,conecting_morphism_2}, we see that the connecting morphisms $\partial_{hom}$ and $\delta_{cohom}$ are adjoint at the level of homology and cohomology groups.

Using Lemma \ref{adjoints}, it is straightforward to prove that $\pi_{hom}$ and $i_{cohom}$ are adjoint because they correspond to $\pi_{chain}$ and $i_{cochain}$ on representatives. Similarly $\pi_{cohom}$ and $i_{hom}$ are adjoint.

In subsection \S \ref{cohomology basis} we have proved that $\pi_{cohom}$ is zero. Thus its adjoint $i_{hom}$ is also zero and the long exact sequences in homology break into pieces of short exact sequences.
\end{proof}

To summarise, we have obtained two short exact sequences adjoint to each other, one in homology and one in cohomology:

\vspace{-10pt}
\begin{align*}
&0 \rightarrow H_{p}(Q^n/ \cC_{k-1}) \xrightarrow{\pi_{hom}} H_{p}(Q^n/ \cC_{k})
\xrightarrow{\partial_{hom}} H_{p-1}(Q^n/ \cC_{k}) \rightarrow 0, \\
&0 \leftarrow H^{p}(Q^n/ \cC_{k-1}) \xleftarrow{i_{cohom}} H^{p}(Q^n/ \cC_{k})
\xleftarrow{\delta_{cohom}}  H^{p-1}(Q^n/ \cC_{k}) \leftarrow 0.
\end{align*}

\subsection{Product cycles in a generalized hemicube} \label{product cycles}

For cohomology groups, we were able to define cohomology bases with a preferred representative for each basis element. These preferred representative are canonical cocycles. We would like to do the same with homology groups. The preferred representatives for the elements of a homology basis are the soon to be defined product cycles.
 
Before we define product cycles, we need to define more formally translations at the level of
coordinates, faces and chains in $Q^n$ and in $Q^n/\cC_k$. We have already used translations at the level of coordinates and faces but we take the opportunity to be more formal now:

\begin{defn}[coordinate translation]
$$0 + 0 = 0 \qquad \qquad \qquad  \qquad \qquad 0 + 1 = 1$$
$$1 + 0 = 1 \qquad \qquad \qquad  \qquad \qquad 1 + 1 = 0$$
$$* + 0 = * \qquad \qquad \qquad  \qquad \qquad * + 1 = *$$
\end{defn}

\begin{defn}[face translation]
Let $f = f(1) \ldots  f(n) \in \{0, 1, *\}^n$ be a face of $Q^n$ and $y = y(1) \ldots  y(n) \in \{0, 1\}^n$ be a binary vector. We define $f + y$, the translate of $f$ by $y$, by coordinate-wise translation:
$$ \forall i \in \{1, \ldots  , n\}, \, (f + y)(i) = f(i) + y(i).$$
\end{defn}

\begin{defn}[chain translation]
Let $c = \bigoplus f \in C_p(Q^n)$ be a $p$-chain of $Q^n$ and $y = y(1) \ldots  y(n) \in \{0, 1\}^n$ be a binary vector. We define $c + y$, the translate of $c$ by $y$, by translation of every face of the $p$-chain:
$$c +  y = \bigoplus_{f\in c} (f + y).$$
\end{defn}

\noindent
Since the translation by $y$ is compatible with (commutes with) taking the quotient by a classical code $\cC_k$, we use the same definitions in $Q^n/\cC_k$. \\

Recall that the $p$-direction $I \subset \{1, \ldots, n\}$ of a $p$-face $f$ is the subset of coordinates where $f$ has a star: $I = \{ i \in \{1, \ldots  , n\} \, | \, f(i) = *\}$. In $Q^n$, there are $2^{n-p}$ $p$-faces having a given $p$-direction $I$. We name one of
them {\em the standard $p$-face with $p$-direction $I$} and denote it by
$f_{I}$.

\begin{defn}[standard $p$-face with $p$-direction $I$]
Let $I$ be a $p$-direction.
For every $i \in \{1, \ldots  , n\}$ , we define $s_i$ as the cardinality of $I \cap \{1, \ldots, i\}$. We define $f_I(i)$, the $i^{th}$ coordinate of $f_I$ to be $s_i$ modulo 2. \\
In $Q^n$, $f_I$ is the {\em standard $p$-face with $p$-direction $I$}. \\
In $Q^n/\cC_k$, $f_{I,k}$, the {\em standard $p$-face with $p$-direction $I$} is $\floor{f_I}$, the image  of $f_I$ under the projection $\Pi_k := \pi_k \circ \ldots  \circ \pi_1$.
\end{defn}

\noindent
For example with $n = 8$ and $p = 2$, the standard $2$-face with $2$-direction $\{3,7\} = \_ \_ * \_ \_ \_ * \_ $ is $0 0 * 1 1 1 * 0$ in $Q^n$. It is $\floor{0 0 * 1 1 1 * 0}$ in $Q^n/\cC_k$. \\

\begin{defn}[product chain]
For $x_1, \ldots  , x_k \in \mathbb{F}_2^n$ and $p_1, \ldots  , p_k \in \mathbb{N}$ such that $p_1 + \ldots  + p_k = p$, we define a \emph{product chain} in $C_p(Q^n)$ 
\[
\chain\tbinom{x_i}{p_i}_{1 \leq i \leq k}
\]
as follows:

\noindent
A $k$-tuple $(I_1, \ldots, I_k)$ of subsets of $\{1, \ldots, n\}$ is \emph{adapted to} 
$\tbinom{x_i}{p_i}_{1 \leq i \leq k}$
if it satisfies the following conditions:
\begin{itemize}
\item $\forall i, j \in \{1, \ldots, k\}, \; I_i \cap I_j = \varnothing$.
\item $\forall i \in \{1, \ldots, k\}, \; I_i \subset \text{Supp}(x_i)$.
\item $\forall i \in \{1, \ldots, k\}, \; |I_i| =p_i$. 
\end{itemize}

\noindent
The chain $\chain\tbinom{x_i}{p_i}_{1 \leq i \leq k}$ in $C_p(Q^n)$ is the sum of the standard $p$-faces $f_{I_1 \cup \ldots  \cup I_k}$ over every $k$-tuple $(I_1, \ldots, I_k)$ satisfying the above conditions.

\noindent
Similarly the chain $\chain_{k'}\tbinom{x_i}{p_i}_{1 \leq i \leq k}$ in $C_p(Q^n/C_{k'})$ is the sum of the standard $p$-faces $f_{I_1 \cup \ldots  \cup I_k, k'}$ over every $k$-tuple $(I_1, \ldots, I_k)$ satisfying the above conditions.
\end{defn}

Note that the sum is over $k$-tuples $(I_1, \ldots, I_k)$ and not over $p$-directions $I_1
\cup \ldots  \cup I_k$. It means that if a $p$-direction $I$ admits an even
number of adapted partitions $(I_1, \ldots, I_k)$, it actually doesn't belong to
$\chain\tbinom{x_i}{p_i}_{1 \leq i \leq k}$.

\medskip

\begin{lemma}
\label{product chain boundary}
The boundary of a product chain is:
\begin{align*}
&\partial\,\chain_{k'}\tbinom{x_i}{p_i}_{1 \leq i \leq k} \\
&= \bigoplus_{j=1}^k \; \chain_{k'}\tbinom{x_i}{p_i- \delta_{i,j}}_{1 \leq i \leq k}
\; \oplus \; \left(\chain_{k'}\tbinom{x_i}{p_i- \delta_{i,j}}_{1 \leq i \leq k}
 + x_j\right)
\end{align*}
\end{lemma}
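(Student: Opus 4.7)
The plan is to expand both sides by linearity of $\partial$, reorganize the resulting double sum by reindexing adapted tuples, and reduce the identity to a single inner combinatorial statement about standard faces under the quotient map $\Pi_k$.

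First, by linearity of $\partial$, the definition of the product chain, and the chain-map property $\partial\,\Pi_k = \Pi_k\,\partial$, I write
$$\partial\,\chain_k\tbinom{x_i}{p_i}_{1 \leq i \leq k} = \sum_{(I_1, \ldots, I_k)\text{ adapted}}\;\sum_{\ell \in I_1 \cup \cdots \cup I_k}\;\Pi_k\!\left(f_I|_{\ell \to 0}\,\oplus\, f_I|_{\ell \to 1}\right),$$
where $I = I_1 \cup \cdots \cup I_k$ and $f_I|_{\ell \to a}$ denotes the $(p-1)$-face of $Q^n$ obtained from $f_I$ by setting the star at position $\ell$ to $a \in \{0,1\}$.

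Next, I reorganize the inner sum according to the index $j$ such that $\ell \in I_j$, reindexing each such term via the reduced tuple $(I_1, \ldots, I'_j, \ldots, I_k)$ with $I'_j = I_j \setminus \{\ell\}$, which is adapted to $\tbinom{x_i}{p_i - \delta_{i,j}}_{1 \leq i \leq k}$. On the other hand, expanding the claimed right-hand side by definition gives
$$\sum_{j=1}^{k}\sum_{(I_1, \ldots, I'_j, \ldots, I_k)\text{ adapted}}\Bigl(\Pi_k(f_{I'})\,\oplus\,\Pi_k(f_{I'} + x_j)\Bigr),$$
where $I' = I_1 \cup \cdots \cup I'_j \cup \cdots \cup I_k$. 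The lemma therefore reduces to the following inner identity, which must hold for every $j$ and every fixed reduced adapted tuple:
$$\sum_{\ell \in \mathrm{Supp}(x_j) \setminus I'}\Pi_k\!\left(f_{I' \cup \{\ell\}}|_{\ell \to 0}\,\oplus\, f_{I' \cup \{\ell\}}|_{\ell \to 1}\right) = \Pi_k(f_{I'})\,\oplus\,\Pi_k(f_{I'} + x_j).$$

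The main obstacle is establishing this inner identity. The difficulty is that the alternating-parity convention $f_I(i) \equiv |I \cap \{1,\ldots,i\}| \bmod 2$ defining the standard face is sensitive to the ordering of coordinates, so $f_{I' \cup \{\ell\}}|_{\ell \to a}$ does not coincide with $f_{I'}$: it differs at every non-star position larger than $\ell$ that lies outside $I'$. Consequently the summands on the left cannot be simplified face-by-face, and indeed the identity fails in general in $C_{p-1}(Q^n)$ before projecting. The plan is to show that as $\ell$ ranges over the ordered set $\mathrm{Supp}(x_j) \setminus I'$, the correction vectors produced by varying $\ell$ cancel in pairs after applying $\Pi_k$, leaving only the two endpoint contributions $f_{I'}$ and $f_{I'} + x_j$. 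The bulk of the work is the bookkeeping of these corrections; this can be carried out by induction on $|\mathrm{Supp}(x_j) \setminus I'|$, with each induction step absorbing a consecutive pair of values of $\ell$ and verifying that their combined contribution differs from a translate by $x_j$ by an element of $\ker \Pi_k$.
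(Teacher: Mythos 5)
Your outer reduction is exactly the paper's: expand $\partial$ by linearity, regroup the resulting double sum by the index $j$ of the removed star and by the reduced adapted tuple $(I_1,\ldots,I_j\setminus\{i_*\},\ldots,I_k)$, and reduce the lemma to the inner identity you display, namely that the $2\,|\mathrm{Supp}(x_j)\setminus I'|$ translates of the standard face $f_{I'}$ collapse to $f_{I'}\oplus(f_{I'}+x_j)$. You are right that this is the crux, and that the summands are genuinely distinct translates because of the alternating-parity convention; the paper's own proof stops at the same point, with the one-line assertion that ``some of these translates cancel pairwise.''

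The gap is that the mechanism you propose for this last step does not work as described. Writing $e_i$ for the $i$-th unit vector, $w_\ell=\sum_{i\ge\ell,\ i\notin I'}e_i$ and $u_\ell=w_\ell+e_\ell$, the two faces contributed by the star at position $\ell$ are $f_{I'}+w_\ell$ and $f_{I'}+u_\ell$, so the sum telescopes to the two endpoints iff $u_{\ell_t}=w_{\ell_{t+1}}$ for consecutive elements $\ell_t<\ell_{t+1}$ of $\mathrm{Supp}(x_j)\setminus I'$, i.e.\ iff every coordinate strictly between them lies in $I'$. When this fails, the two leftover faces $f_{I'}+u_{\ell_t}$ and $f_{I'}+w_{\ell_{t+1}}$ differ by the translation $\sum_{\ell_t<i<\ell_{t+1},\ i\notin I'}e_i$, which is in general not a codeword of $\cC_k$; the two faces then lie in distinct classes and their sum is \emph{not} in $\ker\Pi_k$. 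So your claim that each consecutive pair's ``combined contribution differs from a translate by $x_j$ by an element of $\ker\Pi_k$'' is unsubstantiated, and applying $\Pi_k$ does not rescue the telescoping (note the paper actually asserts the cancellation already in $C_{p-1}(Q^n)$, before projecting). To close the argument you must either verify the chain-level telescoping directly under the hypotheses in which the lemma is later used --- it is immediate when $x_j=1\ldots 1$, the case of the cycles of Section 3, and otherwise requires controlling the coordinates of $\mathrm{Supp}(x_j)^{c}\setminus I'$ sitting between consecutive elements of $\mathrm{Supp}(x_j)\setminus I'$ --- or carry out the deferred induction explicitly and exhibit the condition on $x_j$ and $I'$ under which the discrepancy vectors vanish. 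As written, the ``bookkeeping'' you postpone is the entire content of the lemma.
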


\begin{proof}
Taking the boundary of a $p$-chain amounts to replacing each star of each of its $p$-faces by either a 0 or a 1.
Let $I_1 \subset \text{Supp}(x_1)$, \ldots  , $I_k \subset \text{Supp}(x_k)$
satisfy $|I_1| = p_1$, \ldots, $|I_k| = p_k$ and $I_i \cap I_j = \varnothing$
for every $i, j \in \{1, \ldots, k\}$. Let $f_{(I_1, \ldots, I_k)}$ be the
corresponding $p$-face of $\chain_{k'}\tbinom{x_i}{p_i}_{1 \leq i \leq k}$.
Choosing a star from $f_{(I_1, \ldots, I_k)}$ amounts to choosing $j \in \{1, \ldots, k\}$ and a star in $I_j$. It therefore yields $k$ intervals $\tilde{I_i}$ defined by $\tilde{I_i} = I_i$ if $i \neq j$ and $\tilde{I_j} = I_j \backslash \{i_*\}$ where $i_*$ is the coordinate of the chosen star. \\
Replacing $\{i_*\}$ by a zero or a one gives two translates of $f_{(I_1, \ldots, I_j \backslash \{i_*\}, \ldots,  I_k)}$. To each $\tilde{I_j}$ such that $(I_1, \ldots, \tilde{I_j}, \ldots, I_k)$ is adapted to 
$\tbinom{x_i}{p_i}_{1 \leq i \leq k}$
and such that $\tilde{I_j} = (I_j \backslash \{i_*\}) \cup \{i_{x_j}\}$ for $i_{x_j} \in \text{Supp}(x_j)$ correspond two other translates of $f_{(I_1, \ldots, I_j \backslash \{i_*\}, \ldots,  I_k)}$.
When summed, some of these translates cancel pairwise and we are left with
$f_{(I_1, \ldots, I_j \backslash \{i_*\}, \ldots,  I_k)} \oplus (f_{(I_1,
\ldots, I_j \backslash \{i_*\}, \ldots,  I_k)}+ x_j)$. \\
Summing over every possible $(I_1, \ldots, I_j \backslash \{i_*\}, \ldots,  I_k)$ finishes the proof.
\end{proof}

\begin{corol-defn}[product cycles]
For $c_1, \ldots, c_k \in \cC_k$, 
$\chain_k\tbinom{c_i}{p_i}_{1 \leq i \leq k}$ is a $p$-cycle of $C_p(Q^n/\cC_k)$.
We denote it by $\cyc_{p,k}\tbinom{c_i}{p_i}_{1 \leq i \leq k}$ and call these
cycles {\em product cycles}.
\end{corol-defn}

\begin{proof}
$$\forall j, \; \Pi_k \left( \chain\tbinom{c_i}{p_i - \delta_{i,j}}_{1 \leq i \leq k}
\right) = \Pi_k \left(\chain\tbinom{c_i}{p_i - \delta_{i,j}}_{1 \leq i \leq k}
+c_j\right) $$
\end{proof}

\begin{corol}
Let $C_{k-1}$ be a classical code with basis $(c_1, \ldots, c_{k-1})$ and $C_{k}$ be a classical code containing $C_{k-1}$ and with basis $(c_1, \ldots, c_{k})$. Let $\partial_{hom, p, k}$ be the homology group morphism corresponding to the classical codes $C_{k-1}$ and $C_{k}$. We have:
$$\partial_{hom,p,k} \left([\cyc_{p,k}\begin{pmatrix} c_i \\ p_i \end{pmatrix}_{1
\leq i \leq k}]\right) = [\cyc_{p-1,k}\begin{pmatrix} c_i \\ p_i - \delta_{i,k}
\end{pmatrix}_{1 \leq i \leq k}].$$
\end{corol}

\begin{proof}
\begin{align*}
&\partial ( \pi_k^{-1} ( \cyc_{p,k}\begin{pmatrix} c_i \\ p_i \end{pmatrix}_{1 \leq i \leq k} ) ) 
= \partial ( \pi_k^{-1} ( \Pi_k (\chain\begin{pmatrix} c_i \\ p_i \end{pmatrix}_{1 \leq i \leq k} ) ) ) \\
&=\partial ( \Pi_{k-1} (\chain\begin{pmatrix} c_i \\ p_i \end{pmatrix}_{1 \leq i \leq k} ) ) \\
&= \Pi_{k-1} (\chain\begin{pmatrix} c_i \\ p_i - \delta_{i,j} \end{pmatrix}_{1
\leq i \leq k}) \oplus \Pi_{k-1} (\chain\begin{pmatrix} c_i \\ p_i -
\delta_{i,j} \end{pmatrix}_{1 \leq i \leq k} + c_k) \\
&= i_k ( \Pi_{k-1} (\chain\begin{pmatrix} c_i \\ p_i - \delta_{i,j} \end{pmatrix}_{1 \leq i \leq k}) ) \\
&= i_k ( \cyc_{p-1,k}\begin{pmatrix} c_i \\ p_i - \delta_{i,k} \end{pmatrix}_{1
\leq i \leq k} ). 
\end{align*}

\noindent
We used Lemma \ref{product chain boundary} to derive line 3. \\
Recalling that $\partial_{hom,p,k}$ corresponds to $i_{chain,p,k}^{-1} \circ \partial_{chain,p,k} \circ \pi_{chain,p,k}^{-1}$ finishes the proof.
\end{proof}

\subsection{Homology basis for a generalized hemicube}

We are now ready to prove Theorem \ref{homology basis} by induction on $(p+k)$:

\begin{theorem} \label{homology basis}
Let $C_{k}$ be a classical code with basis $(c_1, \ldots, c_{k})$. $H_{p}(Q^n/ \cC_{k})$ has a
basis indexed by $k$-tuples $(p_1, \ldots, p_k)$ satisfying $p_1 + \ldots  + p_k
= p$ and such that each basis element is the homology class represented by the
product cycle $\cyc_{p,k}\tbinom{c_i}{p_i}_{1 \leq i \leq k}$.
\end{theorem}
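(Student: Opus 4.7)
The plan is to prove the theorem by double induction on $(p, k)$, using as the engine the short exact sequence
\[ 0 \rightarrow H_{p}(Q^n/\cC_{k-1}) \xrightarrow{\pi_{\mathrm{hom}}} H_{p}(Q^n/\cC_{k}) \xrightarrow{\partial_{\mathrm{hom}}} H_{p-1}(Q^n/\cC_{k}) \rightarrow 0 \]
from the previous subsection, together with the commutation formula
\[ \partial_{\mathrm{hom},p,k}\bigl([\cyc_{p,k}\tbinom{c_i}{p_i}_{1\leq i\leq k}]\bigr) = [\cyc_{p-1,k}\tbinom{c_i}{p_i-\delta_{i,k}}_{1\leq i\leq k}] \]
just established in the corollary.

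The base case $k = 1$ reduces to the hemicube: Lemma \ref{lem:logical} gives $\dim H_p = 1$, and by comparison with the alternating-pattern description in Lemma \ref{lem:cycle2}, the product cycle $\cyc_{p,1}\tbinom{c_1}{p}$ (with $c_1 = 1\ldots 1$) coincides with the nontrivial generator ${n \brack p}$. The base case $p = 0$ is immediate: the quotient complex is connected, so $H_0(Q^n/\cC_k) \cong \F_2$, and the unique $k$-tuple $(0,\ldots,0)$ corresponds to the class of a single vertex $\cyc_{0,k}\tbinom{c_i}{0}_i$.

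For the inductive step, I would first observe that whenever $p_k=0$, the identity $\Pi_k = \pi_k \circ \Pi_{k-1}$ applied to the underlying product chain in $C_p(Q^n)$ yields
\[ \cyc_{p,k}\tbinom{c_i}{p_i}_{1 \leq i \leq k} = \pi_k\bigl(\cyc_{p,k-1}\tbinom{c_i}{p_i}_{1 \leq i \leq k-1}\bigr). \]
By the induction hypothesis for $(p,k-1)$, the classes $[\cyc_{p,k}\tbinom{c_i}{p_i}_i]$ with $p_k = 0$ therefore form a basis of $\mathrm{Im}(\pi_{\mathrm{hom}}) = \ker \partial_{\mathrm{hom}}$. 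By the induction hypothesis for $(p-1,k)$ combined with the commutation formula, the classes $[\cyc_{p,k}\tbinom{c_i}{p_i}_i]$ with $p_k \geq 1$ are sent by $\partial_{\mathrm{hom}}$ to the basis of $H_{p-1}(Q^n/\cC_k)$, via the bijection $(p_1,\ldots,p_k) \mapsto (p_1,\ldots,p_{k-1},p_k-1)$ between $k$-tuples of sum $p$ with $p_k\geq 1$ and $k$-tuples of sum $p-1$. Splitting the short exact sequence then assembles the two families into a basis of $H_p(Q^n/\cC_k)$ indexed by all $k$-tuples of sum $p$; the binomial identity $\binom{p+k-2}{p} + \binom{p+k-2}{p-1} = \binom{p+k-1}{p}$ confirms the rank count.

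The main technical obstacle is the chain-level identity displayed above: one must check that the standard $p$-faces appearing in the product chain for $(p_1,\ldots,p_{k-1})$ remain distinct after projection to $Q^n/\cC_k$, so that $\pi_k$ introduces no spurious cancellations, and likewise that no cancellations in the lift occur when computing $\partial_{\mathrm{hom}}$ on the $p_k \geq 1$ part. This is where the hypothesis $p \leq d-2$ on the minimum distance of $\cC_k$ enters, guaranteeing that each chosen $I_i \subset \mathrm{Supp}(c_i)$ of size $p_i \leq p$ leaves enough non-star coordinates to separate the lifted faces. Once this verification is secured, the rest is a clean assembly from the short exact sequence and the corollary.
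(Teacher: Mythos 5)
Your proposal follows essentially the same route as the paper: induction on $(p,k)$ driven by the short exact sequence $0 \rightarrow H_{p}(Q^n/\cC_{k-1}) \rightarrow H_{p}(Q^n/\cC_{k}) \rightarrow H_{p-1}(Q^n/\cC_{k}) \rightarrow 0$, with the $p_k=0$ classes obtained as the image under $\pi_{\mathrm{hom}}$ of the basis for $\cC_{k-1}$ and the $p_k\geq 1$ classes as preimages under $\partial_{\mathrm{hom}}$ of the basis of $H_{p-1}(Q^n/\cC_k)$, assembled by exactness. Your explicit treatment of the base cases and of the no-cancellation verification only makes more precise steps the paper leaves implicit.
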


\begin{proof}
The base case is straightforward. \\
We use the following short exact sequence and apply the induction
hypothesis to $H_{p}(Q^n/ \cC_{k-1})$ and $H_{p-1}(Q^n/ \cC_{k})$:

\begin{equation}
\nonumber
0 \rightarrow H_{p}(Q^n/ \cC_{k-1}) \xrightarrow{\pi_{hom}} H_{p}(Q^n/ \cC_{k})
\xrightarrow{\partial_{hom}} H_{p-1}(Q^n/ \cC_{k}) \rightarrow 0.
\end{equation}

Since $\pi_{chain,k} (\cyc_{p,k-1}\tbinom{c_i}{p_i}_{1 \leq i \leq k-1}) =
\cyc_{p,k}\tbinom{c_i}{p_i}_{1 \leq i \leq k}$ with $p_k = 0$, the basis of
homology classes of $H_{p}(Q^n/ \cC_{k-1})$ represented by product cycles is sent by $\pi_{hom}$ to a free family of homology classes represented by the product cycles of 
$H_{p}(Q^n/ \cC_{k})$ satisfying $p_k = 0$. \\

Since 
\[
\partial_{hom} ([\cyc_{p,k}\tbinom{c_i}{p_i}_{1 \leq i \leq k}]) =
[\cyc_{p-1,k}\tbinom{c_i}{p_i - \delta_{i,k}}_{1 \leq i \leq k}],
\]
the basis of homology classes of $H_{p-1}(Q^n/ \cC_{k})$ represented by product cycles has a free family of preimages by $\partial_{hom}$ represented by the product cycles of 
$H_{p}(Q^n/ \cC_{k})$ satisfying $p_k \neq 0$. 

The exactness of the short sequence implies that the concatenation of these two
free families forms a basis of $H_{p}(Q^n/ \cC_{k})$.
\end{proof}

\subsection{Cocycle minimum distance in a generalized hemicubic code} \label{cocycle md}

\begin{lemma}
For any product cycle $\cyc_{p,k}\tbinom{c_i}{p_i}_{1 \leq i \leq k}$, 
for any $y \in \mathbb{F}_2^n$, the translate  $\cyc_{p,k}\tbinom{c_i}{p_i}_{1
\leq i \leq k} + y$ is a cycle which belongs to the same homology class as 
$\cyc_{p,k}\tbinom{c_i}{p_i}_{1 \leq i \leq k}$.
\end{lemma}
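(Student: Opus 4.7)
My proof plan has two parts: showing the translate is a cycle, then showing it is homologous to the original.

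First I would observe that translation by $y \in \mathbb{F}_2^n$ commutes with the boundary operator. For a single $p$-face $f$ and any position $i$ with $f_i = *$, replacing $*$ by $0$ (resp.\ $1$) and then translating by $y$ gives the same result as first translating by $y$ and then replacing $*$ by $0$ (resp.\ $1$), up to possibly swapping the two boundary pieces when $y_i = 1$. In either case one gets $(\partial f)+y$ on the nose. Extending by linearity, $\partial(c + y) = (\partial c) + y$ for every chain $c$, so whenever $c$ is a cycle so is $c+y$. This reasoning descends to the quotient $Q^n/\cC_k$ because translation by $y$ commutes with the quotient map.

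Next I would reduce the homology statement to the case $y = e_i$ (a single standard basis vector), since a general $y$ can be written as a sum $e_{i_1}+\cdots+e_{i_s}$ and homologies compose. For $y = e_i$ I would introduce a prism operator $P_i : C_p \to C_{p+1}$ defined on faces by $P_i(f) = 0$ if $f_i = *$, and otherwise $P_i(f)$ is the $(p+1)$-face obtained from $f$ by replacing the $i$-th coordinate with a star. I then claim the chain homotopy identity
\[
\partial P_i(f) + P_i(\partial f) = f \oplus (f + e_i)
\]
for every $p$-face $f$. Applied to a cycle $\zeta$ this gives $\partial P_i(\zeta) = \zeta \oplus (\zeta + e_i)$, proving $\zeta$ and $\zeta + e_i$ define the same homology class.

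The main obstacle is verifying the chain homotopy identity, which requires a case split on whether $f_i = *$ or $f_i \in \{0,1\}$. In the first case both sides are zero: $P_i(f) = 0$ and $f + e_i = f$, while in $P_i(\partial f)$ the contributions from $j=i$ give $P_i(f^{i,0}) + P_i(f^{i,1}) = f + f = 0$ and the $j \neq i$ contributions vanish because the resulting faces already have a star at position $i$. In the second case $P_i(f)$ has stars at the same positions as $f$ together with position $i$; expanding $\partial P_i(f)$, the $j=i$ term yields $f \oplus (f+e_i)$, while the $j \neq i$ terms match $P_i(\partial f)$ term by term. Once this identity is established, the proof is complete: the $P_i$ operator is well-defined on the quotient complex $Q^n/\cC_k$ (translation by codewords preserves both the star pattern and the starring operation), so the same chain homotopy works there, and iterating along the support of $y$ gives the result for arbitrary $y \in \mathbb{F}_2^n$.
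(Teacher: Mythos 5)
Your proof is correct, but it takes a genuinely different route from the paper's. The paper handles the cycle claim the same way (translation commutes with $\partial$), but for the homology claim it argues by duality: it shows that $\cyc_{p,k}\tbinom{c_i}{p_i}_{1 \leq i \leq k} \oplus (\cyc_{p,k}\tbinom{c_i}{p_i}_{1 \leq i \leq k} + y)$ pairs trivially with every canonical cocycle --- because translation preserves $p$-directions, the symmetric difference contains exactly $0$ or $2$ faces per $p$-direction --- and then invokes the previously established fact that canonical cocycles represent a basis of $H^{p}(Q^n/\cC_k)$, together with the $\F_2$ identity $\mathrm{Im}\,\partial_{p+1}=(\ker\delta_p)^{\perp}$, to conclude the difference is a boundary. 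You instead build an explicit prism operator $P_i$ and verify the chain homotopy identity $\partial P_i + P_i\partial = \mathrm{id} + T_{e_i}$, then iterate over the support of $y$; your case analysis is right, and $P_i$ does descend to the quotient since starring a coordinate commutes with translation by codewords (addition never changes the star pattern). What each approach buys: yours is self-contained --- it needs neither the cohomology-basis theorem nor duality, works for \emph{any} cycle (not just product cycles), and produces an explicit $(p+1)$-chain of weight at most $|y|\cdot\|\cyc\|$ witnessing the homology, which is the kind of quantitative filling information the paper cares about elsewhere; the paper's argument is shorter given the machinery already in place and fits the intersection-counting style used throughout Sections 4.6--4.8. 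One tiny point worth stating explicitly in your write-up: no weight-one vector $e_i$ is a codeword (since $d\ge p+2\ge 2$), so $[f]\ne[f+e_i]$ and the identity on the quotient is never degenerate.
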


\begin{proof}
The translate is a cycle since $\partial_{chain}$ and translation by $y$ commute. 

\noindent
To prove that translation doesn't alter the homology class we show that 
\[
\cyc_{p,k}\begin{pmatrix} c_i \\ p_i \end{pmatrix}_{1 \leq i \leq k} \oplus
(\cyc_{p,k}\begin{pmatrix} c_i \\ p_i \end{pmatrix}_{1 \leq i \leq k} + y)
\]
is a boundary. 
Equivalently we show that it is orthogonal to every cohomology class in
$H^{p}(Q^n/ \cC_{k})$. 

It is sufficient to consider the canonical cocycles representing a basis of
$H^{p}(Q^n/ \cC_{k})$. Observing that 
$\cyc_{p,k}\tbinom{c_i}{p_i}_{1 \leq i \leq k} \oplus
(\cyc_{p,k}\tbinom{c_i}{p_i}_{1 \leq i \leq k} + y)$ has exactly 0 or 2 $p$-faces per $p$-direction finishes the proof.
\end{proof}

Therefore each homology class of the product cycles basis of $H_{p}(Q^n/ C_{k})$
is represented by $2^{n-k}$ different cycles corresponding to the $2^{n-k}$
different translations $y \in \mathbb{F}_2^n / \cC_k$. Each $p$-face belongs to exactly $0$ or $2^p$ of the $2^{n-k}$ different cycles. This observation leads to the following proposition:

\begin{prop}
The cocycle minimum distance $D_{p,k}^{(cohom)}$, \text{i.e.} the minimum weight
of a cohomologically nontrivial $p$-cocycle in $\mathcal{C}^p(Q^n/\cC_k)$ satisfies:
$$ D_{p,k}^{(cohom)} = 2^{n-p-k}.$$
\end{prop}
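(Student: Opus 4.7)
The plan is to prove the two matching bounds $D_{p,k}^{(cohom)} \leq 2^{n-p-k}$ and $D_{p,k}^{(cohom)} \geq 2^{n-p-k}$, using the canonical cocycles for the upper bound and the product cycles (together with their translates) for the lower bound.

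\textbf{Upper bound via a canonical cocycle.} First I would exhibit a cohomologically non-trivial cocycle of weight exactly $2^{n-p-k}$. Pick any $p$-direction $I$ for which $[\cocyc^{I,p,k}]$ belongs to the basis of $H^p(Q^n/\cC_k)$ built in Section~\ref{cohomology basis}; non-triviality is then immediate. To compute the weight, count the $p$-faces of $Q^n$ with direction $I$ up to the equivalence induced by $\cC_k$: there are $2^{n-p}$ such faces in $Q^n$, and two are identified precisely when their non-star parts differ by the restriction of a codeword to $[n]\setminus I$. Since $|I| = p \leq d-2$, no non-zero codeword of $\cC_k$ is supported on $I$, so the restriction map $\cC_k \to \F_2^{[n]\setminus I}$ is injective and each equivalence class has cardinality $2^k$. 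Consequently $\cocyc^{I,p,k}$ has weight $2^{n-p}/2^k = 2^{n-p-k}$.

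\textbf{Lower bound via product cycles and averaging over translates.} Let $\eta$ be any cohomologically non-trivial $p$-cocycle. The adjunction property developed in the previous subsection tells us that the homology basis of product cycles from Theorem~\ref{homology basis} and the cohomology basis of canonical cocycles are dual under the $\F_2$-pairing, so the pairing is non-degenerate on (co)homology. Hence there exists a product cycle $\cyc_0 = \cyc_{p,k}\tbinom{c_i}{p_i}_{1 \leq i \leq k}$ with $\langle \eta, \cyc_0 \rangle = 1$. Now consider the $2^{n-k}$ translates $\cyc_0 + y$ as $y$ ranges over coset representatives of $\F_2^n/\cC_k$. By the lemma at the beginning of this subsection, every translate is homologous to $\cyc_0$, so $\langle \eta, \cyc_0 + y \rangle = 1$ for each $y$; in particular the integer cardinality $|\eta \cap (\cyc_0 + y)|$ is odd and hence at least one. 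Swapping the order of summation,
\begin{align*}
2^{n-k} \;\leq\; \sum_{y \in \F_2^n/\cC_k} |\eta \cap (\cyc_0 + y)| \;=\; \sum_{f \in \eta} \#\{y : f \in \cyc_0 + y\}.
\end{align*}
By the observation recorded just before the proposition, each $p$-face lies in exactly $0$ or $2^p$ of the translates, so the right-hand side is bounded above by $|\eta| \cdot 2^p$, which yields $|\eta| \geq 2^{n-p-k}$ and matches the upper bound.

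\textbf{Where the difficulty lies.} The essentially non-trivial step is the double-counting argument in the lower bound; it depends on two ingredients already prepared in the preceding lemmas, namely that translation by any $y \in \F_2^n/\cC_k$ preserves the homology class of a product cycle, and that every $p$-face belongs to a uniform number ($0$ or $2^p$) of translates. Both facts rely on the structure of product cycles and on $d \geq p+2$. Once in hand, the estimate is tight, with no slack lost between the averaging bound and the weight of the canonical cocycle, which is why equality is achieved.
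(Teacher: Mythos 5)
Your proposal is correct and follows essentially the same route as the paper: the lower bound via pairing a nontrivial cocycle with a product cycle from the homology basis, translating it over the $2^{n-k}$ cosets, and double-counting using the fact that each $p$-face lies in $0$ or $2^p$ translates; the upper bound via the weight of a canonical cocycle. Your write-up is somewhat more explicit than the paper's (notably the computation that a canonical cocycle has weight exactly $2^{n-p-k}$, using $p \leq d-2$ to see that each equivalence class of faces in a fixed direction has size $2^k$), but the argument is the same.
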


\begin{proof}
Let $\eta^{p,k}$ be a cohomologically nontrivial $(p,k)$-cocycle. $\eta^{p,k}$ is not orthogonal to at least one product cycle representing an element of the basis of $H_{p}(Q^n/ C_{k})$. Therefore $\eta^{p,k}$ is not orthogonal to any of the $2^{n-k}$ different cycles obtained by translating this product cycle. Since each $p$-face of $\eta^{p,k}$ belongs to at most $2^p$ translated product cycles, $\eta^{p,k}$ contains at least $2^{n-k-p}$ $p$-faces. \\
Moroever the value $2^{n-p-k}$ is attained by canonical cocycles.
\end{proof}

\subsection{Cycle minimum distance in a generalized hemicubic code} \label{cycle md}

\begin{prop}
The cycle minimum distance $D_{p,k}^{(hom)}$, \text{i.e.} the minimum weight of
a homologically nontrivial $p$-cycle in $C_p(Q^n/\cC_k)$ satisfies:
$$ D_{p,k}^{(hom)} = {d \choose p}.$$
\end{prop}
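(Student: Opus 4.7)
The plan is to exhibit an explicit non-trivial $p$-cycle of weight $\binom{d}{p}$. Pick a minimum weight codeword $c \in \cC_k$ with $|c|=d$ and extend $\{c\}$ to a basis $(c_1 = c, c_2, \ldots, c_k)$ of $\cC_k$. By Theorem~\ref{homology basis}, the product cycle
\[ z := \cyc_{p,k}\tbinom{c_1}{p}\tbinom{c_2}{0}\cdots\tbinom{c_k}{0} \]
is homologically non-trivial. From the definition in Section~\ref{product cycles}, $z$ coincides with the image under $\Pi_k$ of the chain $\bigoplus_{I \subseteq \mathrm{Supp}(c),\,|I|=p} f_I$, that is, a sum of $\binom{d}{p}$ standard $p$-faces with pairwise distinct $p$-directions. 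Since translation by any codeword preserves the direction of a face, these faces project to pairwise distinct equivalence classes in $Q^n/\cC_k$, so $|z| = \binom{d}{p}$, yielding $D^{(hom)}_{p,k} \leq \binom{d}{p}$.

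\textbf{Lower bound reduction.} For the lower bound I would first reduce to a combinatorial count over directions. For any $p$-chain $z$, let $D(z) := \{I \in \binom{[n]}{p} : z_I \neq 0\}$, where $z_I$ denotes the restriction of $z$ to faces of direction $I$. Since two faces with distinct $p$-directions are never identified in $Q^n/\cC_k$, we have $|z| = \sum_I |z_I| \geq |D(z)|$, so it suffices to prove $|D(z)| \geq \binom{d}{p}$ for every non-trivial $p$-cycle $z$.

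\textbf{Sub-complex acyclicity.} The key tool I would establish is the following: for any $T \subseteq [n]$ with $|T| < d$, the sub-complex $\mathcal{S}_T \subseteq C_\bullet(Q^n/\cC_k)$ spanned by faces whose direction is a subset of $T$ is acyclic in positive degrees. The bound $|T|<d$ forces $\cC_k \to \mathbb{F}_2^T$ to have trivial kernel (no non-zero codeword is supported in $T$), and choosing a transversal $R$ of $\cC_k|_{T^c}$ in $\mathbb{F}_2^{T^c}$ lets one parameterize each equivalence class of $p$-face by $(I,y,r)$ with $I \in \binom{T}{p}$, $y \in \mathbb{F}_2^{T \setminus I}$, and $r \in R$. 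The boundary acts only on the $(I,y)$ factor, giving an isomorphism of chain complexes $\mathcal{S}_T \cong C_\bullet(Q^T) \otimes \mathbb{F}_2^R$; contractibility of the Hamming cube $Q^T$ then yields $H_p(\mathcal{S}_T)=0$ for $p\geq 1$. Applied to a non-trivial cycle $z$, this forces $\bigl|\bigcup_{I \in D(z)} I\bigr| \geq d$.

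\textbf{Combinatorial amplification and main obstacle.} The last and most delicate step is upgrading $\bigl|\bigcup D(z)\bigr| \geq d$ to $|D(z)| \geq \binom{d}{p}$. The plan is to mirror the upper bound construction: for each non-trivial homology class $[z]$, exhibit a minimum weight codeword $c \in \cC_k$ such that $z$ pairs non-trivially with each of the $\binom{d}{p}$ canonical cocycles $\{\zeta^{I,p,k} : I \subseteq \mathrm{Supp}(c),\,|I|=p\}$; since these cocycles have pairwise disjoint supports, intersecting each forces $|z| \geq \binom{d}{p}$. Using the pairing formula developed in Section~\ref{product cycles}, in which $\zeta^{I,p,k} \cdot \cyc_{p,k}\tbinom{c_i}{p_i}$ counts (modulo $2$) the adapted partitions of $I$, this reduces to checking that the number of such partitions is constant and non-zero as $I$ varies over $\binom{\mathrm{Supp}(c)}{p}$, for a well-chosen filtration basis. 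The principal obstacle is that a general non-trivial class need not be aligned with a single product cycle basis element: one must adapt both the witnessing codeword $c$ and the basis $(c_1,\ldots,c_k)$ to the class $[z]$, and verify the constancy of the pairing on a case-by-case basis, likely through a further induction on $k$ built atop the sub-complex lemma.
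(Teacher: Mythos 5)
Your upper bound is correct and matches the paper's: the product cycle $\cyc_{p,k}\tbinom{c_1}{p}\tbinom{c_2}{0}\cdots\tbinom{c_k}{0}$ built on a weight-$d$ codeword is nontrivial by Theorem~\ref{homology basis} and has exactly $\binom{d}{p}$ faces, one per $p$-direction inside $\mathrm{Supp}(c_1)$. The reduction of the lower bound to counting directions, and the strategy of exhibiting $\binom{d}{p}$ pairwise disjoint canonical cocycles each non-orthogonal to $z$, are also exactly the right ideas (disjointness of canonical cocycles is what the paper uses too). Your sub-complex acyclicity lemma is correct as stated, and it is a clean way to see that a nontrivial cycle must involve at least $d$ coordinates among its star positions.

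However, the proof has a genuine gap precisely where you flag it: the ``combinatorial amplification''. The acyclicity lemma only yields $\bigl|\bigcup_{I\in D(z)} I\bigr|\ge d$, hence $|D(z)|\ge d/p$, which is very far from $\binom{d}{p}$, and the step that closes this gap is the entire content of the lower bound. Moreover, the specific plan you propose --- finding a \emph{single} minimum-weight codeword $c$ such that $z$ pairs non-trivially with \emph{all} $\binom{d}{p}$ canonical cocycles whose direction lies in $\mathrm{Supp}(c)$ --- is stronger than what is needed and is not what the paper proves; for a general class (a sum of several product-cycle basis elements) the pairing with $\cocyc^{I,p,k}$ is a parity of adapted partitions summed over basis elements, and there is no evident reason it should be identically $1$ on $\binom{\mathrm{Supp}(c)}{p}$ for some fixed $c$. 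The paper instead runs a double induction on $p+k$ through the short exact sequences of Sections 4.3--4.4, splitting on whether $\partial_{hom}(h_{p,k})$ vanishes for \emph{some} decomposition $\cC_k=\cC_{k-1}\cup(\cC_{k-1}+c_k)$: if it does, the $\binom{d}{p}$ witnessing cocycles are pulled back through $i_{cochain}^{-1}$ from level $k-1$ using the adjunction of $\pi_{chain}$ and $i_{cochain}$; if it does not for every decomposition, one applies the induction hypothesis to the nontrivial $(p-1)$-cycle $\partial_{hom}(h_{p,k})$, obtains $\binom{d}{p-1}$ canonical $(p-1)$-cocycles pairing with it, and extends each $(p-1)$-direction $I$ by any of the at least $d-p+1$ coordinates $j\in\mathrm{Supp}(c_k)\setminus I$ via $\delta_{cohom}$; since each resulting $p$-direction is produced at most $p$ times, this yields $\frac{d-p+1}{p}\binom{d}{p-1}=\binom{d}{p}$ distinct canonical cocycles non-orthogonal to $z$. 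Note that the directions produced this way need not all sit inside the support of one codeword, which is why the paper's argument avoids the obstacle you ran into. As it stands, your proposal establishes the upper bound and a weak lower bound, but not the claimed equality.
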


\begin{proof}
We prove by induction on $(p+k)$ that a homologically nontrivial $(p,k)$-cycle is not orthogonal to at least ${d \choose p}$ canonical $(p,k)$-cocycles. Since canonical cocycles are disjoint, the value of the cycle minimum distance follows immediately.

The base case is straightforward. 

Let $\eta_{p,k}$ be a cycle representing a nontrivial homology class $h_{p,k} \in H_{p}(Q^n/ C_{k})$: $h_{p,k} = \floor{\eta_{p,k}}$.

\medskip

\noindent
\underline{first case:} $\partial_{hom}(h_{p,k}) = 0$ in $H_{p-1}(Q^n/ \cC_{k})$
for at least one decomposition $\cC_k = \cC_{k-1} \cup (\cC_{k-1} + c_k)$. \\

Then there exists a nontrivial homology class $h_{p,k-1} \in H_{p}(Q^n/
C_{k-1})$ such that $h_{p,k} = \pi_{hom}(h_{p,k-1})$. Let $\eta_{p,k-1}$ be a $(p,k-1)$-cycle representing $h_{p,k-1}$. \\
By the induction hypothesis there are ${d \choose p}$ canonical
$(p,k-1)$-cocycles not orthogonal to $\eta_{p,k-1}$. Let $\cocyc^{p,k-1}$ be such a canonical cocycle. Since $\pi_{chain}$ and $i_{cochain}$ are adjoint:

\begin{align*}
\langle i_{cochain}^{-1} (\cocyc^{p,k-1}) \, , \, \eta_{p,k} \rangle 
= \quad &\langle \cocyc^{p,k-1} \, , \, \pi_{chain}^{-1} (\eta_{p,k}) \rangle \\
= \quad &\langle \cocyc^{p,k-1} \, , \, \eta_{p,k-1} \rangle \\
= \quad &1.
\end{align*}

Therefore applying $i_{cochain}^{-1}$ to the ${d \choose p}$ canonical
$(p,k-1)$-cocycles not orthogonal to $\eta_{p,k-1}$ yields ${d \choose p}$
canonical $(p,k)$-cocycles not orthogonal to $\eta_{p,k}$. The induction step is proved in this case. 

\bigskip

\noindent\underline{second case:} $\partial_{hom}(h_{p,k}) \neq 0$ in $H_{p-1}(Q^n/
\cC_{k})$ for every decomposition $\cC_k = \cC_{k-1} \cup (\cC_{k-1} + c_k)$. \\

By definition of $\partial_{hom}$, any preimage $i_{chain}^{-1} \circ
\partial_{chain} \circ \pi_{chain}^{-1} (\eta_{p,k})$ represents $\partial_{hom} (h_{p,k})$. 

By the induction hypothesis there exists ${d \choose p-1}$ distinct canonical
$(p-1,k)$-cocycles orthogonal to $i_{chain}^{-1} \circ \partial_{chain} \circ
\pi_{chain}^{-1} (\eta_{p,k})$. Let $\cocyc^{p-1,k}$ be such a cocycle. Any
preimage $i_{cochain}^{-1} \circ \delta_{cochain} \circ \pi_{cochain}^{-1}
(\cocyc^{p-1,k})$ is a $(p,k)$-cocycle orthogonal to $\eta_{p,k}$: 

\begin{align*}
\langle i_{cochain}^{-1} \circ \delta_{cochain} \circ \pi_{cochain}^{-1}
(\cocyc^{p-1,k}) \, , \,  \eta_{p,k} \rangle 
= \quad &\langle \delta_{cochain} \circ \pi_{cochain}^{-1} (\cocyc^{p-1,k}) \, ,
\, \pi_{chain}^{-1} (\eta_{p,k}) \rangle \\
= \quad &\langle \pi_{cochain}^{-1} (\cocyc^{p-1,k}) \, , \, \partial_{chain}
\circ \pi_{chain}^{-1} (\eta_{p,k}) \rangle \\
= \quad &\langle \cocyc^{p-1,k} \, , \, i_{chain}^{-1} \circ \partial_{chain}
\circ \pi_{chain}^{-1} (\eta_{p,k}) \rangle \\
= \quad &1. 
\end{align*}

Let us count the number of canonical $(p,k)$-cocycles $i_{cochain}^{-1} \circ \delta_{cochain} \circ \pi_{cochain}^{-1} (\cocyc^{p-1,k})$ that we can construct from the ${d \choose p-1}$ distinct canonical $(p-1,k)$-cocycles $\cocyc^{p-1,k}$. \\
Since $i_{cochain}$ is a bijection, $i_{cochain}^{-1}$ is uniquely defined. But $\pi_{cochain}^{-1} (\cocyc^{p-1,k})$ can be any preimage of $\cocyc^{p-1,k}$ by $\pi_{cochain}$. We use the same technique as in the construction of the cohomology basis represented by canonical cocycles. 

The $k^{th}$ element of the basis of the classical code $c_k$ has weight at
least $d$. Let $I$ be the $(p-1)$-direction of the canonical cocycle $\cocyc^{p-1,k}$. At least $(d-p+1)$ coordinates are in $\text{Supp}(c_k) \backslash I$. Denoting by $j$ one of these $(d-p+1)$ coordinates, the $(p-1)$-cochain obtained by only keeping the $(p-1)$-faces of $\cocyc^{p-1,k}$ having a 0 at coordinate $j$ is a preimage of $\cocyc^{p-1,k}$ by the $\pi_{cochain}$ associated to a decomposition $\cC_k = C_{k-1} \cup (C_{k-1} \oplus c_k)$ such that $\forall x \in C_{k-1}, x_j = 0$. Applying $\delta_{cochain}$ to this cochain amounts to replacing this 0 at coordinate $j$ of every $(p-1)$-face by a $*$ and yields $\cocyc^{I \cup \{j\},p,k-1}$. Applying $i_{cochain}^{-1}$ gives $\cocyc^{I \cup \{j\},p,k}$. 

With this procedure each canonical $(p,k)$-cocycle $\cocyc^{I \cup \{j\},p,k}$
has been counted at most $p$ times. We have therefore constructed at least
$\frac{d-p+1}{p} {d \choose p-1} = {d \choose p}$ distinct canonical
$(p,k)$-cocycle orthogonal to $\eta_{p,k}$. The induction step is proved in this case too. \\

\noindent
Moreover the value $d \choose p$ is attained by the product cycles
$\cyc_{p,k}\tbinom{c_i}{p_i}_{1 \leq i \leq k}$ such that $p_1 = p$, $p_{i \neq 1} = 0$ and $c_1$ has weight $d$.
\end{proof}

We have thus established:
\param*

\section{Local testability}
\label{sec:LTC}

The goal of this section is to study the local testability of hemicubic codes. We first establish in \ref{subsec:LTC} that the one-qubit hemicubic code is locally testable, before discussing generalized hemicubic codes in \ref{subsec:LTC2}.

\subsection{Case of the 1-qubit hemicubic code}
\label{subsec:LTC}

We first prove the local testability of the hemicube code.
\LTC*

This improves upon Hastings' construction \cite{has16} obtained by taking the product of two $n$-spheres and which displays soundness $s = \Theta\left(\frac{1}{\log^2 N}\right)$. (In Ref.~\cite{has16}, the notion of soundness does not include a normalization by the logarithmic weight of the generators.)
We leave it as an important open question whether the bounds of Theorem \ref{thm:ltc} are tight or not. As far as we know, it may be possible to improve the $\log N$ to $\Theta(1)$. As we will mention later, this would imply that the generalized hemicubic code obtained as the quotient of the cube by a code of dimension 2 would also display local testability.

In this section, we will work in a symmetrized version of the hemicubic code:
instead of describing a $p$-face of the hemicube by an equivalence class of the form $\{x, \bar{x}\}$, we consider the chain $x + \bar{x}$ over the Hamming cube. In the language of the previous section, we work with $i_p(E)$ rather than directly with a $p$-chain $E$. As long as all the considered sets $S$ are symmetric, i.e., are in the image of $i_p$, there should not be any risk of confusion. In particular, any symmetric set $S$ of $p$-faces in the Hamming cube corresponds to a set of $|S|/2$ qubits.

The local testability of the hemicubic code is a consequence of Lemmas \ref{thm:chain} and
\ref{thm:cochain} that we state now: we use the notation $\|\;\|$ for the
Hamming weight, or number of cells in a chain.

We say that a $p$-chain $X$ is a \emph{filling} of $Y$ if $\partial X=Y$. We say that a $p$-cochain $X$ is a \emph{cofilling} of $Y$ if $\delta X = Y$.

\begin{lemma} \label{thm:chain}
Let $E$ be a $p$-chain of $C_p^n=C_p(Q^n/\cC_r)$, where $\cC_r ={ 00\ldots0, 11\ldots 1}$ is the repetition code. Then there exists a $p$-chain $F$ which is a filling of $\partial E$, satisfying $\partial F = \partial E$, such that 
\begin{align*}
\|F\| \leq c_{n,p} \|\partial E\|,
\end{align*}
with 
\begin{align*}
c_{n,p}= \frac{(n-p+1)(n-p)}{2p} \sum_{m=n-p+1}^{n} \frac{1}{m}.
\end{align*}
\end{lemma}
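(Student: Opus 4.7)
The plan is to prove the bound by induction on $p$, using the one-step recursion
\[
c_{n,p} \;=\; \frac{p-1}{p}\, c_{n-1,p-1} \;+\; \frac{(n-p+1)(n-p)}{2pn},
\]
which one can check follows directly from the closed form of $c_{n,p}$. The base case $p = 1$ yields $c_{n,1} = (n-1)/2$, which matches the cruder bound of Lemma \ref{lem:fill-cofill} and can be established by a direct cone construction in the hemicube.

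At the level of fillings, I would fix a coordinate $i \in [n]$ and use the following decomposition: a $p$-cell of the hemicube either has $*$ at position $i$ (in which case, after dropping this star, it becomes a $(p-1)$-cell of the smaller hemicube $Q^{n-1}/\cC_r$), or has $0$ or $1$ at position $i$ (in which case, choosing the $0$ representative under antipodal identification, it is a $p$-cell of the Hamming cube $Q^{n-1}$). This gives a direct-sum decomposition $C_p^n \cong C_{p-1}^{n-1} \oplus C_p(Q^{n-1})$. Writing $E = E^* \oplus E^0$ accordingly, $\partial E$ splits into three pieces indexed by the symbol at position $i$, two of which are glued by the antipodal identification.

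I would then handle the pieces separately. The sub-boundary living in the smaller hemicube $C_{p-2}^{n-1}$ can be filled using the induction hypothesis on $Q^{n-1}/\cC_r$, contributing roughly $c_{n-1,p-1}$ times its weight. The residual pieces of $\partial E$, living in a slice of the Hamming cube $Q^{n-1}$, can be filled using Dotterrer's tight filling bound for the Hamming cube \cite{dot16}. Averaging the resulting bound uniformly over the $n$ possible choices of pivot coordinate $i$ extracts the factor $1/n$ in the additive term and produces exactly the recursion above; selecting the best coordinate then yields the stated closed form.

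The main obstacle I anticipate is the careful bookkeeping under the antipodal identification. A $(p-1)$-cell of $\partial E$ and its antipode represent the same element of $C_{p-1}^n$, so the naive star-promotion of representatives may create spurious double counting near the interface between the $0$-slice and $1$-slice. Choosing canonical representatives consistently (for example, always preferring $0$ over $1$ in the pivot coordinate), and verifying that the Hamming-cube slice filling lifts to a chain in $C_p^n$ whose boundary matches $\partial E$ correctly, is the technical heart of the proof and is what is ultimately responsible for the factor $1/(2n)$ in the recursion — compared to the raw Hamming-cube bound of $(n-p)/2$ that would follow from a cruder symmetrisation argument.
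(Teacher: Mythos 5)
Your proposal follows essentially the same route as the paper's proof: the same cut decomposition $Z=\partial E = 0Z_0 \oplus *Z_* \oplus 1Z_1$, the induction hypothesis applied to $Z_*=\partial E_*$ in the smaller hemicube to produce $F_*$, Dotterrer's filling bound in the Hamming-cube slice, antipodal symmetrisation to the $1$-slice, and averaging over the $n$ cuts to obtain exactly the recursion $c_{n,p}=\tfrac{p-1}{p}c_{n-1,p-1}+\tfrac{(n-p+1)(n-p)}{2pn}$ you state. The one detail to make explicit when executing the plan is that the chain handed to Dotterrer's lemma must be $Z_0\oplus F_*$ rather than $Z_0$ alone (only the former is a cycle, since $\partial Z_0 = Z_* = \partial F_*$), and this correction term is precisely what makes $\partial F=\partial E$ come out right at the interface you flag as the technical heart.
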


\begin{lemma} \label{thm:cochain}
Let $E$ be a $p$-cochain of $C_p^n$. Then there exists a $p$-cochain $F$ which
is a cofilling of $\delta E$, satisfying $\delta F = \delta E$, such that 
\begin{align*}
\|F\| \leq c_{n,p}'  \|\delta E\|,
\end{align*}
with 
\begin{align*}
c'_{n,p}=(n-p-1) \sum_{m=n-p}^n \frac{1}{m}.
\end{align*}
\end{lemma}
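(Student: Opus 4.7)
The plan is to proceed by induction on $n$, with $p$ held fixed, and to work in the symmetrized representation of the hemicube: a $p$-cochain $E \in C_p^n$ is identified with an antipodally invariant $p$-cochain on the Hamming cube $Q^n$ (a symmetric cochain satisfying $\overline{E} = E$, where $\overline{x}$ denotes the antipode of $x$). The key idea is to single out one coordinate, decompose $F$ according to that coordinate, and later average over the choice of coordinate in order to obtain the sharp constant.

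Fix coordinate $n$. Any antipodally invariant $p$-cochain $F$ decomposes uniquely as $F = F_* \oplus F_0 \oplus \overline{F_0}$, where $F_*$ collects the $p$-faces of $F$ with a star in coordinate $n$ and $F_0$ those with a $0$ in coordinate $n$ (the $1$-part is forced by symmetry to equal $\overline{F_0}$). Here $F_*$ identifies with a $(p-1)$-cochain of the $(n-1)$-hemicube on coordinates $1,\ldots,n-1$, while $F_0$ is an arbitrary $p$-cochain of the $(n-1)$-cube. A short computation yields
\begin{align*}
(\delta F)_0 &= \delta_{n-1} F_0, \\
(\delta F)_* &= \delta_{n-1} F_* \oplus F_0 \oplus \overline{F_0},
\end{align*}
where $\delta_{n-1}$ is the coboundary operator on the $(n-1)$-cube. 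Given the target $Y = \delta E$ split as $Y = Y_0 \oplus Y_*$, I would first find $F_0$ as a cofilling of $Y_0$ in the $(n-1)$-cube via Dotterrer's cofilling bound, and then take $F_*$ to be a cofilling of the symmetric target $Y_* \oplus F_0 \oplus \overline{F_0}$ in the $(n-1)$-hemicube, applying the induction hypothesis at parameters $(n-1,p-1)$. The second subproblem is well posed because $F_0 = E_0 + \delta_{n-1} G$ for some $G$ (by contractibility of the cube), whence the target equals $\delta_{n-1}(E_* \oplus G \oplus \overline{G})$, an explicit symmetric coboundary.

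To reach the claimed constant $c'_{n,p} = (n-p-1)\sum_{m=n-p}^n 1/m$, the construction is averaged over a uniformly random choice of distinguished coordinate $i \in [n]$, so that the expected contributions of $\|Y_*\|$ and $\|Y_0\|$ scale with the average fractions of star and non-star coordinates in a $(p+1)$-face, namely $(p+1)/n$ and $(n-p-1)/n$. Combining these expected contributions with the Dotterrer cube constant and the induction hypothesis produces a recursion of the form $c'_{n,p} = c'_{n-1,p-1} + (n-p-1)/n$, which telescopes (using $c'_{n-p,0} = (n-p-1)/(n-p)$ as the base case) to the claimed closed form.

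The principal obstacle is showing that Dotterrer's cube cofilling constant is compatible with the hemicube induction so that the recursion closes with exactly the stated coefficient $(n-p-1)$ in front of the harmonic sum, and tracking the factor of $2$ relating cube and hemicube weights (each hemicube face corresponds to two cube faces in the symmetric representation). A finer technical point is to verify that the choice of $F_0$ can be made so that $\|F_0 \oplus \overline{F_0}\|$ in hemicube units equals $\|F_0\|$ in cube units rather than $2\|F_0\|$, which follows once one accounts carefully for the weight conventions.
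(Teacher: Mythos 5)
Your proposal is correct and follows essentially the same route as the paper's proof: the same cut decomposition, the same construction ($F_0$ from Dotterrer's cube cofilling of $Z_0$, the $1$-part forced to $\overline{F_0}$, and $F_*$ from the induction hypothesis applied to $Z_* \oplus F_0 \oplus \overline{F_0}$), the same averaging over the cut, and the same telescoping recursion. Your explicit check that the star-part target is a genuine symmetric coboundary (via $F_0 = E_0 \oplus \delta_{n-1}G$) is a detail the paper leaves implicit, and your recursion $c'_{n,p} = c'_{n-1,p-1} + (n-p-1)/n$ is in fact the one consistent with the stated closed form.
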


In particular, the following upper bounds hold for $c_{n,p}$ and $c'_{n,p}$ (obtained by bounding each term in the sum by the largest term):
\begin{align*}
c_{n,p} \leq \frac{n-p}{2}, \quad c'_{n,p} \leq p+1. 
\end{align*}

It is straightforward to translate these results in the language of quantum codes. Indeed, given an arbitrary Pauli error $E = (E_X, E_Z)$ where $E_X$ and $E_Z$ represent the supports of the $X$-type and $Z$-type errors, the syndrome of $E$ is given by the pair $(\partial E_X, \delta E_Z)$, where $E_X$ and $E_Z$ are interpreted as a $p$-chain, and a $p$-cochain respectively.
To compute the soundness of the quantum code, one needs to lower bound the ratio:
\begin{align*}
\min_{(E_X,E_Z)} \frac{\|\partial E_X\| + \|\delta E_Z\|}{\|[E_X]\| + \|[E_Z]\|}
\geq \min \left\{\min_{E_X} \frac{\|\partial E_X\| }{\|[E_X]\|}, \min_{E_Z}
\frac{ \|\delta E_Z\|}{ \|[E_Z]\|}  \right\},
\end{align*}
where the minimum is computed over all errors with a nonzero syndrome, i.e., for $p$-chains $E_X$ which are not a $p$-cycle and $p$-cochains $E_Z$ which are not a $p$-cocycle. In these expressions, we denote by $[E]$ the representative of the equivalence class of error $E$, with the smallest weight. Indeed, recall that two errors differing by an element of the stabilizer group (that is, by a boundary or a coboundary) are equivalent. 
The fact that one considers $[E]$ instead of $E$ makes the analysis significantly subtler in the quantum case than in the classical case. A solution is to work backward (as was also done in \cite{has16}): start with a syndrome and find a small weight error giving rise to this syndrome. 
This is essentially how Lemmas \ref{thm:chain} and \ref{thm:cochain} proceed to bound each term:
\begin{align}\label{lem2ltc}
\min_{E_X, \partial E_X \neq 0} \frac{\|\partial E_X\| }{\|[E_X]\|} \geq
\frac{1}{c_{n,p}}, \quad \min_{E_Z, \delta E_Z \neq 0} \frac{ \|\delta E_Z\|}{
\|[E_Z]\|}  \geq \quad \frac{1}{c'_{n,p}}. 
\end{align}
Indeed, one should think of the $p$-chain $\partial E$ in Lemma \ref{thm:chain}
as the syndrome associated to error $E$, and the lemma shows the existence of an error $F$ with small weight (possibly different from $E$) with the same syndrome. Lemma \ref{thm:cochain} provides the equivalent result for the other type of errors. 
The soundness in Theorem \ref{thm:ltc} then results from $c_{n,p}, c'_{n,p} = O(\log N)$, where the second logarithmic factor (yielding a final soundness of $1/\log^2 N$) comes from the additional normalization by the generator weights. 

Before establishing these two lemmas, we recall two similar results due to
Dotterrer and holding in the hypercube instead of the hemicube, that is, without taking the quotient by the repetition code \cite{dot16}.

\begin{lemma}[\cite{dot16}] \label{lem:dott1}
Let $z$ be a $(p-1)$-dimensional $\mathbbm{F}_2$-cycle in the $n$-dimensional
cube $Q^n$. There exists a $p$-chain $y$ such that $\partial y = z$ and 
\begin{align*}
\|y \| \leq \frac{n-p+1}{2p} \|z\|.
\end{align*}
\end{lemma}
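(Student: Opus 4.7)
The plan is to prove Lemma \ref{lem:dott1} by induction on $n$, via a slicing argument along a fixed coordinate (say the last one). Partitioning every $(p-1)$-face of $Q^n$ by the value of that coordinate in $\{0,1,*\}$ gives a decomposition $z = z_0 \oplus z_1 \oplus z_*$. Identifying each slice with a copy of $Q^{n-1}$, I would view $z_0$ and $z_1$ as $(p-1)$-chains of $Q^{n-1}$ and $z_*$ (with the trailing star dropped) as a $(p-2)$-chain of $Q^{n-1}$. Projecting the cycle condition $\partial z = 0$ onto the three slices then yields $\partial z_* = 0$ in $Q^{n-1}$ and $\partial z_b = z_*$ for $b \in \{0,1\}$.

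Applying the induction hypothesis to the $(p-2)$-cycle $z_*$ produces a $(p-1)$-chain $y_*$ in $Q^{n-1}$ with $\partial y_* = z_*$. Reattaching the trailing star yields a $p$-chain $Y_* := *\, y_*$ in $Q^n$ whose boundary equals $z_* \oplus 0\, y_* \oplus 1\, y_*$. To cancel the extra pieces $0\, y_*$ and $1\, y_*$, I would then fill the $(p-1)$-cycles $z_b \oplus y_*$ inside each slice $Q^{n-1}_b$ (these are cycles of $Q^{n-1}$ precisely because $\partial z_b = z_* = \partial y_*$). By induction they admit $p$-chain fillings $Y_0, Y_1$, and the assembly $Y := Y_* \oplus Y_0 \oplus Y_1$ satisfies $\partial Y = z$ in $Q^n$ by construction.

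To obtain the sharp constant $\frac{n-p+1}{2p}$ rather than a qualitative $O(n/p)$ bound, I would average the above construction over the $n$ possible choices of slicing coordinate, exploiting the $S_n$-symmetry of the cube. This symmetrization tightens the naive recursion to one whose solution equals the claimed coefficient. Equivalently, one can cast the filling problem as a linear program whose optimizer is a fully symmetric $(p-1)$-cycle; on that extremal cycle the ratio $\|y\|/\|z\|$ equals $\frac{n-p+1}{2p}$ exactly, which simultaneously proves the bound and witnesses its tightness.

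The main obstacle is the bookkeeping of constants. The naive induction produces a cross-term $\|y_*\|$ that appears in the bounds for both $Y_0$ and $Y_1$, and without the symmetrization step the resulting recursion overshoots the target. Verifying that the symmetrized construction exactly recovers the factor $\frac{n-p+1}{2p}$ at every step, and that the base cases ($n=p$ trivially, $p=1$ via the Hamming distance between two antipodal vertices) match the claimed formula, is the delicate part of the argument.
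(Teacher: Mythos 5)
Your decomposition $z = 0z_0 \oplus 1z_1 \oplus *z_*$, the identities $\partial z_b = z_*$, $\partial z_* = 0$, and the verification that $Y = *\,y_* \oplus 0\,Y_0 \oplus 1\,Y_1$ fills $z$ are all correct, but the quantitative part of the argument fails, precisely at the point you flag as delicate. Filling $z_*$ first and then correcting \emph{both} slices forces the cross-term $\|y_*\|$ to enter twice, and averaging over the slicing coordinate does not repair this: with $\mathbb{E}(\|z_0\|+\|z_1\|) = \frac{n-p+1}{n}\|z\|$ and $\mathbb{E}\|z_*\| = \frac{p-1}{n}\|z\|$, and plugging in the target constants $\frac{n-p}{2p}$ for the slice fillings and $\frac{n-p+1}{2(p-1)}$ for $y_*$, your construction gives
\begin{align*}
\mathbb{E}\|Y\| \;\leq\; \frac{n-p}{2p}\cdot\frac{n-p+1}{n}\,\|z\| \;+\; \Bigl(\tfrac{n-p}{p}+1\Bigr)\frac{n-p+1}{2(p-1)}\cdot\frac{p-1}{n}\,\|z\| \;=\; \frac{n-p+1}{2p}\Bigl(1+\frac{n-p}{n}\Bigr)\|z\|,
\end{align*}
which overshoots the target by a factor close to $2$ when $p\ll n$. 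Worse, since the induction hypothesis at level $n-1$ is then not available with the stated constant, the recursion does not close at all: one checks that no multiple $A\cdot\frac{n-p+1}{2p}$ is a fixed point of the recursion $c_{n,p} \le c_{n-1,p}\frac{n-p+1}{n} + (2c_{n-1,p}+1)\,c_{n-1,p-1}\frac{p-1}{n}$ for any constant $A>0$. The symmetrization you invoke is already built into this computation, so it cannot be the missing ingredient, and the linear-programming remark is an assertion rather than an argument.

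The missing idea is the \emph{cone} construction, which is what Dotterrer actually uses (the paper does not prove this lemma itself --- it imports it from \cite{dot16} --- but it records the construction in Section \ref{sec:decoding} and adapts it in the proof of Lemma \ref{thm:chain}). Since $\partial z_1 = z_*$, the single cone $*\,z_1$ already satisfies $\partial(*\,z_1) = 0\,z_1 \oplus 1\,z_1 \oplus *\,z_*$: it handles the star-slice for free, with no recursive call in $p$ and no duplicated cross-term. One then only needs to fill the one cycle $z_0 \oplus z_1$ inside a single copy of $Q^{n-1}$, setting $y = *\,z_1 \oplus 0\,D_{n-1,p}(z_0\oplus z_1)$, which yields $\|y\| \le \frac{n-p}{2p}\|z_0\| + \frac{n+p}{2p}\|z_1\|$. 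Averaging over the $2n$ choices of cut coordinate and of which side is labelled $0$, each of $\|z_0\|, \|z_1\|$ has expectation $\frac{n-p+1}{2n}\|z\|$, the two coefficients sum to $\frac{n}{p}$, and one lands exactly on $\frac{n-p+1}{2p}\|z\|$. Your remarks on the base cases and on tightness (antipodal vertices for $p=1$) are fine once the construction is corrected.
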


\begin{lemma}[Proposition 8.2.1 of \cite{dot13}] \label{lem:dott2}
Let $z$ be a $(p+1)$-dimensional $\mathbbm{F}_2$-cocycle in the $n$-dimensional
cube $Q^n$. There exists a $p$-cochain $y$ such that $\delta y = z$ and 
\begin{align*}
\|y \| \leq  \|z\|.
\end{align*}
\end{lemma}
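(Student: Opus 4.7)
The plan is to prove Lemma~\ref{lem:dott2} by constructing an explicit small cofilling via a chain contraction on the cochain complex of $Q^n$. Since $Q^n$ is contractible, its cohomology vanishes in positive degree and every $(p+1)$-cocycle $z$ is already a coboundary; the content of the lemma is therefore the quantitative bound $\|y\|\le\|z\|$ rather than mere existence.

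The first step is to construct a linear chain contraction $h:C^q(Q^n)\to C^{q-1}(Q^n)$ for $q\ge 1$ satisfying the chain homotopy identity $\delta h+h\delta=\mathrm{id}$. Given such an $h$, the assignment $y:=hz$ automatically yields a cofilling of any cocycle $z$, since $\delta y=\delta hz+h\delta z=z$. If moreover $h$ sends each $q$-face to the indicator of at most one $(q-1)$-face, then the weight bound $\|hz\|\le\|z\|$ follows from $\mathbb{F}_2$-linearity, since cancellations inside the sum can only decrease the support.

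To define $h$ explicitly, I would exploit the product decomposition $Q^n=Q^1\times Q^{n-1}$ and the standard deformation retract of $Q^n$ onto the base vertex $0^n$, performed one coordinate at a time. The natural recursive rule on a $q$-face $x=x_1x_2\ldots x_n$ is: if $x_1=*$, set $h(x):=0\,x_2\ldots x_n$ (delete the leading star, dropping dimension by one); if $x_1\in\{0,1\}$, set $h(x):=x_1\cdot h(x_2\ldots x_n)$, applying the map recursively in $Q^{n-1}$. The base case $n=1$ is an immediate check, and the inductive step would verify $\delta h+h\delta=\mathrm{id}$ by case analysis on $x_1$, using the induction hypothesis on $Q^{n-1}$ in the second case.

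The main obstacle is to make the chain homotopy identity work cleanly: a preliminary computation shows that the naive ``replace the leftmost star by $0$'' rule produces residual terms in $\delta h(x)+h\delta(x)$ whenever the leftmost star of $x$ is not at coordinate $1$ and some preceding fixed coordinate carries the value $1$. These residuals must be absorbed either by modifying $h$ with Künneth-style correction terms, or by reorganizing the construction so that each step acts on a single ``active'' coordinate and the contributions from other coordinates do not interact with it. If this direct approach proves too delicate, my fallback would be an inductive patching argument: decompose $z=z_0+z_1+z_*$ according to whether the last coordinate is $0$, $1$, or $*$, exploit the cocycle identity $\delta z=0$ to obtain $z_0+z_1=\delta z_*$ inside $Q^{n-1}$, apply the induction hypothesis to build cofillings of $z_0$ and of $z_*$ in the smaller cubes, and lift and combine them carefully into a global $p$-cochain of $Q^n$. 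In either route, the key combinatorial point is that the construction is designed so that each $(p+1)$-face of $z$ contributes at most one $p$-face to $y$, yielding the tight constant $1$.
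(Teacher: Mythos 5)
First, note that the paper does not prove this lemma at all: it is imported verbatim from Dotterrer's thesis (Proposition 8.2.1 of \cite{dot13}), so there is no in-paper proof to match; the closest in-paper relative is the recursive argument for Lemma \ref{thm:cochain}. Your primary route — a linear contraction $h$ sending each $(p+1)$-face to at most one $p$-face with $\delta h+h\delta=\mathrm{id}$ (or even just $\delta h=\mathrm{id}$ on cocycles) — is not merely ``delicate'', it is impossible, so the residual terms you observed cannot be absorbed without destroying the weight bound. Already in $Q^2$ for $1$-cocycles: applying the requirement $\delta h(\delta v)=\delta v$ to the four vertices forces $h(*0)\oplus h(0*)=00$, $h(0*)\oplus h(*1)=01$, $h(1*)\oplus h(*0)=10$ and $h(1*)\oplus h(*1)=11$ (the a priori ambiguity by the constant $0$-cocycle $00\oplus01\oplus10\oplus11$ is excluded because each $h(e)$ has weight at most $1$, so each left-hand side has weight at most $2$); summing the four identities gives $0=00\oplus01\oplus10\oplus11$, a contradiction. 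Any ``correction term'' makes some $\|h(f)\|\geq 2$, and then subadditivity no longer yields $\|hz\|\leq\|z\|$. More fundamentally, your closing claim that ``each $(p+1)$-face of $z$ contributes at most one $p$-face to $y$'' is not how the constant $1$ arises: the bound comes from a global weight count, not a face-by-face injection.

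Your fallback is the right skeleton but, as stated, it does not close. Two problems: (i) $z_*$ is not a cocycle (one has $\delta z_*=z_0\oplus z_1$), so ``build a cofilling of $z_*$'' is not meaningful; (ii) cofilling $z_0$ and $z_1$ separately and then cofilling the residual $p$-cocycle costs about $2\|z_0\|+2\|z_1\|+\|z_*\|$, i.e.\ constant $2$, not $1$. The missing trick is the following. Write $z=0z_0\oplus 1z_1\oplus *z_*$; the cocycle condition gives $\delta z_0=\delta z_1=0$ and $z_0\oplus z_1=\delta z_*$. Assume WLOG $\|z_0\|\leq\|z_1\|$, and by induction on $n$ (base case $n=p+1$, where $z$ is a multiple of $*^{n}$ and $y=0*^{n-1}$ works) obtain $y_0$ with $\delta y_0=z_0$ and $\|y_0\|\leq\|z_0\|$. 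Then set $y:=0y_0\oplus 1(y_0\oplus z_*)$: one checks $\delta y=0z_0\oplus 1(z_0\oplus\delta z_*)\oplus *(y_0\oplus y_0\oplus z_*)=0z_0\oplus 1z_1\oplus *z_*=z$, and $\|y\|\leq 2\|y_0\|+\|z_*\|\leq 2\|z_0\|+\|z_*\|\leq\|z_0\|+\|z_1\|+\|z_*\|=\|z\|$. Note that this map is not linear (it depends on which of $z_0,z_1$ is lighter), which is consistent with the non-existence of the linear $h$ above.
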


The constants in Lemmas \ref{lem:dott1} and \ref{lem:dott2} are tight \cite{dot13}.
We don't know, however, if it is also the case of the constants in Lemmas \ref{thm:chain} and \ref{thm:cochain}: in fact it is not even clear that the constants have to be worse than those of \ref{lem:dott1} and \ref{lem:dott2} since the examples saturating these bounds are not allowed in the symmetric (quantum) case. For instance, the cycles of the cube $Q^n$ that saturate the bound of Lemma \ref{thm:chain} are symmetric, meaning that they disappear in the case of the hemicubic code.

\begin{proof}[Proof of Lemma \ref{thm:chain}]

We will prove the claim by recurrence over both $p$ and $n$. 

Similarly to Dotterrer in \cite{dot16}, we divide the cube in three parts: we first choose a coordinate that we call the ``cut'' and partition the faces depending on their value, 0, 1 or $*$, for the cut. Later, we will perform an optimization over the choice of cut, but in the following, we consider a cut along the first coordinate to fix the notations.

Let us define the chain $Z = \partial E$ and decompose it as 
\begin{align*}
Z &= 0Z_0 \, \oplus \, * Z_* \, \oplus \, 1Z_1
\end{align*}
where $Z_0$, $Z_1$ are chains of $C_{p-1}^{n-1}$ and $Z_*$ is a chain of $C_{p-2}^{n-1}$.
Since $Z$ is a cycle, we have that $\partial Z=0$ which implies
\begin{align}\label{eqn:cond1}
Z_* = \partial Z_0 = \partial Z_1.
\end{align}
We can define the chains $E_0, E_1$ and $E_*$ in an analogous fashion, \emph{via} $E=0E_0 \, \oplus \, * E_* \, \oplus \, 1E_1$, and from $\partial E=Z$, we infer in particular that $Z_* = \partial E_*$. Applying the induction hypothesis to $Z_*$ gives a $(p-1)$-chain $F_*$ such $\partial F_* = \partial E_*$ and
\begin{align*}
\|F_*\| \leq c_{n-1, p-1} \|\partial E_*\|.
\end{align*}
Observe now that $Z_0 \, \oplus \, F_*$ is a cycle: indeed 
\begin{align*}
\partial (Z_0 \, \oplus \, F_*) = \partial Z_0 \, \oplus \, Z_* = 0,
\end{align*}
from Eq.~\eqref{eqn:cond1}.
Applying Lemma \ref{lem:dott1} for the standard hypercube, we can find a $p$-chain $F_0$ (that may not be symmetric) such that 
\begin{align*}
\partial F_0 = Z_0 \, \oplus \, F_*
\end{align*}
and 
\begin{align} \label{eqn:36}
\|F_0\| \leq \frac{n-p}{2p} \|Z_0 \, \oplus \, F_*\|.
\end{align}
Define $F_1 = \overline{F_0}$ so that $\|F_1\| = \|F_0\|$ and $\partial F_1 = Z_1 \, \oplus \, F_*$.
We claim that the symmetrized chain $F = 0F_0 \, \oplus \, *F_* \, \oplus \, 1F_1$ satisfies the conditions of the theorem.
First, $F$ is a filling of $\partial E$:
\begin{align*}
\partial F &= 0( \partial F_0 \, \oplus \, F_*) + 1 (\partial F_1 \, \oplus \, F_*) \, \oplus \, * \partial F_*\\
&= 0 Z_0 \, \oplus \, 1 \overline{Z_0} \, \oplus \, * Z_* = \partial E.
\end{align*}
Second
\begin{align*}
\|F\| &= 2 \|F_0\| + \|F_*\|\\
&\leq 2 \frac{n-p}{2p} \|Z_0 \, \oplus \, F_*\| +  \|F_*\| & \text{from Eq.~\eqref{eqn:36}}\\
& \leq \frac{n-p}{p} \|Z_0\| + \frac{n}{p} \|F_*\| & \text{from triangle inequality}\\
& \leq \frac{n-p}{p} \|Z_0\| + \frac{n}{p} c_{n-1,p-1} \|Z_*\| & 
\end{align*}
Let us minimize the size of $F$ over the choice of the cut. In particular, the minimal value of $\|F\|$ is not larger than the expectation over the cut choice, when the coordinate of the cut is chosen uniformly at random. This expectation is easily computed by noticing that $\mathbbm{E} \|Z_0\| = \frac{n-p+1}{2n}\|Z\|$.
To see this, observe that there are $2n$ possible choices of cut: $n$ choices of coordinates and 2 choices to define the $0$ and $1$ orientation. Then each $(p-1)$-face is overcounted $n-(p-1)$ times because it lies in $n-(p-1)$ many faces of dimension $n-1$.
In addition, we get $\mathbbm{E} \|Z_*\| = \|Z\| -\mathbbm{E} \|Z_0\|-\mathbbm{E} \|Z_1\|= \frac{p-1}{n} \|Z\|$.
Let us denote by $F$ the chain of minimum size when optimizing over the cut choice. We have:
\begin{align*}
\frac{\|F\|}{\|Z\|} & \leq  \frac{(n-p-1)(n-p)}{2np} + \frac{n}{p} c_{n-1,p-1} \frac{p-1}{n}.
\end{align*}
In particular, this establishes that we can take
\begin{align*}
c_{n,p}  =  \frac{(n-p-1)(n-p)}{2np} + \frac{p-1}{p} c_{n-1,p-1}.
\end{align*}
The base case, $c_{n,1} = \frac{n-1}{2}$, differs from the value $\frac n 2$ that one would obtain in the cube with the assumption that the cycle is the boundary of a symmetric $(p+1)$-chain.
The recurrence relation can be solved as follows:
\begin{align*}
c_{n,p} &=\frac{(n-p-1)(n-p)}{2np} + \frac{p-1}{p} c_{n-1,p-1}\\
&= \frac{(n-p-1)(n-p)}{2} \left( \frac{1}{np} + \frac{1}{(n-1)p}  + \frac{p-2}{p-1} c_{n-2,p-2} \right)\\
&= \frac{(n-p-1)(n-p)}{2p} \left( \frac{1}{n} + \frac{1}{n-1} + \ldots \frac{1}{n-p+1} \right).
\end{align*}
This establishes the result.
\end{proof}

\begin{proof}[Proof of Lemma \ref{thm:cochain}]
We proceed in a similar way and establish the claim by recurrence over $n$ and $p$.
We pick an arbitrary coordinate (a cut in the language of Dotterrer) and denote $Z = \delta E = 0 Z_0 \, \oplus \, 1 Z_1 \, \oplus \, * Z_*$. The cofilling $F$ of $Z$ is defined as $F = \tilde{D}_{n,p}(Z)$ recursively as follows: 
\begin{align*}
F &= \tilde{D}_{n,p} (Z)\\
&=: 0 D_{n-1,p}(Z_0) \, \oplus \, 1 \overline{D_{n-1,p}(Z_0)}\, \oplus \, * \tilde{D}_{n-1 p-1} (D_{n-1,p}(Z_0) \,\oplus \, \overline{D_{n-1,p}(Z_0)} \, \oplus \, Z_*).
\end{align*}
where $D_{n-1,p}(Z_0)$ is the cofilling of the cocycle $Z_0$ obtained by Dotterrer's algorithm (i.e, the cofilling promised by Lemma \ref{lem:dott2}). Here, $\tilde{D}_{n-1,p-1}$ is the symmetric cofillings with parameters $n-1$ and $p-1$ given by the induction hypothesis.
Exploiting the result of Lemma \ref{lem:dott2}, we obtain that $F$ has size:
\begin{align}\label{eq64}
\|F\| &\leq 2 \|Z_0\| + c'_{p-1,n-1} (2 \|Z_0\| \, + \, \|Z_*\|).
\end{align}
Averaging over the choice of cut,
\begin{align*}
\mathbbm{E} \|F\|  &\leq 2 \frac{\mathbbm{E} \|Z_0\|}{2n} + c'_{p-1,n-1} \|Z\|\\
& \leq \left(\frac{n-p}{n} + c'_{p-1,n-1} \right) \|Z\|
\end{align*} 
which yields 
\begin{align*}
c'_{p,n} &\leq \frac{n-p}{n} + c'_{p-1,n-1}.
\end{align*}
The recurrence is easily solved:
\begin{align*}
c'_{p,n} &\leq (n-p) \left( \frac{1}{n-p+1} + \cdots \frac{1}{n}\right).
\end{align*}
The base case is $c'_{n,1} = \frac{1}{n}$.
This establishes the claim. 
\end{proof}

\subsection{Local testability of generalized hemicubic codes}
\label{subsec:LTC2}

In this subsection, we show that the same proof strategy as above can be applied to deal with quotients of the cube by linear codes of dimension $k=2$. Essentially the only change is that the recurrence now requires a bound on the soundness of the hemicubic code instead of a bound on the soundness of the standard cube. Because our bound on the former is worse by a factor $\log N$, we will not be able to control the soundness of the generalized hemicubic code as much as we would like. 

We now illustrate this point in the case of cycles and prove the following bound.

\begin{lemma} \label{thm:chain2}
Let $\cC = [n,2,d]$ be a linear code of dimension 2. Let $E$ be a $p$-chain of $C_p^n=C_p(Q^n/\cC)$. Then there exists a $p$-chain $F$ which is a filling of $\partial E$, satisfying $\partial F = \partial E$, such that 
\begin{align*}
\|F\| \leq c_{n,p}^{(2)} \|\partial E\|,
\end{align*}
with 
\begin{align*}
c_{n,p}^{(2)}= O(p!).
\end{align*}
\end{lemma}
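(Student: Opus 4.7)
The plan is to follow the inductive structure of Lemma \ref{thm:chain}, replacing the cube filling bound (Lemma \ref{lem:dott1}) with the hemicube filling bound (Lemma \ref{thm:chain} itself) at each step of the recursion. Since the hemicube constant $c_{n,p}$ exceeds the cube constant $\tfrac{n-p+1}{2p}$ by a factor of roughly $p$ for the parameter range of interest, each level of the recursion loses an extra multiplicative factor of order $p$, and iterating $p$ times yields the target $O(p!)$ bound.

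First I would fix a basis $c_1, c_2$ of $\cC$. Since $c_1 \neq c_2$ and both are nonzero, their supports must differ, so after a basis change and a coordinate permutation one may assume $c_2(1) = 1$ and $c_1(1) = 0$. I would then pick the first coordinate as the cut and, for a $p$-chain $E$ of $C_p(Q^n/\cC)$ lifted to $Q^n$, decompose
\[
E = 0 E_0 \oplus * E_* \oplus 1 E_1, \qquad Z := \partial E = 0 Z_0 \oplus * Z_* \oplus 1 Z_1.
\]
The identification imposed by $c_2$ enforces $E_1 = E_0 + c_2'$ and $Z_1 = Z_0 + c_2'$, where $c_2'$ is the restriction of $c_2$ to coordinates $2,\ldots,n$. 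The pieces $E_0, Z_0, E_*, Z_*$ all live in the quotient of $Q^{n-1}$ by $\langle c_1' \rangle$, with $c_1'$ the restriction of $c_1$; this is a hemicube-like complex (a product of a smaller cube and a smaller hemicube, to which the bounds of Lemma \ref{thm:chain} extend without essential change).

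Next I would apply the induction hypothesis to the $(p-1)$-boundary $Z_*$, viewed in the residual $k=2$ quotient of $Q^{n-1}$, to obtain a filling $F_*$ with $\|F_*\| \leq c_{n-1,p-1}^{(2)} \|Z_*\|$. The chain $Z_0 \oplus F_*$ is then a $(p-1)$-cycle in the hemicube $Q^{n-1}/\langle c_1' \rangle$, so Lemma \ref{thm:chain} provides $F_0$ with $\partial F_0 = Z_0 \oplus F_*$ and $\|F_0\| \leq c_{n-1,p}\, \|Z_0 \oplus F_*\|$. Taking $F_1$ to be the translate of $F_0$ by $c_2'$ ensures $\partial F_1 = Z_1 \oplus F_*$, and the assembled chain $F := 0 F_0 \oplus * F_* \oplus 1 F_1$ satisfies $\partial F = Z$. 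Using $\|F\| = 2\|F_0\| + \|F_*\|$ together with the triangle inequality and averaging over the $n$ possible cuts (with the usual expectations $\mathbbm{E}\|Z_0\| = O(\|Z\|/n)$ and $\mathbbm{E}\|Z_*\| = O(p\|Z\|/n)$), I would arrive at a recursion of the form
\[
c_{n,p}^{(2)} \leq \alpha(n,p)\, c_{n-1,p} + \beta(n,p)\, c_{n-1,p-1}^{(2)},
\]
with $\alpha(n,p) = O(1)$ and $\beta(n,p) = \tfrac{p-1}{p} + O(1/n)$. Unrolling this recursion $p$ times and using $c_{n-1,p} = O(p)$ in the relevant regime gives $c_{n,p}^{(2)} = O(p!)$.

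The main obstacle is precisely the gap between the hemicube and cube filling constants: in Lemma \ref{thm:chain}, the recursion telescoped cleanly because Dotterrer's bound $\tfrac{n-p+1}{2p}$ is small, whereas here $c_{n-1,p}$ injects an additional factor of order $p$ at each step. This is exactly the same gap that prevents us from strengthening the soundness of the one-qubit hemicubic code from $\log^{-2} N$ to $\log^{-1} N$: any improvement there would translate directly into a much better—conjecturally polynomial—bound on $c_{n,p}^{(2)}$ and hence establish local testability of the generalized hemicubic code with $k=2$.
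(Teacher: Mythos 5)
Your construction is essentially identical to the paper's: the same cut-based decomposition of $E$ into $0$-, $1$- and $*$-parts, the same splitting of $\cC$ into the subcode $A=\langle c_1\rangle$ fixing the cut and its coset shifted by a codeword with a $1$ on the cut, the induction hypothesis applied to $Z_*$ in the punctured $k=2$ quotient, Lemma~\ref{thm:chain} applied to the cycle $Z_0\oplus F_*$ in the $(n-1)$-dimensional hemicube $Q^{n-1}/A$, the translate $F_1=F_0+c_2'$, and the averaging over cuts. The qualitative explanation of where the $O(p!)$ comes from (the hemicube constant $c_{n-1,p}=\Theta(n-p)$ replacing Dotterrer's $\frac{n-p}{2p}$, costing a factor of order $p$ at each of the $p$ levels) also matches the paper's Eq.~\eqref{eq:rec2}.

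One slip worth fixing: the recursion you display, with $\beta(n,p)=\tfrac{p-1}{p}+O(1/n)$ as the coefficient of $c_{n-1,p-1}^{(2)}$, is not what your own construction produces and is inconsistent with your conclusion --- if $\beta$ were $O(1)$ per level, unrolling would give a bound polynomial in $p$, not $O(p!)$. The correct coefficient is $\bigl(2c_{n-1,p}+1\bigr)\cdot\frac{p-1}{n}=\Theta(p)$ for $p=O(n)$, since the filling $F_*$ of $Z_*$ gets fed into the hemicube filling step and thus picks up the factor $c_{n-1,p}=\Theta(n-p)$ on top of $\mathbbm{E}\|Z_*\|/\|Z\|=\frac{p-1}{n}$; it is this $\Theta(p)$ per level that yields $O(p!)$ after unrolling, exactly as in the paper.
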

We assume here that for any coordinate, there exists a codeword of $\cC$ with bit value 1 on this coordinate. If this is not the case, one can work in a Hamming cube of smaller dimension by forgetting this coordinate.

In the same way as before, we will choose to work in the standard Hamming cube, but restricting ourselves to sets (or chains, or cochains) of the form $\{x+\cC, y+\cC, \ldots\}$, i.e., sets $S$ such that $x \in S$ implies $x + c \in S$ for any codeword $c \in \cC$. 
In other words, we work with sets (or chains) of the form $i_p(E) = \bigoplus_{e \in E, c \in \cC} (e+c)$.

\begin{proof}[Proof of Lemma \ref{thm:chain2}]

Let us consider a $(p-1)$-chain $Z=\partial E$ corresponding to the boundary of an arbitrary $p$-chain $E$, symmetric with respect to the code $\cC$. Recall that this means that for any $c\in \cC$, it holds that $E+c= E$.
Without loss of generality, we can choose some $E_0$ and $E_*$, which are sets of $p$ and $(p-1)$-faces of the $(n-1)$-dimensional Hamming cube, such that
\begin{align*}
E = \bigoplus_{c \in \cC}\left( (0E_0 + c) \oplus (* E_* + c) \right).
\end{align*}
As before, this describes a partition with respect to the value of the symbol (either an element of $\F_2$ or a star) on the special coordinate called ``cut''.
That we can take $E_1$ to be empty is without loss of generality since we assumed that there are codewords of $C$ with bit value 1 for the cut. 
The boundary of $E$ is $Z = \partial E$ and our goal is to find a small symmetric filling $F$ such that $\partial F = \partial E$.
We will prove the result by induction on $n$ and $p$ by showing the existence of a map $\tilde{D}_{n,p}$ such that $\partial(\tilde{D}_{n,p} (\partial E)) = \partial E$ and $\|\tilde{D}_{n,p} (\partial E)\| \leq c_{n,p}^{(2)} \|\partial E\|$.

Let $1\alpha$ be a codeword of $\cC$ with bit value 1 on the cut ($\alpha$ is a word of length $n-1$). Again, as before, we pretend that the cut corresponds to the first coordinate to fix the notations. Let $A$ be the subcode of $\cC$ consisting of all codewords with bit value 0 on the cut. This yields a partition of $\cC$ as
\begin{align*}
\cC = A \cup (\alpha + A),
\end{align*}
where the set $A$ only contains codewords with bit value 0 on the cut, and $A+\alpha$ codewords with bit value 1. 
With this notation, we have:
\begin{align*}
E &= \bigoplus_{a \in A} 0(E_0 + a) \oplus 1(E_0 + \alpha +a )  \oplus*(E_* + a) \oplus  *(E_* + \alpha +a )\\
\partial E &=  \bigoplus_{a \in A} 0\big((\partial E_0 + a)\oplus (E_*+a) \oplus (E_*+\alpha +a ) \big) \oplus * \big((\partial E_* + a) \oplus(\partial E_* + \alpha +a ) \big)\\
& \quad \oplus 1 \big( (\partial E_0 + \alpha +a) \oplus (E_*+a) \oplus  (E_*+\alpha+a) \big)\\
&= 0 Z_0 \oplus 1 Z_1 \oplus * Z_*,
\end{align*}
with 
\begin{align*}
Z_0 &=\bigoplus_{a \in A}(\partial E_0 + a)\oplus (E_*+a) \oplus (E_*+\alpha +a ) \\
Z_1 &= \bigoplus_{a \in A}(\partial E_0 + \alpha +a) \oplus (E_*+a) \oplus  (E_*+\alpha+a) =Z_0+\alpha \\
Z_* &=\bigoplus_{a \in A}(\partial E_* + a) \oplus(\partial E_* + \alpha +a ).
\end{align*}
In particular, $Z_*$ is a boundary symmetric with respect to the shortened code obtained by forgetting the coordinate corresponding to the cut in $\cC$ and one can apply the induction hypothesis to obtain a small filling $F_* = \tilde{D}_{n-1,p-1}(Z_*)$ of size 
\begin{align*}
\|F_*\| \leq c_{n-1,p-1}^{(2)} \|Z_*\|.
\end{align*}

Let us observe that $Z_0 + F_*$ is a cycle. Indeed, 
\begin{align*}
\partial(Z_0 + F_*) &= \bigoplus_{a \in A}(\partial E_*+a) \oplus (\partial E_*+\alpha +a ) + \partial F_* = 0. 
\end{align*}
Since it is a cycle, it is a boundary in the quotient of $(n-1)$-dimensional Hamming cube code $A$. Applying the construction of Lemma \ref{thm:chain} to this boundary yields a filling $F_0$ symmetric with respect to $A$ (i.e. $F_0 + a = F_0$ for any $a\in A$) satisfying:
\begin{align*}
\partial F_0& = Z_0 \,\oplus \, \tilde{D}_{n-1,p-1}(Z_*)\\
\|F_0 \| & \leq c_{n-1,p}  \|Z_0 \,\oplus \, \tilde{D}_{n-1,p-1}(Z_*)\| \leq \frac{n-p+1}{2} \left(\| Z_0\| + c_{n-1,p-1}^{(2)} \|Z_*\|\right),
\end{align*}
where the factor $c_{n-1,p}  = \frac{n-p+1}{2}$ results from our bound on the size of a symmetric filling with respect to 
the repetition code. Note that some coordinates are likely stuck with the value 0 in the code $A$, and one might expect a better factor in that case, but we don't consider this possible improvement in the following.
Define $F_1 = F_0 +\alpha$. It is a filling of $Z_1$ since $Z_1 = Z_0+\alpha$. Moreover, the assumption on the minimum distance of $C$ implies that $\|F_0\|=\|F_1\|$.
We finally define $F = 0F_0 \oplus * F_* \oplus 1 F_1$ which satisfies $\partial F = \partial E$ by construction. 

As before, $\mathbbm{E} \|Z_0\| = \frac{n-p+1}{2n}\|Z\|$ and $\mathbbm{E} \|Z_*\| = \frac{p-1}{n} \|Z\|$ and therefore, we can take
\begin{align}\label{eq:rec2}
c_{n,p}^{(2)} & \leq 2c_{n-1,p} \left(\frac{\mathbbm{E} \| Z_0\|}{\|Z\|} + c_{n-1,p-1}^{(2)}\frac{ \mathbbm{E} \|Z_*\|}{\|Z\|} \right) +  c_{n-1,p-1}^{(2)} \frac{ \mathbbm{E} \|Z_*\|}{\|Z\|}. 
\end{align}
This is in fact the same recurrence relation as before (in Lemma \ref{thm:chain}), but with the value of $c_{n-1,p}^{(0)} = \frac{n-p}{2p}$ replaced by $c_{n-1,p}^{(1)} := c_{n-1,p}$.
We claim that 
\begin{align*}
c_{n,p}^{(2)} = \frac{(n-p+1)^2}{2} \sum_{\ell = 0}^p (n-p+2)^{\ell} \frac{(n-\ell+1)!p!}{n! (p-\ell)!}
\end{align*}
is a valid solution to this recurrence.

Upper bounding every term in the sum by the largest one corresponding to $\ell = p$, we get
\begin{align*}
c_{n,p}^{(2)}& \leq \frac{(n-p+1)^2}{2} (p+1)  (n-p+2)^{p} \frac{(n-p+1)!p!}{n!}\\
&\approx \frac{(n-p)^p}{\tbinom{n}{p}}= O(p!).
\end{align*}
\end{proof}

It is quite striking that the resulting bound on the soundness is much worse when taking the quotient by a classical code of dimension 2 rather than by the repetition code. In particular, the resulting soundness becomes only $1/\mathrm{poly}(N)$ instead of $1/\log N$. The source of this discrepancy is easily located in Eq.~\eqref{eq:rec2}, where we injected the value of the soundness for the hemicubic code instead of the soundness of the standard cube. 
In particular, if we could establish that the hemicube code had a similar soundness (or even better) than the standard cube, then the proof above would immediately imply that the generalized hemicube code has soundness $1/\mathrm{polylog}(N)$. This would provide the first example of quantum code of exponential length displaying local testability.

A similar analysis can be performed for cofillings but again, it only provides a bound for the soundness scaling inverse polynomially with $N$. 
\begin{lemma} \label{thm:cochain2}
Let $\cC = [n,2,d]$ be a linear code of dimension 2. Let $E$ be a $p$-cochain of $C_p^n=C_p(Q^n/\cC)$. Then there exists a $p$-cochain $F$ which is a cofilling of $\delta E$, satisfying $\delta F = \delta E$, such that 
\begin{align*}
\|F\| \leq c_{n,p}^{(2)'} \|\partial E\|,
\end{align*}
with 
\begin{align*}
c_{n,p}^{(2)'}= O(p!).
\end{align*}
\end{lemma}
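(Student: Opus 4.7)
The plan is to mirror the proof of Lemma \ref{thm:cochain} combined with the cut-decomposition used in Lemma \ref{thm:chain2}, and argue by induction on $(n,p)$.

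Given a $\cC$-symmetric $p$-cochain $E$, set $Z=\delta E$. I pick a cut coordinate together with a codeword $1\alpha \in \cC$ whose bit value on the cut is $1$, and let $A = \{c \in \cC : c_{\mathrm{cut}} = 0\}$ be the corresponding $1$-dimensional subcode, so that $\cC = A \cup (1\alpha + A)$. The $\cC$-symmetry of $E$ forces a decomposition $E = 0E_0 \oplus *E_* \oplus 1E_1$ with $E_1 = E_0 + \alpha$, with $E_0, E_*$ both $A$-symmetric, and with $E_*$ in fact symmetric with respect to $\cC'$, the $2$-dimensional code obtained by restricting $\cC$ to the last $n-1$ coordinates. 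Expanding $\delta E$ accordingly yields $Z = 0Z_0 \oplus *Z_* \oplus 1Z_1$ with $Z_0 = \delta E_0$, $Z_1 = Z_0 + \alpha$, and $Z_* = E_0 \oplus E_1 \oplus \delta E_*$.

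I then build $F = 0F_0 \oplus *F_* \oplus 1F_1$ in three steps. First, I apply Lemma \ref{thm:cochain}, whose proof extends verbatim from the repetition code to any $1$-dim linear code, to $E_0$ in the $A$-quotient, producing an $A$-symmetric $F_0$ with $\delta F_0 = Z_0$ and $\|F_0\| \leq c'_{n-1,p}\|Z_0\| \leq (p+1)\|Z_0\|$. Second, setting $F_1 = F_0 + \alpha$ gives $\delta F_1 = Z_1$ by translation-invariance of $\delta$. Third, $\delta F = Z$ requires $F_*$ with $\delta F_* = Z_* \oplus F_0 \oplus (F_0 + \alpha)$; writing $G = F_0 \oplus E_0$ (a cocycle in the $A$-quotient), the target rewrites as $\delta E_* \oplus G \oplus (G + \alpha)$, and I observe that $G \oplus (G + \alpha) = i_{\mathrm{cochain}}(\pi_{\mathrm{cochain}}(G))$, where $\pi$ denotes the further projection $Q^{n-1}/A \to Q^{n-1}/\cC'$. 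By the vanishing of $\pi_{\mathrm{cohom}}$ established in Section \ref{cohomology basis}, the cocycle $\pi_{\mathrm{cochain}}(G)$ is a coboundary $\delta H$ for some $(p-1)$-cochain $H$ on $Q^{n-1}/\cC'$. Hence the target equals $\delta(E_* \oplus i_{\mathrm{cochain}}(H))$, and I invoke the induction hypothesis of Lemma \ref{thm:cochain2} at parameters $(n-1,p-1,\cC')$ on the cochain $E_* \oplus i_{\mathrm{cochain}}(H)$ to obtain $F_*$ with $\|F_*\| \leq c^{(2)'}_{n-1,p-1}\bigl(\|Z_*\| + 2\|F_0\|\bigr)$.

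Combining these bounds into $\|F\| = 2\|F_0\| + \|F_*\|$ and averaging over the choice of cut coordinate (using $\mathbb{E}\|Z_0\|/\|Z\| = (n-p-1)/(2n)$ and $\mathbb{E}\|Z_*\|/\|Z\| = (p+1)/n$) yields a recurrence of the form
\[
c^{(2)'}_{n,p} \leq \frac{(n-p-1)(p+1)}{n} + \frac{(p+1)(n-p)}{n}\,c^{(2)'}_{n-1,p-1},
\]
whose multiplicative coefficient is bounded by $p+1$ at each step. Unrolling down to the base case $c^{(2)'}_{n,0} = O(1)$ delivers $c^{(2)'}_{n,p} = O(p!)$. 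The main obstacle is step three: verifying that $\pi_{\mathrm{cochain}}(G)$ genuinely lifts to a coboundary on the $\cC'$-quotient, so that the recursive target can be rewritten as the coboundary of an honest $(p-1)$-cochain and the induction hypothesis applies; the rest of the argument is routine bookkeeping that closely parallels the proof of Lemma \ref{thm:chain2}.
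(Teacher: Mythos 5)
Your proposal is correct and follows essentially the same route as the paper's proof: the same cut decomposition $\cC = A \cup (\alpha+A)$, the same three-part cofilling ($F_0$ from the one-dimensional case, $F_1 = F_0+\alpha$, and $F_*$ from the induction hypothesis applied to $Z_*\oplus F_0\oplus F_1$), and the same averaging over cuts leading to a recurrence with multiplicative factor $O(p)$ per step. Your step three — verifying via $\pi_{\mathrm{cohom}}=0$ that $Z_*\oplus F_0\oplus F_1$ really is the coboundary of a $\cC'$-symmetric cochain, so that the recursive call is legitimate — is a point the paper dispatches with ``one can check that $\delta F = \delta E$,'' so making it explicit is a genuine improvement rather than a deviation.
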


\begin{proof}[Proof of Lemma \ref{thm:cochain} for arbitrary codes]

Using the same notations as in the previous subsection, we start with an arbitrary $p$-cochain $E$ which we write
\begin{align*}
E = \bigoplus_{c \in \cC}\left( (0E_0 + c) \oplus (* E_* + c) \right),
\end{align*}
with respect to an arbitrary cut.
Choosing a codeword $\alpha$ with bit value $1$ on the cut, and denoting by $A$ the subcode of $\cC$ consisting of all the words with bit value 0 on the cut, we obtain:
\begin{align*}
E &= \bigoplus_{a \in A} 0(E_0 + a) \oplus 1(E_0 + \alpha +a )  \oplus*(E_* + a) \oplus  *(E_* + \alpha +a )\\
\delta E &= 0 Z_0 + 1Z_1 + *Z_*.
\end{align*}
We again proceed by induction. Let us denote by $D_{n,p}$ the application
promised by Lemma \ref{thm:cochain} and by $\tilde{D}_{n,p}$ the application
promised by the present lemma (yielding a symmetric cofilling), and defined by induction. The latter application preserves the symmetry of the cochain with respect to $\cC$, while this is not necessarily the case of $D_{n,p}$, which only preserves the symmetry with respect to $A$.

We define the symmetric cofilling of $\delta E$ by
\begin{align*}
F &= \tilde{D}_{n,p}(\delta E)\\
& := * \tilde{D}_{n-1,p-1} \left(D_{n-1,p}\left ( \bigoplus_{a \in A}(E_0+a) \oplus (E_0 +\alpha +a)\right)+  \sum_{a \in A}(E_*+a) \oplus (E_* +\alpha +a)\right)  \\
&\quad  \oplus 0 D_{n-1,p}\left ( \bigoplus_{a \in A}(E_0+a)\right)  + 1 D_{n-1,p}\left( \bigoplus_{a \in A}(E_0+\alpha+a)\right).
\end{align*}
One can check that $\delta F = \delta E$ and that $F$ is symmetric with respect to code $\cC$.
Bounding the size of $F$ is similar to the proof in the case of the repetition code. Indeed, recalling that $\|D_{n-1,p}(X)\| \leq (p+1) \|X\|$ for any cocycle $X$ of the hemicube, we obtain
\begin{align*}
\|F\| & \leq c'_{n-1,p-1} \left( (p+1) \|Z_0\| +(p+1) \|Z_1\| + \|Z_*\| \right) +(p+1) \|Z_0\| + (p+1)\|Z_1\|,
\end{align*}
which is identical to Eq.~\eqref{eq64}, except for the extra factors $(p+1)$. 
As before, solving the recurrence yields $c_{n,p}^{(2)'}= O(p!)$.
\end{proof}
Similarly to the case of cycles, if one could shave the $\log(N)$ factor off in the case of the hemicube and prove that it displays the same soundness as the standard cube, we would immediately obtain a $1/\mathrm{polylog}(N)$ soundness for the generalized hemicube code.

\section{Efficient decoding algorithm for adversarial errors}

\label{sec:decoding}

In this section, we explain how the small fillings and cofillings promised by the results of the previous section can be exploited to give an efficient decoding algorithm with good performance against adversarial errors. 
The main idea is to notice that one can efficiently find such fillings and cofillings and therefore find Pauli errors giving the observed syndrome. While finding the smallest possible fillings or cofillings does not appear to be easy, finding ones satisfying the bounds of Lemmas \ref{thm:chain} and \ref{thm:cochain} can be done efficiently.

We note, however, that the decoding algorithm does not seem to perform so well
against random errors of linear weight. In particular, any argument based on
percolation theory that would say that errors tend to only form small clusters
and that therefore it is sufficient to correct these errors (similarly to
\cite{FGL18} for instance) fail here because of the logarithmic weight of the
generators. Indeed, the factor graph of the code has logarithmic degree and
there does not exist a constant threshold for the error probability such that
below this threshold, errors appear in clusters of size $o(N)$. Nevertheless,
it seems that a decoding algorithm such as the small set flip algorithm of
\cite{LTZ15} performs rather well for the hemicubic code. 

For simplicity, we restrict our attention to the single-qubit code in the following.

\decoding*

The decoding complexity is quasilinear in the error size and can be done in logarithmic depth.\\

We first review the complexity of finding a small filling in the Hamming cube (without identifying antipodal faces) using the construction of Lemma \ref{lem:dott1}. 
Starting with a $(p-1)$-cycle $Z = 0 Z_0 + * Z_* +1 Z_1$, one picks a random cut and recursively defines the corresponding filling
\begin{align*}
Y = * Z_1 + 0 D_{n-1, p}(Z_0+Z_1).
\end{align*}
Exploiting Lemma \ref{lem:dott1}, one can bound the size of $Y$ as follows:
\begin{align}
\|Y\| \leq \frac{n-p}{2p} \|Z_0\| + \frac{n+p}{2p} \|Z_1\|. \label{eqn:dott-algo}
\end{align}
Choosing the cut which minimizes the right hand size can be done efficiently as
it simply amounts to computing $\|Z_0\|$ and $\|Z_*\|$ for the $n$ possible
cuts, which has complexity $n \|Z\|$. By choosing the optimal cut, one
guarantees that the filling $Y$ satisfies the bound $\|Y\| \leq
\frac{n-p+1}{2p}\|Z\|$. (It is not even needed to find the optimal cut,
since any cut such that $\frac{n-p}{2p} \|Z_0\| + \frac{n+p}{2p} \|Z_1\| \leq \frac{n-p+1}{2p}\|Z\|$ yields a filling satisfying the final bound.)
This gives an algorithm of complexity $O(n^2 \|Z\|)$. Finding a cofilling in the Hamming cube can be done similarly by exploiting Lemma \ref{lem:dott2}. 

Recall that as usual, it is sufficient to correct for Pauli errors since they
form a basis of all possible errors. Moreover, we can choose to correct
$X$-errors and $Z$-errors independently. In the case of the hemicubic code, it means that we are given two syndromes corresponding to a boundary and a coboundary, and that we should find a filling and a cofilling of these two syndromes. 
This is done by applying the algorithms of Lemmas \ref{thm:chain} and \ref{thm:cochain}. 
For instance, finding a Pauli-$X$ error giving the correct syndrome $\partial E_X = Z$ amounts to choosing a symmetric filling as follows:
\begin{align}
Y = * \tilde{D}(Z_*) + 0 D (Z_0 + \tilde{D}(Z_*) ) + 1 \overline{ D (Z_0 + \tilde{D}(Z_*) )},
\end{align}
where $D$ is the (not necessarily symmetric) filling promised by Lemma \ref{lem:dott1} and $\tilde{D}$ is the symmetric filling defined in Lemma \ref{thm:chain}.
Like before, one can bound the size of this filling:
\begin{align}
\|Y \|  \leq \frac{n-p}{p} \|Z_0\| + \frac{n}{p} c_{n-1,p-1} \|Z_*\| 
\end{align}
where
\begin{align} \label{eqn:dott-algo-symm}
c_{n,p}= \frac{(n-p+1)(n-p)}{2p} \sum_{m=n-p+1}^{n} \frac{1}{m}.
\end{align}
Again, if we are not interested in the smallest filling, but simply one satisfying the promised bound, it is possible to find it efficiently by choosing a cut minimizing the rhs of Eq.~\eqref{eqn:dott-algo-symm}. This again has complexity $O(n \|Z\|)$ at a given level. 

Overall, finding a small symmetric filling has complexity $O(n^4 \|Z\|)$, where we recall that $n$ is logarithmic in the length of the quantum code. 
Correcting for $Z$ errors is done similarly using the algorithm for cofillings (Lemma \ref{thm:cochain}) instead.

Let us now show that this algorithm recovers the correct error (up to a stabilizer element), and therefore that decoding succeeds.
Let $Y$ be the support of a Pauli-$X$ error and denote by $Z = \partial Y$ its syndrome. Note that $\|Z\| \leq 2p+1 \|Y\|$ since the generators have weight $2p$.  The algorithm described above yields a chain $Y'$ such that $\partial Y' = Y = \partial Y$ and of size $\|Y'\| \leq c_{n,p} \|Z\| \leq 2p c_{n,p} \|Y\|$
Observe now that the following inequalities hold:
\begin{align*}
\|Y + Y'\| & \leq \|Y\| + \|Y'\| \leq (1+ 2p c_{n,p}) \|Y\| \leq (1+2p(n-p)) \|Y\|.
\end{align*}
In particular, as long as $(1+2p(n-p)) \|Y\| < \tbinom{n}{p}$, the cycle $Y+Y'$ cannot yield a logical error and the decoding was successful.
Similarly, if $Y$ is the support of a Pauli-$Z$ error, the same reasoning shows that the decoding is successful as long as $(1+2(n-p) c'_{n,p}) \|Y\| < 2^{n-p-1}$. 

Combining both conditions, we obtain that the decoding is successful for any error of weight less than $\frac{d_{\min}}{2p(n-p)+1}$.


\end{document}